\newenvironment{proof}{\noindent {\bf Proof:}}{\hfill$\Box$}
\newtheorem{theorem}{Theorem}[section]
\newtheorem{lemma}[theorem]{Lemma}
\newtheorem{recurrence}[theorem]{Recurrence}
\newtheorem{definition}[theorem]{Definition}
\newtheorem{remark}[theorem]{Remark}
\newcommand{\ignore}[1]{}
\newcommand{\hcm}[1][1]{\hspace*{#1 cm}}
\newcommand{\rb}[2]{\raisebox{#1 mm}[0mm][0mm]{#2}}
\newcommand{\istrut}[2][0]{\rule[- #1 mm]{0mm}{#1 mm}\rule{0mm}{#2 mm}}
\newcommand{\zero}[1]{\makebox[0mm][l]{$#1$}}
\newcommand{\paren}[1]{{\left( #1 \right)}}
\newcommand{\angbrack}[1]{\left< #1 \right>}
\newcommand{\Angbrack}[1]{\Big< #1 \Big>}
\newcommand{\Bigcurly}[1]{\Big\{ #1 \Big\}}
\newcommand{\sqbrack}[1]{\left[ #1 \right]}
\newcommand{\SqBrack}[1]{\Big[ #1 \Big]}
\newcommand{\Paren}[1]{\Big( #1 \Big)}
\newcommand{\ceil}[1]{\lceil #1 \rceil}
\newcommand{\floor}[1]{\lfloor #1 \rfloor}
\newcommand{\f}[2]{\frac{#1}{#2}}
\newcommand{\fr}[2]{\mbox{$\frac{#1}{#2}$}}
\newcommand{\poly}{\operatorname{poly}}
\newcommand{\Furedi}{F\"{u}redi}
\newcommand{\Szemeredi}{Szemer\'{e}di}
\newcommand{\Ex}{\operatorname{Ex}}
\newcommand{\DS}[1]{\lambda_{#1}}
\newcommand{\dblDS}[1]{\DS{{#1}}^{\scriptscriptstyle\inflate}}
\newcommand{\Feather}[1]{\Phi_{#1}}
\newcommand{\FFeather}{\Phi'}
\newcommand{\DblFeather}{\Phi''}
\newcommand*\xbar[1]{%
  \hbox{%
    \vbox{%
      \hrule height 0.5pt 
      \kern0.3ex
      \hbox{%
        \kern-0.1em
        \ensuremath{#1}%
        \kern-0.1em
      }%
    }%
  }%
}
\newcommand{\compose}{\operatorname{\circ}}
\newcommand{\shuffle}{\operatorname{\diamond}}
\newcommand{\subseq}{\prec}
\newcommand{\nsubseq}{\nprec}
\newcommand{\subseqe}{\;\xbar{\subseq}\;}
\newcommand{\inflate}{\operatorname{dbl}}
\newcommand{\GSigma}{\hat{\Sigma}}
\newcommand{\LSigma}{\check{\Sigma}}
\newcommand{\Gm}{\hat{m}}
\newcommand{\Lm}{m}
\newcommand{\Gn}{\hat{n}}
\newcommand{\Ln}{\check{n}}
\newcommand{\GfSigma}{\acute{\Sigma}}
\newcommand{\GlSigma}{\grave{\Sigma}}
\newcommand{\GmSigma}{\bar{\Sigma}}
\newcommand{\Gfn}{\acute{n}}
\newcommand{\Gln}{\grave{n}}
\newcommand{\Gmn}{\bar{n}}
\newcommand{\GS}{\hat{S}}
\newcommand{\GfS}{\acute{S}}
\newcommand{\GlS}{\grave{S}}
\newcommand{\GmS}{\bar{S}}
\newcommand{\GfeatherS}{\tilde{S}}
\newcommand{\GlnonfeatherS}{\dot{S}}
\newcommand{\GrnonfeatherS}{\ddot{S}}
\newcommand{\GnonfeatherS}{\breve{S}}
\newcommand{\LS}{\check{S}}
\newcommand{\Tree}{\mathcal{T}}
\newcommand{\GTree}{\hat{\Tree}}
\newcommand{\LTree}{\check{\Tree}}
\newcommand{\FirstTree}{\acute{\mathcal{T}}}
\newcommand{\LastTree}{\grave{\mathcal{T}}}
\newcommand{\Ensemble}{\mathcal{E}}
\newcommand{\block}{\mathcal{B}}
\newcommand{\head}{\operatorname{he}}
\newcommand{\otherhead}{\overline{\head}}
\newcommand{\lefthead}{\operatorname{lhe}}
\newcommand{\righthead}{\operatorname{rhe}}
\newcommand{\crown}{\operatorname{cr}}
\newcommand{\quill}{\operatorname{qu}}
\newcommand{\wing}{\operatorname{wi}}
\newcommand{\tip}{\operatorname{wt}}
\newcommand{\othertip}{\overline{\tip}}
\newcommand{\lefttip}{\operatorname{lwt}}
\newcommand{\righttip}{\operatorname{rwt}}
\newcommand{\feather}{\operatorname{fe}}
\newcommand{\M}[2]{\mu_{#1,#2}}
\newcommand{\N}[2]{\nu_{#1,#2}}
\newcommand{\bl}[1]{\llbracket #1 \rrbracket}
\newcommand{\Sbot}{S_{\operatorname{bot}}}
\newcommand{\Stop}{S_{\operatorname{top}}}
\newcommand{\Smid}{S_{\operatorname{mid}}}
\newcommand{\Ssub}{S_{\operatorname{sub}}}
\newcommand{\Ssh}{S_{\operatorname{sh}}}
\newcommand{\Perm}[1]{\operatorname{Perm}(#1)}
\newcommand{\PERM}[1]{\Lambda_{#1}}
\title{Sharp Bounds on Davenport-Schinzel Sequences of Every Order\thanks{This work is supported by NSF CAREER grant no. CCF-0746673,
NSF grant no. CCF-1217338,
and a grant from the US-Israel Binational Science Foundation.}}
\author{Seth Pettie\\ University of Michigan}
\begin{document}

\maketitle

\begin{abstract}
One of the longest-standing open problems in computational geometry 
is to bound the lower envelope of $n$ univariate functions, each pair of which
crosses at most $s$ times, for some fixed $s$.  This problem is known to be equivalent
to bounding the length of an order-$s$ Davenport-Schinzel sequence, namely
a sequence over an $n$-letter alphabet that avoids alternating subsequences of the form 
$a \cdots b \cdots a \cdots b \cdots$ with length $s+2$.
These sequences were introduced by Davenport and Schinzel in 1965
to model a certain problem in differential equations and have since 
been applied to bounding the running times of geometric algorithms, data structures,
and the combinatorial complexity of geometric arrangements.

Let $\DS{s}(n)$ be the maximum length of an order-$s$ DS sequence over $n$ letters.
What is $\DS{s}$ asymptotically?
This question has been answered satisfactorily (by Hart and Sharir, Agarwal, Sharir, and Shor, Klazar, and Nivasch)
when $s$ is even or $s\le 3$.
However, since the work of Agarwal, Sharir, and Shor in the mid-1980s there
has been a persistent gap in our understanding of the odd orders.

In this work we effectively close the problem by establishing 
sharp bounds on Davenport-Schinzel sequences of {\em every} order $s$.
Our results reveal that, contrary to one's intuition, $\DS{s}(n)$ behaves essentially like $\DS{s-1}(n)$ when $s$ is odd.
This refutes conjectures due to Alon et al. (JACM, 2008) and Nivasch (JACM, 2010).

\ignore{
Together with previous results of Hart--Sharir, Agarwal--Sharir--Shor, Klazar, and Nivasch,
we prove the extremal function of order-$s$ sequences is:
\[
\DS{s}(n) = 
\left\{
\begin{array}{l@{\hcm}l@{\istrut[3]{0}}}
n								& s=1\\
2n-1								& s=2\\
(2\pm o(1))n\alpha(n)					& s=3\\
\Theta(n2^{\alpha(n)})				& s=4\\
\Omega(n\alpha(n)2^{\alpha(n)})	 \mbox{ and } O(n\alpha^2(n)2^{\alpha(n)})	& s=5\\
n\cdot 2^{(1\pm o(1))\alpha^t(n)/t!}			& s\ge 6, \; t = \floor{\frac{s-2}{2}}
\end{array}
\right.
\]
}
\end{abstract}

\section{Introduction}

Consider the  problem of bounding the complexity of the lower envelope of $n$ continuous univariate functions $f_1,\ldots,f_n$,
each pair of which cross at most $s$ times.
In other words, how many maximal connected intervals of the $\{f_i\}$ are contained in the graph of the function $f_{\min}(x) = \min\{f_1(x),\ldots,f_n(x)\}$?
In the absence of any constraints on $\{f_i\}$ this problem can be completely stripped of its geometry by {\em transcribing} the lower envelope 
$f_{\min}$ as a {\em Davenport-Schinzel (DS) sequence} of order $s$, namely, a repetition-free sequence over the alphabet $\{1,\ldots,n\}$
that does not contain any alternating subsequences of the form $\cdots a\cdots b \cdots a \cdots b \cdots$ with length 
$s+2$, for any $a,b\in\{1,\ldots,n\}$.\footnote{If the sequence corresponding to the lower envelope contained
an alternating subsequence $abab\cdots$ with length $s+2$ 
then the functions $f_a$ and $f_b$ must have crossed at least $s+1$ times, a contradiction.}
Although Davenport and Schinzel~\cite{DS65} introduced this problem nearly 50 years ago,
it was only in the early 1980s that DS sequences became well known in the computational geometry community~\cite{Atallah85,SharirCKLPS86}.
Since then DS sequences have found a startling
number of geometric applications, with a growing
number~\cite{SalvoP07,LopezPR12,AlstrupLST97,BurkardD01,K92,Pettie-Deque-08} 
that are not overtly geometric.\footnote{To cite a
fraction of the literature, DS sequences/lower envelopes are routinely applied to 
problems related to geometric arrangements
\cite{MatousekPSSW94,AronovD11,AronovB12,EdelsbrunnerPSS87,Pettie-SoCG11,deBerg10,DumitrescuST09,BernEPY91,HuttenlocherK90,HuttenlocherKS93,Koltun04,Har-Peled00,Efrat05,KoltunS03, Tagansky96,AgarwalS02,JaromczykK89},
in kinetic data structures and dynamic geometric algorithms
\cite{AlbersGMR98,GuibasMR92,KaplanRS11,AbamBG10,AgarwalKS08,Atallah85,HuttenlocherKK92,WahidKH10},
in visibility 
\cite{ColeS1989,SharirCKLPS86,MoetKvK08}, 
motion planning \cite{SharirCKLPS86,LevenS87},
and geometric containment problems~\cite{AonumaIIT90,SharirCKLPS86,SharirT94,AugustinePR10},
as well as variations on classical problems such as 
computing shortest paths~\cite{BaeO12,AronovBBJJKLLSS11,BaltsanS88} 
and convex hulls~\cite{EzraM12,AurenhammerJ12}.
They have also been used in some industrial applications~\cite{IlushinEHWK05, BerrettyGOvdS01}.
Refer to Sharir and Agarwal~\cite{AgarwalSharir95} for a 
survey of DS sequences and their early applications in computational geometry
and to Klazar~\cite{Klazar02} for a survey of DS sequences and related problems in extremal combinatorics.
}
In each of these applications some quantity (e.g., running time, combinatorial complexity)
is expressed in terms of $\DS{s}(n)$, the maximum length of an order-$s$ DS sequence over an $n$-letter alphabet.
To improve bounds on $\DS{s}$ is, therefore, to improve
our understanding of numerous problems in algorithms, data structures, and discrete geometry.\\

Davenport and Schinzel~\cite{DS65} established $n^{1+o(1)}$ upper bounds on $\DS{s}(n)$ for 
every order $s$.  In order to properly survey the improvements
that followed~\cite{Davenport70,Szemeredi73,HS86,Sharir87,Sharir88,Komjath88,ASS89,Klazar99,Nivasch10}
we must define some notation for forbidden sequences and their extremal functions.

\ignore{
VISIBILITY
ColeS1989 -- visibility in polyhedral domains
SharirCKLPS86 -- (appl of DS) visibility, motion-planning, polygonal placement
MoetKvK08 -- visibility map triangles in 3-space.  n2alpha

MOTION PLANNING
SharirCKLPS86 -- (appl of DS) visibility, motion-planning, polygonal placement
LevenS87 - critical contacts, convex polygonal object moving in polygonal environ.  related to motion planning.

KINETIC DATA STRUCTS/DYNAMIC GEOMETRIC ALGORITHMS
AlbersGMR98 -- voronoi diagram of moving points with s intersections.
GuibasMR92 -- same as above.  
KaplanRS11 -- A kinetic triangulation scheme for moving points in the plane
AbamBG10 -- kinetic geometric spanner
AgarwalKS08 -- kinetic data struct for closest pair, all nearest neighbors
Atallah85 - dynamic convex hull
HuttenlocherKK92 - voronoi diagram of k sets of rigidly moving points.
WahidKH10 -- fixed set of pts in plane, moving point, keep track of circular order of pts.

ARRANGEMENTS AND RELATED PROBLEMS
MatousekPSSW94 - determine linearly many holes
AronovD11 -- complexity of a single face - s-intersecting curves.
AronovB12 -- union of fat polytope have short skeletons (geometric arrangements)
EdelsbrunnerPSS87 - lower envelope of 2- simple functions n^2 alpha(n)
Pettie-SoCG11 
deBerg10 -- better bounds union complexity of fat objects
DumitrescuST09 -- number of distinct triangle areas in 2d, 3d.  (Geometric incidences, in general.)
BernEPY91 -- nalpha(n) complexity of "zone" -- used in Ezra Mulzer12  -- (zone of convex curve in an arrangement of lines is nalpha(n)
HuttenlocherK90,HuttenlocherKS93 -- upper env voronoi, hausdorff distance generalization
Koltun04 -- almost tight upper bound on vert decomp in 4D
Har-Peled00 -- taking a walk in an arrangement of arcs with bounded crossing number
Efrat05  -- complexity of union of alpha-beta-covered objections (generalization of fat)
KoltunS03 --- Voronoi diag fixed number orientations.  3 orientations --> DS5.
Tagansky96 -- lower envelope, other substructures, piecewise linear surfaces.  n2 alpha(n) type bounds
AgarwalS02 -- number of triangles in 3d congruent to a fixed triangle -- DS6.

GEOMETRIC CONTAINMENT PROBLEMS
AonumaIIT90 -- placing one polygonal object in another to minimize distance . DS{16}.
SharirCKLPS86 -- (appl of DS) visibility, motion-planning, polygonal placement
SharirT94 - polygon containment problem
AugustinePR10 -- computing the maximum empty circle centered on a line -- visa DS5

VARIOUS GEOMETRIC SHORTEST PATH PROBLEMS
BaeO12 - data struct for s.p. in polygonal domain.  DS64 !!
AronovBBJJKLLSS11 - shortest paths in road networks, query pts not necessarily on the road.
BaltsanS88 -- shortest paths in 3d amid two convex polyhedra

INDUSTRIAL APPLICATIONS
IlushinEHWK05 -- collision detection, machines in some industrial setting
BerrettyGOvdS01 -- vibratory bowl feeder (orient polyhedron in certain orientation).  industrial application.

FACILITY LOCATION PROBLEMS
LopezPR12 -- single-facility location on a network with kinetic vertex costs.
AlstrupLST97 -- finding cores in trees (sort of like centers).
BurkardD01 - variations on 1-center problem in trees

SalvoP07,LopezPR12,AlstrupLST97, BurkardD01, K90,Pettie-Deque-08

SWAP EDGES
SalvoP07 - swapping edge to minimize average stretch factor in shortest path tree

TOTALLY MONOTONE
K90  Klawe via lower bound on line segments

Sekigawa08 -- find the closest polynomial to a geometric object with a 0 in that object.
AurenhammerJ12 -- convex hull circular arcs via DS3 and DS4.
JaromczykK89 -- stabbing convex bodies in 3d with a single line.
TrajcevskiWHC04 -- uncertainty in moving object databases
EzraM12 -- computing convex hull in o(nlog n) time after n^2 preprocessing.

AronovdBES11 - union of locally fat objects -- gets rid of Pettie's alpha(n)

CabelloDLSV05 -- reverse facility location -- closest fac to customers. 
}

\subsection{Sequence Notation and Terminology}\label{sect:notation-and-terminology}

Let $|\sigma|$ be the length of a sequence $\sigma = (\sigma(i))_{1\le i\le |\sigma|}$ and let $\|\sigma\|$ be the size
of its alphabet $\Sigma(\sigma) = \{\sigma(i)\}$.
Two equal length sequences are {\em isomorphic} if they are the same up to a renaming of their alphabets.
We say $\sigma$ is a {\em subsequence} of $\sigma'$, written $\sigma\subseqe \sigma'$, 
if $\sigma$ can be obtained by deleting symbols from $\sigma'$.
The predicate $\sigma \subseq \sigma'$ asserts that $\sigma$ is isomorphic to a subsequence of $\sigma'$.
If $\sigma\nsubseq\sigma'$ we say $\sigma'$ is {\em $\sigma$-free}.
If $P$ is a {\em set} of sequences, $P\nsubseq \sigma'$ holds if $\sigma\nsubseq \sigma'$ for {\em every} $\sigma\in P$.
The assertion that $\sigma$ {\em appears in} or {\em occurs in} or {\em is contained in} $\sigma'$ 
means either $\sigma\subseq \sigma'$ or $\sigma\subseqe \sigma'$, which one being clear from context.
The {\em projection} of a sequence $\sigma$ onto $G\subseteq \Sigma(\sigma)$ is obtained by deleting
all non-$G$ symbols from $\sigma$.
A sequence $\sigma$ is {\em $k$-sparse} if whenever $\sigma(i) = \sigma(j)$ and $i\neq j$, then $|i-j| \ge k$.
A {\em block} is a sequence of distinct symbols.  If $\sigma$
is understood to be partitioned into a sequence of blocks, $\bl{\sigma}$ is the number of blocks.
The predicate 
$\bl{\sigma}=m$ asserts that $\sigma$ can be partitioned into at most $m$ blocks.
The extremal functions for {\em generalized} Davenport-Schinzel sequences are defined to be
\begin{align*}
\Ex(\sigma,n,m) &= \max\{|S| \;:\; \sigma\nsubseq S, \; \|S\|=n, \mbox{ and } \bl{S}\le m\}\\
\Ex(\sigma,n)	&= \max\{|S| \;:\;  \sigma\nsubseq S, \; \|S\|=n, \mbox{ and } S \mbox{ is } \|\sigma\|\mbox{-sparse}\}\\
\intertext{where $\sigma$ may be a single sequence or a set of sequences.  
The conditions ``$\bl{S}\le m$'' and ``$S$ is $\|\sigma\|$-sparse'' guarantee that the extremal functions are finite.
For example, if $\|\sigma\|=2$ the sparsity criterion forbids immediate repetitions and 
such infinite degenerate sequences as $aaaaa\cdots$.
Blocked sequences, on the other hand, have no sparsity criterion.
The extremal functions for (standard) Davenport-Schinzel sequences are defined to be}
\DS{s}(n,m) &= \Ex(\overbrace{ababa\cdots}^{\mbox{\scriptsize length $s+2$}},n,m)
\hcm[.4]
\mbox{ and }
\hcm[.4]
\DS{s}(n) = \Ex(\overbrace{ababa\cdots}^{\mbox{\scriptsize length $s+2$}},n)
\end{align*}

Bounds on generalized Davenport-Schinzel sequences are expressed as a function of the
inverse-Ackermann function $\alpha$, yet there is no universally
agreed-upon definition of Ackermann's function or its inverse.  All definitions in the literature
differ by at most a constant, which usually obviates the need for more specificity.
In this article we use the following definition of Ackermann's function.
\begin{align*}
a_{1,j} &\;\zero{= 2^j}		& j\ge 1\\
a_{i,1} &\;\zero{= 2}		& i\ge 2\\	
a_{i,j}  &\;\zero{= w\cdot a_{i-1,w}}		& i,j\ge 2\\
& \hcm[.5]\mbox{where $w=a_{i,j-1}$}
\intertext{Note that in the table of $\{a_{i,j}\}$ values, the first column is constant
($a_{i,1} = 2$) and the second merely exponential ($a_{i,2} = 2^i$), so we have 
to look to the third column to find Ackermann-type growth.  We define the double- and single-argument versions of the inverse-Ackermann function to be}
\alpha(n,m) &= \zero{\min\{i \;|\; a_{i,j} \ge m,\, \mbox{ where } j = \max\{\ceil{n/m},3\}\}}\\
\alpha(n) &= \alpha(n,n)
\end{align*}
We could have defined $\alpha(n,m)$ without direct reference to Ackermann's function.
Note that $j = \log(a_{1,j})$.  One may convince oneself that 
$j = \log^\star(a_{2,j}) - O(1)$, $j = \log^{\star\star}(a_{3,j})-O(1)$, and in general,
that $j = \log^{[i-1]}(a_{i,j}) - O(1)$, 
where $[i-1]$ is short for $i-1$ $\star$s.\footnote{If $f : \mathbb{N}\backslash\{0\}\rightarrow \mathbb{N}$
is a decreasing function, $f^\star(m)$ is, by definition, $\min\{\ell \;|\; f^{(\ell)}(m) \le 1\}$, where
$f^{(0)}(m)=m$ and $f^{(\ell)}(m) = f(f^{(\ell-1)}(m))$.}
Thus, up to $O(1)$ differences $\alpha(n,m)$ could be defined as 
$\min\left\{i \;\left|\; \log^{[i-1]}(m) \leq \max\{\ceil{n/m},3\}\right.\right\}$.
We state previous results in terms of the single argument version of $\alpha$.  However, they all generalize
to the two-argument version by replacing $\DS{s}(n)$ with $\DS{s}(n,m)$ and $\alpha(n)$ with $\alpha(n,m)$.

\subsection{A Brief History of $\DS{s}$}

After introducing the problem in 1965, Davenport and Schinzel~\cite{DS65} proved that $\DS{1}(n) = n, \DS{2}(n)=2n-1,
\DS{3}(n) = O(n\log n)$, and for all $s\ge 4$, that $\DS{s}(n) = n\cdot 2^{O(\sqrt{\log n})}$, where the leading constant in the exponent depends on $s$.
Shortly thereafter Davenport~\cite{Davenport70} improved the bound on $\DS{3}(n)$ to $O(n\log n/\log\log n)$.
In 1973 \Szemeredi{}~\cite{Szemeredi73} dramatically improved the upper bounds for all $s\ge 3$, showing that
$\DS{s}(n) = O(n\log^\star n)$, where the leading constant depends on $s$.

From a purely numerical perspective \Szemeredi's bound settled the problem for all values of $n$ one might encounter in nature
(the log-star function being at most 5 for $n$ less than $10^{19,000}$),
so why should any thoughtful mathematician continue to work on the problem? 
In our view, the problem of quantitatively estimating  $\DS{s}(n)$ 
has always been a proxy for several qualitative questions: is $\DS{s}(n)$ linear or nonlinear?
what is the structure of extremal sequences realizing $\DS{s}(n)$?
and does it even matter what $s$ is?
In 1984 Hart and Sharir~\cite{HS86} answered the first two questions for order-3 DS sequences.  They gave a bijection
between order-$3$ ($ababa$-free) DS sequences and so-called generalized 
postorder path compression schemes.  Although these schemes resembled the path compressions
found in set-union data structures, Tarjan's analysis~\cite{Tar75} did not imply any non-trivial upper or lower bounds on their length.
Hart and Sharir proved that such path compression schemes have length $\Theta(n\alpha(n))$,
thereby settling the asymptotics of $\DS{3}(n)$.  

In 1989 Agarwal, Sharir, and Shor~\cite{ASS89} (improving on~\cite{Sharir87,Sharir88})
gave asymptotically tight bounds on order-$4$ DS sequences
and reasonably tight bounds on  higher order sequences.

\begin{align*}
\DS{4}(n) &= \Theta(n\cdot 2^{\alpha(n)})\\
\istrut{7}
\DS{s}(n) & \; \left\{
\begin{array}{l}
> n\cdot \zero{2^{(1+o(1))\alpha^t(n)\,/\, t!}}		\\
< n\cdot \zero{2^{(1+o(1))\alpha^t(n)}	}		\istrut{6}\\
\end{array}
\right. 
& \rb{0}{\mbox{for even $s \ge 6, t=\floor{\f{s-2}{2}}$.}}\\
\istrut{7.5}\DS{s}(n) & \; \left\{
\begin{array}{l}
> \DS{s-1}(n)\\
< n\cdot \zero{(\alpha(n))^{(1+o(1))\alpha^t(n)}}	\istrut{6}\\
\end{array}
\right. 
& \rb{0}{\mbox{for odd $s \ge 5$, $t=\floor{\f{s-2}{2}}$.}}
\end{align*}

For even $s$ their bounds were tight up to the constant in the exponent: $1$ for the upper bound
and $1/t!$ for the lower bound.  Moreover, their lower bound construction gave
a qualitatively satisfying answer to the question of how extremal sequences are structured when $s$ is even.
For odd $s$ the gap between upper and lower bounds was wider, the base of the exponent being 2
at the lower bound and $\alpha(n)$ at the upper bound.

\begin{remark}\label{remark:Ackermann-invariance}
The results of Agarwal, Sharir, and Shor~\cite{ASS89} force us to confront another question, namely,
when is it safe to declare victory and call the problem closed?  As Nivasch~\cite[\S8]{Nivasch10}
observed, the ``$+o(1)$'' in the exponent {\em necessarily} hides a $\pm\, \Omega(\alpha^{t-1}(n))$ term
if we express the bound in an ``{\em Ackermann-invariant}'' fashion, that is,
in terms of the generic $\alpha(n)$, without specifying the precise
variant of Ackermann's function for which it is the inverse.
Furthermore, under any of the definitions in the literature
$\alpha(n)$ is an integer-valued function whereas $\DS{s}(n)/n$ must increase fairly smoothly with $n$,
that is, an estimate of $\DS{s}(n)$ that is expressed as a function of 
\underline{any} integer-valued $\alpha(n)$
must be off by at least a $2^{\Omega(\alpha^{t-1}(n))}$ factor.
A reasonable definition of {\em sharp bound} (when dealing with generalized Davenport-Schinzel sequences)
is an expression that cannot be improved, given $\pm\, \Theta(1)$ uncertainty in the definition of 
$\alpha(n)$. For example, $\DS{4}(n) = \Theta(n2^{\alpha(n)})$ is sharp in this sense
since the constant hidden by $\Theta$ reflects this uncertainty.
In contrast, $\DS{3}(n) = \Theta(n\alpha(n))$ is not sharp in an Ackermann-invariant sense.
See the tighter bounds on $\DS{3}(n)$ cited below and in Theorem~\ref{thm:main}.
\end{remark}

In 2009
Nivasch~\cite{Nivasch10} 
presented a superior method for upper bounding $\DS{s}(n)$.
In addition, he provided a new construction of order-3 DS sequences that
matched an earlier upper bound of Klazar~\cite{Klazar99} up to the leading constant.

\begin{align}
\DS{s}(n) =& \,\left\{
\begin{array}{l@{\hcm[1.65]}l}
2n\alpha(n)+O(n\sqrt{\alpha(n)})		& \mbox{for $s=3$; upper bound is due to~\cite{Klazar99}.}\\
\Theta(n\cdot 2^{\alpha(n)})			& \istrut[3]{5}\mbox{for $s=4$.}\\
n\cdot 2^{(1 + o(1))\alpha^t(n)\,/\, t!}	& \mbox{for even $s \ge 6,$ $t=\floor{\f{s-2}{2}}$.}
\end{array}
\right.\nonumber\\
\istrut{9}
\DS{s}(n) & \; \left\{
\begin{array}{l@{\hcm[1]}l}
> \DS{s-1}(n)\\
< n\cdot (\alpha(n))^{(1+o(1))\alpha^t(n)\,/\, t!}	\istrut{6} & \rb{2}{\mbox{for odd $s \ge 5$, $t=\floor{\f{s-2}{2}}$.}}\\
\end{array}
\right. 
\tag{Niv}
\label{eqn:Nivasch}
\end{align}

This closed the problem for even $s\ge 6$ (the leading constant in the exponent being precisely $1/t!$)
but left the odd case open.
Alon, Kaplan, Nivasch, Sharir, and Smorodinsky~\cite{AlonKNSS08-J,Nivasch10} conjectured
that the upper bounds (\ref{eqn:Nivasch}) for odd orders are tight, that is, the base of the exponent is, in fact, $\alpha(n)$.
This conjecture was spurred by their discovery of similar functions that arose in an apparently unrelated combinatorial 
problem, {\em stabbing interval chains with $j$-tuples}~\cite{AlonKNSS08-J}.  

\subsection{New Results}

We provide new upper and lower bounds on the length of Davenport-Schinzel sequences
and in the process refute conjectures due to
Alon et al.~\cite[\S 5]{AlonKNSS08-J}, Nivasch~\cite[\S 8]{Nivasch10}, and Pettie~\cite[\S 7]{Pettie-GenDS11}.

\begin{theorem}\label{thm:main}
Let $\DS{s}(n)$ be the maximum length of a repetition-free sequence over an $n$-letter alphabet
avoiding subsequences isomorphic to $abab\cdots$ (length $s+2$), or, equivalently, the maximum
complexity of the lower envelope of $n$ continuous univariate functions, each pair of which coincide at most $s$ times.
For any $s\ge 1$, $\DS{s}$ satisfies:
\[
\DS{s}(n) = \left\{
\begin{array}{l@{\hcm}l@{\istrut[3]{0}}}
n								& s=1\\
2n-1								& s=2\\
2n\alpha(n) + O(n)					& s=3\\
\Theta(n2^{\alpha(n)})				& s=4\\
\Theta(n\alpha(n)2^{\alpha(n)})			& s=5\\
n\cdot 2^{(1 + o(1))\alpha^t(n)/t!}			& \mbox{both even and odd } s\ge 6, \; t = \floor{\frac{s-2}{2}}.
\end{array}
\right.
\]
\end{theorem}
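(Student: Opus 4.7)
Several cases of the theorem follow immediately from prior results: $s=1,2$ are in \cite{DS65}; the $s=3$ upper bound is from \cite{Klazar99}, matched by the lower bound in \cite{Nivasch10}; and the cases $s=4$ and even $s\ge 6$ are settled in \cite{ASS89,Nivasch10}. The new content is therefore (A) the sharp bound $\DS{5}(n) = \Theta(n\alpha(n)2^{\alpha(n)})$ and (B) the upper bound $\DS{s}(n) \le n\cdot 2^{(1+o(1))\alpha^t(n)/t!}$ for odd $s\ge 7$. The matching lower bound for (B) is free: by monotonicity $\DS{s}(n)\ge\DS{s-1}(n)$, and since $\lfloor(s-2)/2\rfloor=\lfloor(s-3)/2\rfloor$ when $s$ is odd, the known lower bound for even $s-1$ carries over with the same exponent and the same leading constant $1/t!$.

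My plan for the odd-order upper bound is to refine the block-decomposition recurrence of Agarwal-Sharir-Shor and Nivasch. In the standard approach, a blocked $\overbrace{abab\cdots}^{s+2}$-free sequence on $m$ blocks and $n$ symbols is partitioned into $m/b$ groups of $b$ consecutive blocks; symbols are classified by the groups in which they occur; and the resulting recurrence couples the order-$s$ problem on a smaller alphabet with the order-$(s-2)$ problem on a projection. Applied naively this recurrence yields only the weaker bound $n\cdot\alpha(n)^{(1+o(1))\alpha^t(n)/t!}$ of \cite{Nivasch10}. The new ingredient must exploit the parity asymmetry of the forbidden pattern: when $s$ is odd, $a(ba)^{(s+1)/2}$ contains $(s+3)/2$ copies of $a$ but only $(s+1)/2$ of $b$, so the two roles are no longer interchangeable. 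The plan is to identify a finer substructure, likely the \emph{feather}-like objects whose morphology (head, tip, wing, quill, crown) is catalogued in the preamble, and to show that any excess length beyond the order-$(s-1)$ bound must concentrate on such substructures, which can then be charged back to an order-$(s-1)$ subproblem. This should yield the even-order bound plus negligible overhead when $s\ge 7$, and an extra factor of $\alpha(n)$ at the boundary case $s=5$.

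For the new $\Omega(n\alpha(n)2^{\alpha(n)})$ lower bound at $s=5$, my plan is an explicit construction that composes the two known extremal sequences for orders $3$ and $4$: take an extremal $\DS{4}$-sequence of length $\Theta(n\cdot 2^{\alpha(n)})$ and inflate each symbol into a Hart-Sharir-like $\DS{3}$ gadget of length $\Theta(\alpha(n))$, then verify that no $\overbrace{abab\cdots}^{7}$ arises in the composite. The main obstacle, and the genuine novelty of the paper, is the odd-order upper bound: the conjectures of \cite{AlonKNSS08-J,Nivasch10} predict that each increment of $s$ multiplies the bound by roughly $\alpha(n)$, yielding a base of $\alpha(n)$ at odd orders. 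Refuting this and keeping the base of the exponent at $2$ requires the structural insight about feathers alluded to above, which appears to be the technical heart of the paper.
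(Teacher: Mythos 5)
Your identification of what is new versus inherited (the free lower bound for odd $s$ via monotonicity and the observation that $t$ agrees between $s$ and $s-1$ for odd $s$; the genuinely new content at $s=5$ and the odd-order upper bound) is accurate, and your instinct that the upper bound must exploit feather-like substructures charged to lower-order subproblems is the right one and matches the paper's strategy at a high level. However, two concrete gaps remain.

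First, the plan as sketched would \emph{not} deliver the sharp $s=5$ bound. The feather argument you describe --- partition global middle occurrences into feathers (charged to an order-$(s-1)$ quantity via a separate feather-counting recurrence) and non-feathers (which are pairwise nested in their block, so behave like order $s-3$) --- is exactly what the paper does in Recurrences~\ref{lem:feather-rec} and \ref{thm:recurrence-odd}, but this yields only $\DS{5}(n) = O(n\alpha^2(n)2^{\alpha(n)})$, as the paper explicitly notes in Section~\ref{sect:proofthmmain}. You claim the method gives ``an extra factor of $\alpha(n)$'' at $s=5$, but one $\alpha$ factor comes from $\mu_{3,i} = \Theta(i)$ inside $\DS{3}$ and a second comes from the feather count $\nu_{s,i}$ itself growing like $i$. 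Killing the second factor is the technical heart of the $s=5$ upper bound and requires an additional idea: superimposed derivation trees $\FirstTree[u],\LastTree[u]$ for every node $u$, a notion of \emph{double-feather} defined with respect to this ensemble (Definition~\ref{def:double-feather}), and a refined nesting criterion (Lemma~\ref{lem:order5-nested}) that works relative to the ensemble. Without this second level of structure your approach stalls at $\alpha^2$.

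Second, the $s=5$ lower bound sketch (``inflate each symbol of a $\DS{4}$ sequence into a Hart--Sharir $\DS{3}$ gadget'') is too coarse to produce the claimed bound, and it is unclear what ``inflate each symbol'' means in a way that preserves both the alphabet bookkeeping and freedom from $abababa$. A direct substitution of a $\DS{3}$ gadget of length $\Theta(\alpha)$ for each occurrence of a symbol changes the alphabet per occurrence (losing the needed multiplicity $\Theta(\alpha\cdot 2^{\alpha})$ per symbol) or, if the gadget alphabet is reused across occurrences, risks creating long alternations. The paper's construction is qualitatively different: it uses a recursive family $S_5(i,j)$ defined via a \emph{three-fold composition} $\Stop\compose\langle \Smid^{\operatorname{f}},\Smid^{\operatorname{m}},\Smid^{\operatorname{l}}\rangle$ followed by shuffling, where first and last occurrences in each block of $\Stop$ are replaced by copies of $S_4(i,z)$ while middle occurrences are replaced by copies of $S_2(N_4(i,z))$, together with a nontrivial density argument (Lemma~\ref{lem:blockdensity}) to ensure the 2-sparsified sequence is not dominated by boundary deletions. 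This asymmetric treatment of first/middle/last is essential; a uniform ``compose $\DS{4}$ with $\DS{3}$'' does not respect the parity constraint that makes the odd order behave like the preceding even one.

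A minor point: the $s=3$ bound $2n\alpha(n)+O(n)$ in the theorem is a small sharpening of Klazar's $2n\alpha(n)+O(n\sqrt{\alpha(n)})$, proved in Section~\ref{sect:order-3-ub}; it is not simply inherited.
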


Theorem~\ref{thm:main} is optimal in that it provides
the tightest bounds
that can be expressed in an Ackermann-invaraint fashion
(see Remark~\ref{remark:Ackermann-invariance}), 
and in this sense closes the Davenport-Schinzel problem.\footnote{The exponent $(1+o(1))\alpha^t(n)/t!$ is
the Ackermann-invariant expression $\alpha^t(n)/t! + O(\alpha^{t-1}(n))$.}
However, we believe our primary contributions are not the tight asymptotic bounds {\em per se}
but the structural differences they reveal between even and odd $s$.
We can now give a cogent explanation for {\em why} odd orders $s\ge 5$ behave essentially like
the preceding even orders and yet why they are intrinsically more difficult to understand.

\subsection{Generalizations of Davenport-Schinzel Sequences}

The (\ref{eqn:Nivasch}) bounds are actually corollaries of a more general theorem
in~\cite{Nivasch10} 
concerning the length of sequences avoiding {\em catenated permutations},\footnote{Nivasch called
these formation-free sequences.}
which were introduced by Klazar~\cite{Klazar92}.
Define $\Perm{r,s+1}$ to be the set of all sequences obtained by concatenating $s+1$ permutations
over an $r$-letter alphabet.  For example, $abcd\; cbad\; badc\; abcd\; dcba\in \Perm{4,5}$.
Define the extremal function of $\Perm{r,s+1}$-free sequences to be 
\begin{align*}
\PERM{r,s}(n) &= \Ex(\Perm{r,s+1},n)
\end{align*}
The ``$s+1$'' here is chosen to highlight the 
parallels with order-$s$ DS sequences.  Every $\sigma\in\Perm{2,s+1}$ contains an alternating sequence 
$abab\cdots$ with length $s+2$,\footnote{The first permutation contributes two symbols and every subsequent permutation contributes
at least one.}
so order-$s$ DS sequences are also $\Perm{2,s+1}$-free, implying that $\DS{s}(n) \le \PERM{2,s}(n)$.
Nivasch~\cite{Nivasch10} proved that $\PERM{r,s}(n)$ obeys all the {\em upper bounds}
of (\ref{eqn:Nivasch}), as well as its  lower bounds when $s\ge 4$ is even or $s\le 3$. 

There are other natural ways to generalize standard Davenport-Schinzel sequences.
{\em Doubled} Davenport-Schinzel sequences were studied in~\cite{DS65b,AKV92,KV94,Pettie-GenDS11}.
Define $\dblDS{s}(n)$ to be the extremal function of $\inflate(abab\cdots)$-free sequences,
where the alternating sequence has length $s+2$ and
$\inflate(\sigma)$ is obtained by doubling every symbol in $\sigma$ save the first and last. For example, 
$\inflate(abab) = abbaab$.\footnote{Why not consider higher multiplicities?  It is fairly easy to show
that repeating symbols more than twice, or repeating the first and last at all, affects the extremal function by 
at most a constant factor.  See Adamec, Klazar, and Valtr~\cite{AKV92}.}
Davenport and Schinzel~\cite{DS65b} noted that $\dblDS{1}(n)=O(\DS{1}(n))=O(n)$ (see~\cite{Klazar02})
and Adamec, Klazar, and Valtr~\cite{AKV92} proved that $\dblDS{2}(n)=O(\DS{2}(n)) = O(n)$.
Pettie proved that $\dblDS{3}(n) = O(n\alpha^2(n))$ and that
$\dblDS{s}(n)$ obeys all the upper bounds of (\ref{eqn:Nivasch}) for $s\ge 4$.

If one views alternating sequences as forming a zigzagging pattern,
an obvious generalization is to extend the length of each zig and zag to include a larger alphabet.  
For example, the $N$-shaped sequences
$N_{k} = 12\cdots k(k+1)\cdots 212 \cdots k(k+1)$ generalize $abab=N_1$
and the $M$-shaped sequences $M_{k} = 12\cdots k(k+1)k\cdots 212\cdots k(k+1)k\cdots 21$ generalize
$ababa=M_1$.
Klazar and Valtr~\cite{KV94,Pettie-SoCG11} proved that 
$\Ex(\inflate(N_k),n) = O(\DS{2}(n)) = O(n)$
and Pettie~\cite{Pettie-SoCG11} proved that $\Ex(\{M_k,ababab\},n) = O(\DS{3}(n))$.
Sequences avoiding $N$- and $M$-shaped sequences have proved very useful
in bounding the complexity of geometric graphs~\cite{Valtr97,Suk11,FoxPS11,Pettie-SoCG11}. 

In a companion paper to be published separately we provide new upper and lower bounds on doubled DS sequences,
$M_k$-free sequences, and $\Perm{r,s+1}$-free sequences.
The strangest of these results is that $\PERM{r,s}$ is very sensitive to the alphabet size $r$, but only when $s$ is odd and at least 5.  In particular $\PERM{2,s}(n)=\Theta(\dblDS{s}(n)) = \Theta(\DS{s}(n))$
but this is not true for general $r\neq 2$.

\ignore{
We prove that $\dblDS{s}(n)$ obeys all the upper and lower
bound of Theorem~\ref{thm:main}, except at $s=5$,
where the upper bound is weaker by an $\alpha^{1+o(1)}(n)$ factor.
The most technically challenging part of this proof is establishing an $O(n\alpha(n))$ bound on $\dblDS{3}(n)$.\footnote{Klazar  and Valtr~\cite{KV94} claimed that this bound follows easily from Hart and Sharir's~\cite{HS86} method.
This claim was later retracted by Klazar~\cite{Klazar02} and highlighted as an open question.}

The situation becomes stranger when we attempt to extend the bounds of Theorem~\ref{thm:main} to 
$\Perm{r,s+1}$-free sequences.  When $s\le 3$ or $s$ is even it was known that $\PERM{r,s}(n)$ behaves like $\DS{s}(n)$ for any $r\ge 2$.
(Nivasch's upper bounds~\cite{Nivasch10} match the lower bounds of~\cite{HS86} and \cite{ASS89} for these parameters.)
We prove that for every {\em odd} $s\ge 5$, $\PERM{r,s}(n)$ behaves essentially like $\DS{s}(n)$ {\em but only when $r=2$}.  For $r\ge 3$ we give
a new lower bound construction showing that Nivasch's upper bound is essentially tight.
}

\begin{theorem}\label{thm:PERM}
The following bounds hold for all $r\ge 2, s\ge 1$, where $t = \floor{\frac{s-2}{2}}$.
\begin{align*}
\PERM{r,s}(n) &= \left\{
\begin{array}{l@{\hcm[.4]\istrut[2.5]{0}}l}
\Theta(n)		& \mbox{for $s\in\{1,2\}$ and all $r\ge 2$}\\
\Theta(n\alpha(n)) & \mbox{for $s=3$ and all $r\ge 2$}\\
\Theta(n2^{\alpha(n)}) & \mbox{for $s=4$ and all $r\ge 2$}\\
\Theta(n\alpha(n)2^{\alpha(n)})    & \istrut[3]{0}\mbox{for $s=5$ and  $r=2$}\\
n\cdot (\alpha(n))^{(1+o(1))\,\alpha(n)}						& \mbox{for $s=5$ and all $r\ge 3$}\\
n\cdot 2^{(1+o(1))\,\alpha^t(n)/t!}						& \mbox{for even $s\ge 6$ and all $r\ge 2$}\\
n\cdot 2^{(1+o(1))\,\alpha^t(n)/t!}						& \mbox{for odd $s\ge 7$ and $r=2$}\\
n\cdot (\alpha(n))^{(1+o(1))\,\alpha^t(n)/t!} \;\;\;					& \mbox{for odd $s\ge 7$ and all $r\ge 3$}
\end{array}\right.
\end{align*}
\end{theorem}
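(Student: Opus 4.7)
The plan is to decompose Theorem~\ref{thm:PERM} by parameter regime and leverage three ingredients: (a) the inequality chain $\DS{s}(n) \le \PERM{2,s}(n) \le \PERM{r,s}(n)$, which follows by projecting any $\Perm{r,s+1}$ witness onto a 2-letter sub-alphabet and noting that every $\Perm{2,s+1}$ element already contains a length-$(s+2)$ alternation; (b) Theorem~\ref{thm:main}, which gives sharp bounds on $\DS{s}(n)$; and (c) Nivasch's upper bound $\PERM{r,s}(n) \le n\cdot(\alpha(n))^{(1+o(1))\alpha^t(n)/t!}$ from~\cite{Nivasch10} for every $r\ge 2$ and odd $s\ge 5$, together with his tight bounds for even $s$ and for $s\le 4$. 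These ingredients immediately settle the claimed asymptotics for $s\in\{1,2,3,4\}$, for even $s\ge 6$, and supply the correct lower bound $\DS{s}(n)$ on $\PERM{r,s}(n)$ in the $r=2$ regime for every odd $s$. The two remaining tasks are (i) sharpening the upper bound to $\PERM{2,s}(n) = O(\DS{s}(n))$ for every odd $s\ge 5$ and (ii) producing a new lower-bound construction showing $\PERM{r,s}(n) = \Omega\bigl(n\cdot(\alpha(n))^{(1-o(1))\alpha^t(n)/t!}\bigr)$ for every odd $s\ge 5$ and $r\ge 3$.

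For task~(i), I would adapt the recursive charging argument developed to bound $\DS{s}(n)$ for odd $s$ in the proof of Theorem~\ref{thm:main}. The key observation is that avoiding $\Perm{2,s+1}$ is only a mild relaxation of avoiding the length-$(s+2)$ alternation: a 2-letter projection of a $\Perm{2,s+1}$-free sequence may carry a longer alternation (up to length roughly $2s+1$) provided no internal ``swap'' occurs between consecutive permutation-like windows, whereas the DS-$s$ constraint caps every alternation at length $s+1$. This relaxation should cost only a constant factor per recursion level when the odd-order charging scheme from Theorem~\ref{thm:main} is applied, yielding $\PERM{2,s}(n)=O(\DS{s}(n))$. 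For task~(ii), the plan is to modify an extremal construction for even-order DS sequences~\cite{ASS89,Nivasch10} by inserting an $\alpha(n)$-fold inflation at each of the roughly $\alpha^t(n)/t!$ recursion levels, using a third letter $c$ (available because $r\ge 3$) as a ``partition'' symbol that allows each recursive unit to be replicated without completing a $\Perm{r,s+1}$ pattern, since such a pattern would require $c$ to appear inside $s+1$ separate sub-permutations, which the construction explicitly precludes. Compounding the $\alpha(n)$-factor gain across all recursion levels yields the desired $(\alpha(n))^{\alpha^t(n)/t!}$ blow-up, matching Nivasch's upper bound.

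The main obstacle will be task~(i). Applied generically, Nivasch's technique delivers only the bound $\PERM{2,s}(n) \le n\cdot(\alpha(n))^{(1+o(1))\alpha^t(n)/t!}$, overshooting the target by a factor of $(\alpha(n))^{\alpha^t(n)/t!}$. Closing this gap demands carrying over the refined odd-order charging scheme developed in the proof of Theorem~\ref{thm:main} and verifying that the subtle interaction between the $\Perm{2,s+1}$ forbidden set and the recursive block decomposition introduces no more than a bounded constant overhead per recursion level. The even-$s$ and small-$s$ cases, as well as the lower-bound construction of task~(ii), do not require this additional machinery.
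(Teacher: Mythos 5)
The paper states but does not prove Theorem~\ref{thm:PERM}: the paragraph immediately preceding it explicitly defers the bounds on $\PERM{r,s}$ to ``a companion paper to be published separately.'' There is therefore no in-paper proof to compare against, and I can only assess your proposal on its own terms.

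Your starting reductions are fine --- $\DS{s}(n)\le\PERM{2,s}(n)\le\PERM{r,s}(n)$, Nivasch's upper bounds, and Theorem~\ref{thm:main} do dispose of $s\le 4$, of even $s\ge 6$, and of the sharp lower bound in the odd $r=2$ case. The gap is in task~(i), the claim that the odd-order charging scheme of Theorem~\ref{thm:main} transfers to $\Perm{2,s+1}$-free sequences at a constant factor per recursion level. The engine of Recurrence~\ref{thm:recurrence-odd} is the implication: if $a,b$ are \emph{nested} in a block $\beta$ and the non-feather middle occurrences $\GnonfeatherS_q$ carry an $ab$-alternation of length $s-1$, then $S$ contains $a\,b\,(ab\cdots ab)\,b\,a$, a window of $s+3$ symbols with a length-$(s+2)$ alternation, contradicting the DS constraint and thereby shrinking $\GnonfeatherS_q$'s order from $s-2$ to $s-3$. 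Under the $\Perm{2,s+1}$ constraint that same window yields only $(s+3)/2$ disjoint ordered mixed pairs, well short of the $s+1$ needed to exhibit an element of $\Perm{2,s+1}$. Worse, nested and interleaved pairs both contribute the same two ``free'' disjoint mixed pairs wrapping $\beta$ (one before and one after), so regardless of nesting the argument forces $\GnonfeatherS_q$ only to be $\Perm{2,s-1}$-free, i.e.\ ``order $s-2$'' --- the bound already available for \emph{all} middle occurrences. The one-order savings that is the entire source of the odd-$s$ improvement in Theorem~\ref{thm:main} evaporates, and a correct proof of $\PERM{2,s}(n)=O(\DS{s}(n))$ for odd $s\ge 5$ must extract more from the $\Perm{2,s+1}$ structure than this single-window bookkeeping; it cannot be a direct transcription of Recurrence~\ref{thm:recurrence-odd}. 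Your task~(ii) sentinel idea is broadly consistent with the heuristic the author gives for why $r\ge 3$ differs, but it too is only a sketch and would need a concrete recursive construction and a verification that the sentinel actually blocks every $\Perm{r,s+1}$ witness at every level.
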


\ignore{
One consequence of 
Theorem~\ref{thm:PERM} is that Cibulka and Kyn\v{c}l's~\cite{CibulkaK12,Raz00}
recent upper bounds on the size of sets of permutations with fixed VC dimension is tight.
}

Theorem~\ref{thm:PERM} is rather surprising, even given Theorem~\ref{thm:main} and even in retrospect.
One consequence of Theorem~\ref{thm:PERM} is that Cibulka and Kyn\v{c}l's~\cite{CibulkaK12,Raz00}
upper bounds on the size of sets of permutations with fixed VC-dimension are tight.

\subsection{Organization.}

In Section~\ref{sect:tour} we present an informal discussion of the method
of Agarwal, Sharir, and Shor~\cite{ASS89} and Nivasch~\cite{Nivasch10},
its limitations for dealing with odd-order DS sequences, and 
the key ideas behind the proof of Theorem~\ref{thm:main}.
Section~\ref{sect:basic} reviews Nivasch's recurrence for $\DS{s}$
as well as some basic upper bounds on $\DS{s}$.
The critical structure in our analysis is the {\em derivation tree}
of a DS sequence.
Its properties are analyzed in Section~\ref{sect:derivation-tree}.
In Section~\ref{sect:odd} we use the derivation tree to 
obtain a new recurrence for odd-order DS sequences.
The recurrences for even- and odd-order DS sequences are
solved in Section~\ref{sect:analysis}.
In Section~\ref{sect:proofthmmain} we complete the proof of the upper bounds of 
Theorem~\ref{thm:main} for all orders except $s=5$.  
In Sections~\ref{sect:lb} and~\ref{sect:order5-ub} we establish 
Theorem~\ref{thm:main}'s lower bounds and upper bounds 
on order-$5$ DS sequences.  
We discuss several open problems in Section~\ref{sect:discussion}.
Some proofs appear in Appendices~\ref{appendix:Sharir}--\ref{sect:appendix:order5}.

\ignore{
Section~\ref{sect:abbaabba} extends Theorem~\ref{thm:main} and the derivation tree method
to double Davenport-Schinzel sequences. 
It is proved that $\dblDS{3}(n) = \Theta(\DS{3}(n)) = \Theta(n\alpha(n))$
and, in general, that $\dblDS{s}(n)$ behaves essentially like $\DS{s}(n)$.
Section~\ref{sect:abcbabcba} presents a new class of nonlinear sequences
with lengths of the form $\Theta(n\alpha^k(n))$, where $k\ge 1$,
and provides new lower bounds on sequences avoiding $M$-shaped and comb-shaped subsequences.
Theorem~\ref{thm:PERM} is proved in Section~\ref{sect:perm}.
A new construction of $\Perm{r,s+1}$-free sequences (using those of Section~\ref{sect:abcbabcba} as a basis) 
is given in Section~\ref{sect:PERMlb}.  In the interest of completeness, 
an alternative proof of Nivasch's upper bound~\cite{Nivasch10} is presented
in Section~\ref{sect:PERMub}.
}

\section{A Tour of the Proof}\label{sect:tour}

The proof of Theorem~\ref{thm:main} diverges sharply from previous 
analyses~\cite{Sharir87,ASS89,Nivasch10}
in that it treats even and odd orders as fundamentally different beasts.
To understand why all orders cannot be analyzed in a uniform fashion we must
review the method of Agarwal, Sharir, and Shor~\cite{ASS89} and Nivasch~\cite{Nivasch10}.

The basic inductive hypothesis of~\cite{Nivasch10} is that there are values $\{\mu_{s,i}\}$, 
increasing in both $s$ and $i$,
for which
$\DS{s}(n,m) < \mu_{s,i}\paren{n + m\poly(\log^{[i-1]}(m))}$,
for any choice of $i$.\footnote{Recall that $\log^{[i-1]}(m)$ is the $\log^{\star\cdots\star}(m)$ function, with $i-1$ $\star$s.}
In other words, the multiplicity of symbols is at most $\mu_{s,i}$, up to an additive term that depends on the block count $m$,
which may be the dominant term if $i$ is too small.
One would, ultimately, choose $i$ to make $m\poly(\log^{[i-1]}(m)) = O(n)$ (and $\alpha(n,m)$ is a good choice)
but for the sake of simplifying the discussion we ignore the dependence on $m$.  Given a sequence $S$ with parameters $s,n,m$, 
to {\em invoke the inductive hypothesis with parameter $i$} means to upper bound $|S|$ by $\mu_{s,i}n$, with the understanding
that $i$ is not chosen to be too small.

Suppose $S$ is an order-$s$, $m$-block DS sequence over an $n$-letter alphabet.  The analysis of~\cite{ASS89,Nivasch10}
begins by partitioning $S$ into $\Gm$ intervals of consecutive blocks, $\Gm$ typically being much smaller than $m$.
Write $S$ as $S_1\cdots S_{\Gm}$.  We can put symbols into two categories: {\em local} symbols are those that appear exclusively
in one interval $S_q$ and {\em global} symbols are those that appear in multiple intervals.  
Let $\LS=\LS_1\cdots \LS_{\Gm}$ be the subsequence of $S$ consisting of local symbols
and $\GS = \GS_1\cdots \GS_{\Gm}$ the subsequence of $S$ consisting of global symbols, so $|S| = |\LS| + |\GS|$.
On each $\LS_q$ (for $1\le q\le \Gm$) we invoke the inductive hypothesis with parameter $i$ and deduce that $|\LS| \le \mu_{s,i}\|\LS\|$.
What remains is to bound the length of $\GS$.

The next step is to form a {\em contracted} sequence from $\GS$ that has a much higher alphabet-to-block count
ratio, thereby allowing us to invoke the inductive hypothesis with a smaller `$i$' parameter.  Let $\GS'$ be obtained from $\GS$ by replacing each 
interval $\GS_q$ with a {\em single} block $\beta_q$ containing the first occurrence of each distinct symbol in $\GS_q$.  Thus, the alphabet of 
$\GS'$ is the same as $\GS$ but it consists of just $\Gm$ blocks $\beta_1\beta_2\cdots\beta_{\Gm}$.  
On $\GS'$ we invoke the inductive hypothesis with parameter $i-1$
and conclude that $|\GS'| \le \mu_{s,i-1}\|\GS\|$.  One cannot immediately deduce any bound on $\GS$ from a bound on $\GS'$ since
each interval $\GS_q$ could contain {\em numerous} copies of a symbol, only one of which is retained in $\beta_q$.

Imagine reversing the contraction operation.  We replace each block $\beta_q$ with a sequence $\GS_q$, thereby reconstructing $\GS$.
To bound the length of $\GS_q$ in terms of $|\beta_q| = \|\GS_q\|$ we will invoke the inductive hypothesis three more times.
Put the symbols of $\beta_q$ in three categories: those that make their first appearance (in $\GS'$) in $\beta_q$, those that make their last appearance in $\beta_q$,
and those that make a middle (non-first, non-last) appearance in $\beta_q$. 
Discard from $\GS_q$ all symbols not classified as first in $\beta_q$ and call the resulting sequence $\GfS_q$.  
Every symbol in $\GfS_q$ appears at least once after $\GS_q$ (by virtue of being categorized as first in $\beta_q$), 
which implies that $\GfS=\GfS_1\cdots\GfS_{\Gm}$ is an order-$(s-1)$ DS sequence.  See the diagram below.\\

\centerline{\scalebox{.3}{\includegraphics{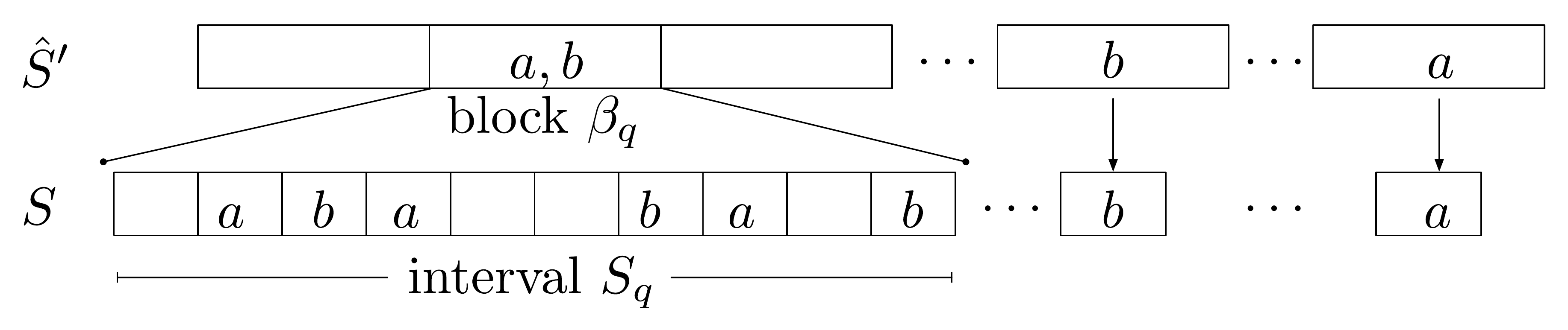}}}

\noindent An occurrence of $\sigma_{s+1} = abab\cdots$ (length $s+1$) in $\GfS_q$,  together with an $a$ or $b$ following $\GS_q$ (depending on whether $\sigma_{s+1}$ ends in $b$ or $a$) gives an occurrence of $\sigma_{s+2}$ in $\GS$, contradicting the fact that it has order $s$.  The same argument applies
in a symmetric fashion to the subsequence of $\GS_q$ formed by symbols making their last appearance in $\beta_q$, call it $\GlS_q$.
By invoking the inductive hypothesis with parameter $i$ on $\GfS_1\cdots \GfS_{\Gm}$ and $\GlS_1\cdots \GlS_{\Gm}$ we can conclude the contribution
of first and last symbols to $S$ is $2\mu_{s-1,i}\|\GS\|$.

The length of the subsequence of middle symbols in $\GS_q$, call it $\GmS_q$, 
is bounded with the same argument, except now there are, by definition of {\em middle}, occurrences of both $a,b\in\Sigma(\GmS_q)$ both before {\em and} after $\GS_q$.
That is, if $\sigma_{s} = baba\cdots$ (length $s$) appeared in $\GmS_q$ then, together with an $a$ preceding $\GS_q$ and either an $a$ or $b$ following $\GS_q$ (depending on whether $\sigma_{s}$ ends in $b$ or $a$) there would be an occurrence of $\sigma_{s+2}$ in $\GS$, contradicting the fact that it has order $s$.
We invoke the inductive hypothesis one last time, with parameter $i$, on {\em each} $\GmS_1,\ldots,\GmS_{\Gm}$, which implies that 
$|\GmS_q| \le \mu_{s-2,i}\|\GmS_q\|$.  

Recall that each symbol in $\GS'$ appeared $\mu_{s,i-1}$ times, 
$\mu_{s,i-1}-2$ times 
in blocks where it was categorized as middle.  Thus, the contribution of middle symbols to $|\GS|$ is $\mu_{s-2,i}(\mu_{s,i-1}-2)$.  In order for every
symbol, local and global alike, to appear in $S$ with multiplicity at most $\mu_{s,i}$, we must have
\begin{align}
\mu_{s,i} &\ge 2\mu_{s-1,i} + \mu_{s-2,i}(\mu_{s,i-1}-2)\label{eqn:Tour-basic}
\end{align}
When $s=3$ we do not need to use an inductive hypothesis to determine $\mu_{1,i}$ and $\mu_{2,i}$.  They are just 1 and 2; the $i$ parameter does not come into play.\footnote{It is easy to show that 
$\DS{1}(n,m) < \underline{1}\cdot n + m$ and $\DS{2}(n,m) < \underline{2}\cdot n + m$.  See Lemma~\ref{lem:DS12}.}  This leads to a bound of $\mu_{3,i} = 2i+O(1)$.\footnote{We have not said what to do in the base case when $i=1$, which determines the $O(1)$ term.}
Although the contribution of first and last symbols is significant at $s=3$,
entertain the idea that their contribution becomes negligible at higher orders, 
so we can further simplify (\ref{eqn:Tour-basic}) as follows
\begin{align}
\mu_{s,i} &\ge \mu_{s-2,i}\mu_{s,i-1}  \label{eqn:Tour-simplified}
\end{align}
Inequality (\ref{eqn:Tour-simplified}) is satisfied when $\mu_{s,i} = g^{i+t\choose t}$ for {\em any} base $g$; recall that 
$t=\floor{\frac{s-2}{2}}$ by definition.
By Pascal's identity $g^{i+t\choose t} = g^{i+(t-1)\choose (t-1)}\cdot g^{(i-1) + t\choose t}$.  The correct base depends on where
the inductively defined inequality (\ref{eqn:Tour-simplified}) bottoms out: at order $2$ when $s\ge 4$ is even and at order $3$ when $s\ge 5$ is odd.
When $s$ is even the correct base is $2 = \mu_{2,i}$.  When $s$ is odd the calculations are less clean since $\mu_{3,i} = 2i+ O(1)$ is not constant 
but depends on $i$.  Nonetheless, the correct base is on the order of $i$, that is, $\mu_{s,i} = \Theta(i)^{i+t\choose t}$ satisfies (\ref{eqn:Tour-simplified})
at the odd orders.  Plugging in $\alpha(n,m)$ for $i$ ultimately leads to Nivasch's bounds~(\ref{eqn:Nivasch}), since 
${i+t\choose t} = i^t/t! + O(i^{t-1}) = (1+o(1))i^t/t!$.\\

To obtain a construction of order-$s$ sequences realizing the (\ref{eqn:Nivasch}) bounds one should start
by attempting to {\em reverse-engineer} the argument above.  To form an order-$s$ sequence $S$ with certain alphabet and block parameters, 
start by generating (inductively) local order-$s$ sequences $\LS_1\cdots\LS_{\Gm}$ over disjoint alphabets, and a single 
global order-$s$ sequence $\GS'$ having $\Gm$ blocks.  Take some block $\beta_q$ in $\GS'$ and suppose for the sake of simplicity that $\beta_q$ consists
solely of middle symbols.  We need to substitute for $\beta_q$ an order-$(s-2)$ DS sequence $\GmS_q$ and then somehow merge it with $\LS_q$,
in a way that does not introduce into 
$S$ an alternating sequence with length $s+2$.  This is the point at which the even and odd orders diverge. 

If $s$ is even the longest alternating sequence $baba\cdots b$ in $\GmS_q$ has length $s-1$ and therefore begins and ends with $b$.  
We can only afford to introduce one alternation at each boundary of $\GmS_q$, so the pattern of $a$s and $b$s on either side of $\beta_q$ must
look like $a^* \; b^* \;\, \beta_q \;\, b^* \; a^*$, as in the diagram below.
We will call $a$ and $b$ {\em nested in $\beta_q$} if the sequence contains
$a\, b\, \beta_q\, b\, a$ or the equivalent $b\, a\, \beta_q\, a\, b$.\\
See the diagram below.

\centerline{\scalebox{.3}{\includegraphics{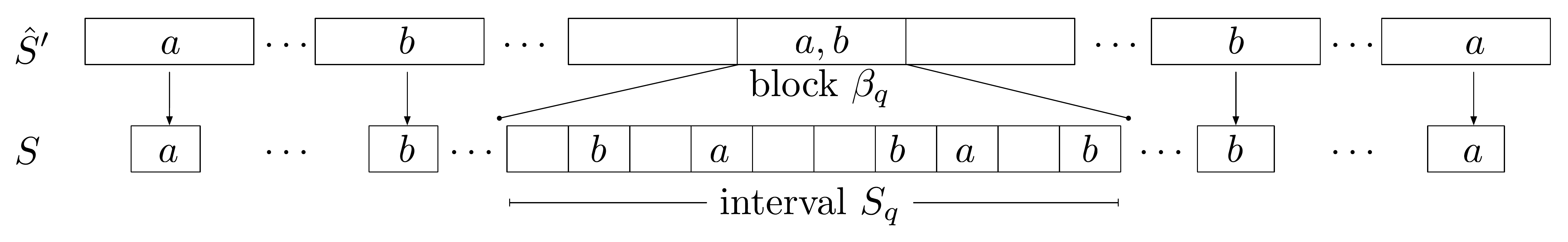}}}

On the other hand, if $s$ is odd then 
the longest alternating sequence $baba\cdots a$ in $\GmS_q$ has length $s-1$, begins with $b$ and ends with $a$, so the pattern
of $a$s and $b$s in $\GS'$ looks like $a^* \; b^* \; \beta_q \; a^* \; b^*$.
A pair of middle symbols that are not nested in $\beta_q$ are called {\em interleaved in $\beta_q$}.\\

\centerline{\scalebox{.3}{\includegraphics{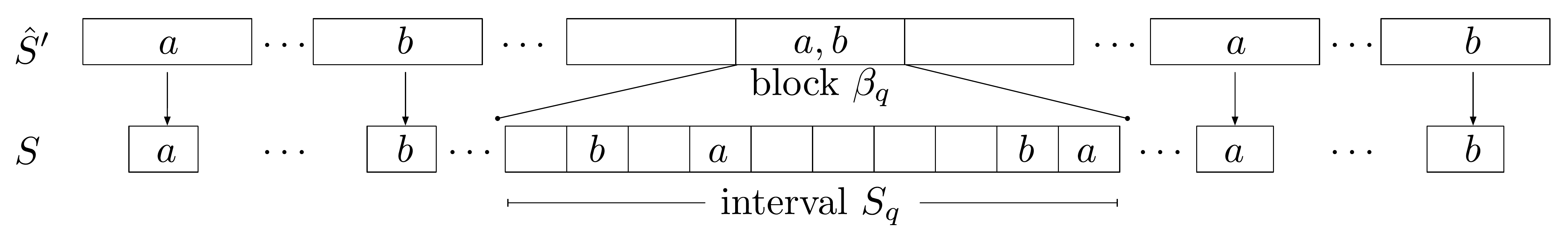}}}

If the (\ref{eqn:Nivasch}) bounds prove to be tight, there must be two systems for generating sequences:
one where nesting is the norm, when $s$ is even, and one where interleaving is the norm, when $s$ is odd.  
If interleaving were somehow outlawed then
to avoid creating an alternating sequence with length $s+2$, the sequence $\GmS_q$ substituted for $\beta_q$ would have to be an 
order-$(s-3)$ DS sequence rather than an order-$(s-2)$ one.
However,
it is clearly impossible to claim that interleaving simply cannot exist.

\ignore{
Even if one could argue that nesting is the norm
and that there are very few pairs\footnote{As a rule, very few of the $n\choose 2$ pairs of symbols in a Davenport-Schinzel sequence form any non-trivial alternating pattern.} of interleaving symbols in each block, that would not be sufficient to prove that $\GS_q$ has order $s-3$.
}

What makes the argument of~\cite{ASS89,Nivasch10} brilliantly simple
is how little it leaves to direct calculation.  The length of every sequence ($\LS_q,\GS',\GfS_q,\GmS_q,$ etc.) 
is bounded by delegation to an inductive hypothesis.  However, such useful notions as {\em nearly all middle symbols in a block
are mutually nested} are difficult to capture in a strengthened inductive hypothesis.
We need to understand and characterize 
the phenomenon of nestedness to improve on~\cite{ASS89,Nivasch10}.
This requires a deeper understanding of the structure of Davenport-Schinzel sequences.

\paragraph{The Derivation Tree.}
Inductively defined objects can be apprehended inductively or, alternatively,
apprehended holistically 
by completely ``unrolling'' the induction. 
From the first perspective $S$ is the merger of $\LS$ and $\GS$, which is derived from $\GS'$, all of which are
analyzed inductively.
By iteratively unrolling the decomposition of $\LS$ and $\GS'$ we obtain a {\em derivation tree} $\Tree$ whose nodes 
represent {\em every} block in {\em every} sequence encountered in the recursive decomposition of $S$.
Whereas $S$ occupies the leaves of $\Tree$, derived sequences such as $\GS'$ occupy levels higher in $\Tree$.
Whereas $S$ (and every sequence) is a static object, $\Tree$ can be thought of a {\em process} for generating $S$ whose history 
can be reasoned about explicitly.
But how does $\Tree$ let us deduce something about the nestedness and non-nestedness
of symbols in a common block?

Suppose we are interested in the 
nestedness of middle symbols $a,b$ in block $\beta$, which corresponds
to a leaf-node in $\Tree$.  Imagine taking $\Tree$ and deleting every node whose block does not contain $b$,
that is, {\em projecting} $\Tree$ onto the symbol $b$.
What remains, $\Tree_{|b}$, is a tree rooted at the location in $\Tree$ where $b$ is ``born'' and represents how occurrences
of $b$ have proliferated during the process that culminates in the construction of $S$.
The block/node $\beta$ occupies a location in $\Tree_{|b}$ and a location in $\Tree_{|a}$,
whose node sets are only guaranteed to intersect at $\beta$.
Some locations in $\Tree_{|a}$ and $\Tree_{|b}$ are intrinsically bad---these are called {\em feathers} in Section~\ref{sect:derivation-tree}.
(Whether a node is a feather in $\Tree_{|a}$ depends solely on the structure of $\Tree_{|a}$, not how it is embedded
in $\Tree$ nor its relationship to a different $\Tree_{|b}$.)
We show that if $\beta$ is not a feather in $\Tree_{|a}$ and not a feather in $\Tree_{|b}$, then 
$a$ and $b$ are nested in $\beta$.  
In other words, the middle symbols in $\beta$ are partitioned into two equivalence classes, depending
on whether or not they appear at feathers in their respective derivation trees.  We could not outlaw
interleavedness in general, yet we managed to outlaw it {\em within} one equivalence class!
The question is, what are the relative sizes of these two equivalence classes, and in particular,
how many feathers can a $\Tree_{|b}$ have?\\

Our aim is to get stronger asymptotic bounds on $\DS{s}$ for odd $s$, which means
the number of feathers should be a negligible ($o(1)$ fraction) of the size of $\Tree_{|b}$.
In the same way that the multiplicities $\mu_{s,i}$ are bounded inductively, as in (\ref{eqn:Tour-basic}) for example, 
we are able to bound the number of feathers in $\Tree_{|b}$ inductively, call it $\nu_{s,i}$, in terms of $\nu_{s,i-1}$ and
$\mu_{s-1,i}$.  However, now $\mu_{s,i}$ is bounded in terms of $\mu_{s,i-1}$ (the multiplicity of symbols in the contracted sequence $\GS'$),
$\mu_{s-1,i}$ (the multiplicity of symbols in $\GS$ begat by first and last occurrences in $\GS'$),
$\mu_{s-3,i}$ (the multiplicity of middle occurrences in $\GS$ begat by non-feathers in $\GS'$),
and both
$\nu_{s,i-1}$ and $\mu_{s-2,i}$, which count the number of feathers in 
$\GS'$ and the multiplicity of middle occurrences in $\GS$ begat by feathers in $\GS'$.
This leads to a system of three interconnected recurrences: one for $\mu_{s,i}$ at odd $s$, one for $\mu_{s,i}$ at 
even $s$, and one for the feather count $\nu_{s,i}$.
An elementary (though necessarily detailed)
proof by induction gives solutions for $\mu_{s,i}$ and $\nu_{s,i}$ 
that ultimately lead to the upper bounds of Theorem~\ref{thm:main}, 
with one exception.  At order $s=5$ this method only gives us 
an $O(n\alpha^2(n)2^{\alpha(n)})$ upper bound on $\DS{5}(n)$.
To obtain a sharp $O(n\alpha(n)2^{\alpha(n)})$ bound we
are forced to analyze not just one derivation tree $\Tree$ (of an order-5 DS sequence) but
a system of derivation trees of order-4 DS sequences associated with all the 
sequences $\GfS$ and $\GlS$ (of global first and last occurrences) encountered in the construction of $\Tree$.

\ignore{
Although the derivation tree $\Tree$ was defined with standard Davenport-Schinzel sequences in mind,
it has further uses.  Our proof that $\dblDS{3}(n) = \Theta(n\alpha(n))$
is based on different properties of $\Tree$.

\begin{remark}\label{remark:r=3}
Why, one might ask, does nestedness give us an edge in bounding the length
of Davenport-Schinzel sequences and $\Perm{2,s}$-free sequences but
not $\Perm{r,s}$-free sequences for $r\ge 3$?  (See Theorem~\ref{thm:PERM}.)
Roughly speaking, the twoness of sequences (each interval in a sequence has two boundaries with the rest of the sequence)
matches the twoness of forbidden subsequences over a 2-letter alphabet.
When the forbidden subsequence has $r\ge 3$ letters, the third letter acts
as a kind of ``sentinel'' at one boundary of an interval, which makes the distinction between 
nested and interleaved symbols irrelevant.  
(This explanation is merely intended to foreshadow the proof of Lemma~\ref{lem:perm-U} in Section~\ref{sect:PERMlb},
not to be understood in isolation.)
Why, then, do $\PERM{r,s}(n)$ and $\DS{s}(n)$ behave differently at the odd orders $s\ge 5$
but the same at order $s=3$?  
When $s=3$ the global middle symbols in $\GS'$ beget $\mu_{s-2,\cdot} = \mu_{1,\cdot}$ symbols in $S$,
where $\mu_{1,\cdot}$ represents the multiplicity of symbols in $\DS{1}$ or $\PERM{r,1}$.
Since $\mu_{1,\cdot} = 1$, whether middle symbols sharing a block in $\GS'$ are nested or not
makes no difference.
\end{remark}
}

\section{Basic Upper Bounds}\label{sect:basic}

In Section~\ref{sect:notation} we review and expand on the notation
introduced informally in Section~\ref{sect:tour}.
It will be used repeatedly throughout 
Sections \ref{sect:derivation-tree}--\ref{sect:order5-ub}.

\subsection{Sequence Decomposition}\label{sect:notation}

Let $S$ be a sequence over an $n=\|S\|$ letter alphabet consisting of $m=\bl{S}$ blocks.
Suppose we partition $S$ into $\Gm$ intervals of consecutive blocks 
$S_1S_2\cdots S_{\Gm}$, where $\Lm_q = \bl{S_q}$ is the number of blocks in interval $q$.
Let $\LSigma_q$ be the alphabet of symbols
{\em local} to $S_q$ (that do not appear in any $S_p$, $p\neq q$)
and let $\GSigma = \Sigma(S) \backslash \bigcup_{q}\LSigma_q$ be the alphabet of all other {\em global} symbols.
The cardinalities of $\LSigma_q$ and $\GSigma$ are $\Ln_q$ and $\Gn$, thus $n = \Gn + \sum_{q=1}^{\Gm} \Ln_q$.
A global symbol in $S_q$ is called {\em first}, {\em last}, or {\em middle} if it appears in no earlier interval, no later interval, or appears in both earlier and later intervals, respectively.
Let $\GfSigma_q,\GlSigma_q,\GmSigma_q,\GSigma_q$ be the subset of $\Sigma(S_q)$ consisting of, respectively, first, last, middle, and all global symbols,
and let $\Gfn_q,\Gln_q,\Gmn_q,$ and $\Gn_q$ be their cardinalities.
Let $\LS_q,\GS_q,\GfS_q,\GlS_q,\GmS_q$ be the projection of 
$S_q$ onto $\LSigma_q,\GSigma_q,\GfSigma_q,\GlSigma_q,$ and $\GmSigma_q$.
Note that $\GS_1$ consists solely of first occurrences; if the last occurrence of a symbol appeared in $\GS_1$ the symbol would be
classified as local to $S_1$, not global.  The same argument shows that $\GS_{\Gm}$ consists solely of last occurrences.
Let $\LS,\GS,\GfS,\GlS,$ and $\GmS$ be the subsequences of local, global, first, last, and middle occurrences, respectively,
that is, 
$\LS = \LS_1\cdots\LS_{\Gm}$, 
$\GS = \GS_1\cdots\GS_{\Gm}$,
$\GfS = \GfS_1\cdots \GfS_{\Gm-1}$,
$\GlS = \GlS_2\cdots \GlS_{\Gm}$,
and
$\GmS = \GmS_2\cdots \GmS_{\Gm-1}$, the last of which would be empty if $\Gm=2$.
Let $\GS' = \beta_1\cdots\beta_{\Gm}$ be an $\Gm$-block sequence
obtained from $\GS$ by replacing each $\GS_q$ with a single block $\beta_q$ 
containing its alphabet $\GSigma_q$, listed in order of {\em first} appearance in $\GS_q$.

\subsection{2-Sparse vs. Blocked Sequences}\label{sect:sparse-vs-blocked}

Every analysis of Davenport-Schinzel sequences since~\cite{HS86} 
uses Lemma~\ref{lem:Sharir}(\ref{item:Sharir2}) to reduce the problem
of bounding 2-sparse DS sequences to bounding $m$-block DS sequences, 
that is, expressing $\DS{s}(n)$ in terms of $\DS{s}(n,m)$, where $m=O(n)$.

\begin{lemma}\label{lem:Sharir}
Let $\gamma_s(n) : \mathbb{N}\rightarrow\mathbb{N}$ be a non-decreasing function such that 
$\DS{s}(n) \le \gamma_s(n)\cdot n$.
\begin{enumerate}
\item (Trivial) For $s\ge 1$,  $\DS{s}(n,m) \le m-1 + \DS{s}(n)$.\label{item:Sharir1}
\item (Sharir~\cite{Sharir87}) For $s\ge 3$, $\DS{s}(n) \le \gamma_{s-2}(n)\cdot \DS{s}(n,2n-1)$.
(This generalizes Hart and Sharir's proof~\cite{HS86} for $s=3$.)\label{item:Sharir2}
\item (Sharir~\cite{Sharir87}) For $s\ge 2$, $\DS{s}(n) \le \gamma_{s-1}(n)\cdot \DS{s}(n,n)$.\label{item:Sharir3}
\item (New) For $s\ge 3$, $\DS{s}(n) = \gamma_{s-2}(\gamma_s(n))\cdot \DS{s}(n,3n-1)$.
\label{item:Sharir4}
\end{enumerate}
\end{lemma}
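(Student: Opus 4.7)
The plan is to handle all four parts under a common theme: greedy block decomposition of a 2-sparse sequence, together with a skeleton argument that bounds the number of greedy blocks via an extremal function of lower order. Throughout, let $S$ be a 2-sparse order-$s$ DS sequence realizing $\DS{s}(n)$ (up to lower-order terms), and greedily partition $S$ into maximal blocks of distinct symbols $B_1,\ldots,B_M$.

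Part~(\ref{item:Sharir1}) is essentially immediate: an $m$-block order-$s$ sequence violates 2-sparsity only where the last symbol of $B_i$ equals the first of $B_{i+1}$, so deleting one copy per such boundary (at most $m-1$ deletions) leaves a 2-sparse order-$s$ sequence on the same alphabet.

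For part~(\ref{item:Sharir2}), I would let $r_i$ denote the first symbol of $B_i$. By greedy maximality, $r_i$ appears in $B_{i-1}$ for every $i\ge 2$, so each $r_i$ shows up in at least two consecutive blocks. The core claim is that the skeleton $R=r_2 r_3\cdots r_M$, once stripped of immediate repetitions (an $O(n)$ correction), is isomorphic to an order-$(s-2)$ DS sequence on at most $n$ letters. Granting this, $M\le \DS{s-2}(n)+O(n)$, and grouping consecutive $B_i$'s into runs of $2n-1$ gives at most $\gamma_{s-2}(n)$ super-sequences, each an order-$s$ sequence of at most $2n-1$ blocks on at most $n$ letters; summing their lengths yields the bound. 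To see why the skeleton has order $s-2$: a length-$s$ alternation $abab\cdots$ occurring at $r_{i_1}<r_{i_2}<\cdots <r_{i_s}$ can be extended by the duplicate copy of $r_{i_1}=a$ appearing in $B_{i_1-1}$ and by the duplicate copy of the final symbol appearing in its preceding block, producing a length-$(s+2)$ alternation in $S$ and contradicting its order.

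Parts~(\ref{item:Sharir3}) and~(\ref{item:Sharir4}) follow the same template with variant skeleton arguments. For~(\ref{item:Sharir3}), a one-sided choice of representative yields only order $s-1$ for the skeleton, so $M=O(n\,\gamma_{s-1}(n))$ and one groups into super-sequences of $n$ blocks. For~(\ref{item:Sharir4}), one applies~(\ref{item:Sharir2}) once and then contracts the resulting super-sequence structure, viewing it as a new blocked sequence over only $\gamma_s(n)$ effective ``super-letters''; reapplying the skeleton bound replaces $\gamma_{s-2}(n)$ by $\gamma_{s-2}(\gamma_s(n))$, at the cost of inflating the per-super-block budget from $2n-1$ to $3n-1$ to accommodate the overhead of the double reduction.

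The main obstacle I anticipate is the skeleton claim in~(\ref{item:Sharir2}): a naive lifting of a length-$s$ alternation in $R$ only delivers a length-$(s+1)$ alternation in $S$, so obtaining the final $+1$ that contradicts order $s$ requires a delicate second step---either designating both endpoints of each block as representatives, or exploiting the fact that the matching copy of $r_i$ inside $B_{i-1}$ need not lie adjacent to $r_i$, leaving room to splice in one more alternation. The same subtlety is responsible for the precise constants $2n-1$, $n$, and $3n-1$ appearing in~(\ref{item:Sharir2})--(\ref{item:Sharir4}), and tightening it is what separates Sharir's version from the new~(\ref{item:Sharir4}).
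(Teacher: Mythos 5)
Your part~(\ref{item:Sharir1}) matches the paper's argument. Parts~(\ref{item:Sharir2})--(\ref{item:Sharir4}), however, rest on a decomposition and a skeleton claim that the paper does not use, and the skeleton claim is false. The paper does \emph{not} partition $S$ into maximal runs of distinct symbols. It greedily writes $S = S_1 S_2 \cdots S_m$ where each $S_q$ is the longest remaining prefix that is itself an \emph{order-$(s-2)$} DS sequence (for part~(\ref{item:Sharir3}), order-$(s-1)$). When $S_q$ is cut off by the next symbol $b$, the witnessing length-$s$ alternation $\sigma_s$ in $S_q b$ can be extended by one letter on each side unless $S_q$ contains the first or last occurrence of one of the two symbols involved; hence $m\le 2n-1$ (resp.\ $m\le n$). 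Then $|S| = \sum_q |S_q| \le \sum_q \gamma_{s-2}(\|S_q\|)\|S_q\| \le \gamma_{s-2}(n)\cdot\DS{s}(n,2n-1)$, with $S'$ the per-prefix blocked contraction playing the role of the $m$-block order-$s$ sequence. Part~(\ref{item:Sharir4}) is not obtained by ``contracting to $\gamma_s(n)$ super-letters'': instead the greedy rule caps each $S_q$ at length $\gamma_s(n)$, so order-termination occurs $\le 2n-1$ times and length-termination $\le n$ times (since $|S|\le\gamma_s(n)n$), giving $m\le 3n-1$, and the shrinkage constant becomes $\gamma_{s-2}(\gamma_s(n))$ because $\|S_q\|\le|S_q|\le\gamma_s(n)$.

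The gap you anticipated is real and cannot be repaired within your route. The skeleton $R=r_2\cdots r_M$ need not have order $s-2$: by maximality the duplicate copy of $r_i$ lies in $B_{i-1}$, i.e.\ strictly to the \emph{left} of $r_i$, so it merely repeats the endpoint of any alternation in $R$ rather than extending it. Concretely, take $s=4$ and
\[
S = a\, c_1 \mid a\, b\, c_2 \mid b\, a\, c_3 \mid a\, b\, c_4 \mid b\, a\, c_5 \mid a,
\]
with $c_1,\ldots,c_5$ distinct and bars marking maximal-block boundaries. This $S$ is $2$-sparse; $S|_{\{a,b\}} = aabbaabbaa$ has longest alternation $ababa$ of length $5$, and every other pair alternates at most twice, so $S$ has order $4=s$. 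Yet $R = r_2\cdots r_6 = ababa$ contains $abab$ and so is not order-$2 = s-2$. In general a subsequence of $S$ only inherits order $\le s$, and this construction can be stretched to make $R$ attain order $s$, so the ``$-2$'' in the shrinkage factor does not emerge from the block/skeleton decomposition; it comes from the coarser order-$(s-2)$-prefix partition together with the first/last-occurrence count.
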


Lemma~\ref{lem:Sharir}(\ref{item:Sharir4}) is obtained by synthesizing ideas from Sharir~\cite{Sharir87}
and \Furedi{} and Hajnal~\cite{FurediH92}.  
Refer to Appendix~\ref{appendix:Sharir} for the proof of Lemma~\ref{lem:Sharir}.

\subsection{Orders 1 and 2}\label{sect:order12}

In the interest of completeness we shall reestablish the known bounds on order-1 and order-2 DS sequences,
in both their 2-sparse and blocked forms.

\begin{lemma}\label{lem:DS12}
(Davenport and Schinzel~\cite{DS65})
The extremal functions for order-1 and order-2 DS sequences are
\begin{align*}
\DS{1}(n) &= n\\
\DS{2}(n) &= 2n-1\\
\DS{1}(n,m) &= n+ m-1\\
\DS{2}(n,m) &= 2n+m-2		& \mbox{for $m\ge 2$}
\end{align*}
\end{lemma}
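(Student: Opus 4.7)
The plan is to dispatch the four formulas in order, leveraging a small amount of case analysis and a single inductive argument.

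For $\DS{1}(n)=n$, I would argue that in any 2-sparse $aba$-free sequence, each symbol appears at most once. Indeed, if a symbol $a$ occurred at positions $i<k$, then for every $i<j<k$ we would need $\sigma(j)=a$ (otherwise $\sigma(i),\sigma(j),\sigma(k)$ realizes $aba$); but then $\sigma(i)=\sigma(i+1)=a$ violates 2-sparsity. So $|S|\le n$, matched by $12\cdots n$. For $\DS{1}(n,m)=n+m-1$ the same observation shows that, without sparsity, each symbol's occurrences still form a contiguous run of positions; within each block a symbol appears at most once, so a run of $c_a$ copies of symbol $a$ spans exactly $c_a$ consecutive blocks. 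At each of the $n-1$ boundaries between consecutive runs, the last block of one run can either coincide with the first block of the next or be separated from it; only in the first case do we save a block. Thus $m\ge \sum_a c_a - (n-1) = |S|-n+1$, giving the upper bound, and the construction $a_1\mid a_1\mid\cdots\mid a_1 a_2\mid a_2 a_3\mid\cdots\mid a_{n-1}a_n$ (with extra singleton blocks of $a_1$ prepended to reach $m$ blocks) attains it.

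For $\DS{2}(n)=2n-1$, I would proceed by induction on $n$, with the base case $n=1$ trivial. Let $a=S(1)$ and let $p_1=1<p_2<\cdots<p_j$ be the occurrences of $a$, defining segments $X_i = S(p_i+1\,..\,p_{i+1}-1)$ for $1\le i\le j-1$ and $X_j = S(p_j+1\,..\,|S|)$. The key observation is that no non-$a$ symbol $b$ can appear in two different segments $X_{i_1}$ and $X_{i_2}$: any two such occurrences, together with $a$ at positions $p_{i_1+1}\le p_{i_2}$, would realize $abab$ (or $baba$). So the alphabets of the nonempty $X_i$'s partition the $n-1$ non-$a$ symbols. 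By 2-sparsity each of $X_1,\ldots,X_{j-1}$ is nonempty, while $X_j$ may be empty; hence the number $k$ of nonempty segments lies in $\{j-1,j\}$. Applying the inductive hypothesis to each nonempty $X_i$ (itself 2-sparse and $abab$-free) yields $|X_i|\le 2|\Sigma(X_i)|-1$, so
\[
|S| \;=\; j + \sum_{i} |X_i| \;\le\; j + 2(n-1) - k \;=\; 2n-2 + (j-k) \;\le\; 2n-1.
\]
The bitonic sequence $1\,2\,\cdots\,n\,(n-1)\,\cdots\,2\,1$ is 2-sparse and $abab$-free (any pair $a<b$ appears in the pattern $abba$), achieving length $2n-1$.

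Finally, for $\DS{2}(n,m)=2n+m-2$ with $m\ge 2$, the upper bound is immediate from Lemma~\ref{lem:Sharir}(\ref{item:Sharir1}) combined with $\DS{2}(n)=2n-1$, giving $\DS{2}(n,m)\le (m-1)+(2n-1)=2n+m-2$. For the matching lower bound I would take the two-block sequence $[1\,2\,\cdots\,n]\mid [n\,(n-1)\,\cdots\,1]$, which has length $2n$ and is $abab$-free (again any pair $a<b$ appears as $abba$), and append $m-2$ singleton blocks $\mid[1]\mid[1]\mid\cdots\mid[1]$. Appending 1's does not introduce an $abab$: for any $b\neq 1$ there is no $b$ after the last $1$ of the original construction, so no $1\,b\,1\,b$ can be completed. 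This yields an $m$-block $abab$-free sequence of length $2n+m-2$.

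The main obstacle is the inductive argument for $\DS{2}(n)=2n-1$; everything else is either a direct structural observation about $aba$-avoidance or a straightforward construction. The subtlety in that part is the bookkeeping that $j-k\le 1$: one must use 2-sparsity to guarantee that the interior segments $X_1,\ldots,X_{j-1}$ are all nonempty, which is exactly the slack needed to convert the naive bound $2n-2+j$ (had all segments contributed fresh alphabets) into the sharp $2n-1$.
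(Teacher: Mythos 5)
Your proposal is correct, but it takes a noticeably different route through the two nontrivial pieces. For $\DS{2}(n)=2n-1$ the paper argues by a \emph{deletion} step: looking at a closest pair of occurrences of some symbol $a$, it shows some $b$ sandwiched between them occurs exactly once (else $baba$, $abab$, or a closer $a$-pair), removes that $b$ (and, if necessary, one $a$), and gets $\DS{2}(n)\le\DS{2}(n-1)+2$. You instead perform a \emph{segment decomposition} by the first symbol: the intervals strictly between consecutive $a$'s carry pairwise disjoint non-$a$ alphabets (any shared symbol would give an $abab$), the interior segments are nonempty by $2$-sparsity, and summing the inductive bounds over segments gives $j+2(n-1)-k\le 2n-1$ since $j-k\le 1$. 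Both are correct; yours is a direct divide-and-conquer in the spirit of the rest of the paper, while the paper's deletion argument is shorter and makes the inequality $\DS{2}(n)\le\DS{2}(n-1)+2$ visible. For the blocked bounds the paper simply cites Lemma~\ref{lem:Sharir}(\ref{item:Sharir1}), whereas you give a self-contained run-and-boundary count for $\DS{1}(n,m)$; that argument is fine, but note your lower-bound construction (a chain $[a_1a_2][a_2a_3]\cdots[a_{n-1}a_n]$ padded with singleton blocks) only reaches $m$ blocks when $m\ge n-1$. The paper's construction $[12\cdots n][n]^{m-1}$ works for every $m\ge 1$ and is slightly cleaner; your order-$2$ construction is essentially the paper's $[12\cdots n][n\cdots 1][1]^{m-2}$.
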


\begin{proof}
Let $S$ be a 2-sparse sequence with $n=\|S\|$.
If $|S| > n$ then there are two copies of some symbol, say $a$.
The $a$s cannot be adjacent, due to 2-sparseness, so $S$
must contain a subsequence $aba$, for some $b\neq a$.
Such an $S$ is not an order-1 DS sequence, hence $\DS{1}(n) \le n$.

If $S$ has order 2 then some symbol must appear exactly once.
To see this, consider the closest pair of occurrences of some symbol, say $a$.
If every symbol $b$ appearing between this pair of $a$s occurred twice in $S$
then $S$ would contain $baba, abab,$ or $abba$.  The first two are precluded since
$S$ has order 2 and the third violates the fact that the two $a$s are the {\em closest} such pair.
Thus, every symbol $b$ between the two $a$s occurs once.  Remove one such $b$;
if this causes the two $a$s to become adjacent, remove one of the $a$s.  What remains
is a 2-sparse sequence over an $(n-1)$-letter alphabet, so $\DS{2}(n) \le \DS{2}(n-1) + 2$. 
Since $\DS{2}(1)=1$ we have $\DS{2}(n) \le 2n-1$.

Lemma~\ref{lem:Sharir}(\ref{item:Sharir1}) and the bounds established above
imply $\DS{1}(n,m) \le n + m-1$ and $\DS{2}(n,m) \le 2n + m - 2$.
All these upper bounds are tight.  The unique extremal order-1, 2-sparse DS sequence 
is $123\cdots n$, which can be converted into an extremal $m$-block sequence
$[123\cdots n][n]^{m-1}$.  Brackets mark block boundaries.
There are exponentially many extremal DS sequences of order $2$,
each corresponding to an Euler tour around a rooted tree with vertex labels from $\{1,\ldots,n\}$.
For example, $123\cdots (n-1)n(n-1)\cdots 321$ and $1213141\cdots 1(n-1)1n1$ are extremal 2-sparse, 
order-2 DS sequences.  
The first corresponds to an Euler tour around a path, the second an Euler tour around a star.
The first sequence can be converted into an extremal $m$-block, order-2 DS sequence
$[12\cdots (n-1)n][n(n-1)\cdots 21][1]^{m-2}$, assuming that $m\ge 2$.  When there is only 1 block
we have $\DS{s}(n,1) = n$, regardless of the order $s$.
\end{proof}

\subsection{Nivasch's Recurrence}\label{sect:basic-ub}

Nivasch's~\cite{Nivasch10} upper bounds (\ref{eqn:Nivasch}) are a consequence of a recurrence for $\DS{s}$ 
that is stronger than one of Agarwal, Sharir, and Shor~\cite{ASS89}.  Here we 
present a streamlined version of Nivasch's recurrence.

\begin{recurrence}\label{thm:recurrence-even}
Let $m,n,$ and $s\ge 3$ be the block count, alphabet size, and order parameters.
For any $\Gm<m$, any block partition $\{\Lm_q\}_{1\le q\le \Gm}$,
and any alphabet partition $\{\Gn\}\cup \{\Ln_q\}_{1\le q\le \Gm}$,
where $m = \sum_q \Lm_q$ and $n=\Gn + \sum_q \Ln_q$, we have
\begin{equation}\label{eqn:even}
\DS{s}(n,m) \;\leq\; \sum_{q=1}^{\Gm} \DS{s}(\Ln_q,\Lm_q) \;+\; 2\cdot \DS{s-1}(\Gn,m) \;+\; \DS{s-2}(\DS{s}(\Gn,\Gm) - 2\Gn, m)\nonumber
\end{equation}
\end{recurrence}

\begin{proof}
We adopt the notation and definitions from Section~\ref{sect:notation}, where
$S$ is an extremal order-$s$ DS sequence with $\|S\|=n$ and $\bl{S}=m$.
We shall bound $|S|$ by considering its four constituent subsequences
$\LS,\GfS,\GlS,$ and $\GmS$.
%

Each $\LS_q$ is an order-$s$ DS sequence, therefore the contribution of local symbols is 
$|\LS|\le \sum_{q=1}^{\Gm} \DS{s}(\Ln_q,\Lm_q)$.
We claim each $\GfS_q$ is an order-$(s-1)$ DS sequence.  By virtue of being categorized as {\em first} in $\GS_q$,
every symbol in $\GfS_q$ appears at least once after $\GfS_q$.
Therefore an occurrence of an alternating sequence $\sigma_{s+1} = abab\cdots$ (length $s+1$),
in $\GfS_q$ would imply an occurrence of $\sigma_{s+2}$ in $S$, a contradiction.
By symmetry it also follows that $\GlS_q$ is an order-$(s-1)$ DS sequence, hence
$|\GfS|  = \sum_{q=1}^{\Gm-1} \DS{s-1}(\Gfn_q,\Lm_q)$ and
$|\GlS| = \sum_{q=2}^{\Gm} \DS{s-1}(\Gln_q,\Lm_q)$.  Since $\DS{s}$
is clearly superadditive\footnote{It is straightforward to show that
$\DS{s}(n',m') + \DS{s}(n'',m'') \le \DS{s}(n'+n'',m'+m''-1)$, for all $n',n'',m',m''$.}
we can bound these sums 
by $\DS{s-1}(\Gn,m-m_{\Gm})$ and $\DS{s-1}(\Gn,m-m_1)$.
(Note that $\sum_q \Gfn_q = \Gn$ and $\sum_q \Gln_q = \Gn$ as each sum counts 
each global symbol exactly once.)
The contribution of first and last symbols is therefore upper bounded by $2\cdot \DS{s-1}(\Gn,m)$.

The same argument shows that $\GmS_q$ is an order-$(s-2)$ DS sequence.
Symbols in $\GmS_q$ were categorized as {\em middle}, so an alternating subsequence
$\sigma_s = baba\cdots$ (length $s$) in $\GmS_q$, together with an $a$ preceding $\GmS_q$
and either an $a$ or $b$ following $\GmS_q$ (depending on whether $s$ is even or odd), 
yields an instance of $\sigma_{s+2}$ in $S$, a contradiction.
Thus the contribution of middle symbols
is
\begin{align}
|\GmS| &\le \sum_{q=2}^{\Gm-1} \DS{s-2}(\Gmn_q,m_q)\nonumber\\
&\le \DS{s-2}\paren{\sum_{q=2}^{\Gm-1} \Gmn_q, m-m_1-m_{\Gm}} & \mbox{\{superadditivity of $\DS{s-2}$\}}\nonumber\\
&\le \DS{s-2}(|\GS'|-2\Gn,m-m_1-m_{\Gm})\label{eqn:middle1}\\
&\le \DS{s-2}(\DS{s}(\Gn,\Gm)-2\Gn,m)\label{eqn:middle2}
\end{align}
Inequality (\ref{eqn:middle1}) follows from the fact that
$\sum_q \Gmn_q$ counts the length of $\GS'$, save the first
and last occurrence of each global symbol, that is, $2\Gn$ occurrences in total.
Since $\GS'$ is a subsequence of $S$, it too is an order-$s$ DS sequence,
so $|\GS'| \le \DS{s}(\Gn,\Gm)$.  Inequality (\ref{eqn:middle2}) follows.\\
\end{proof}

Recurrence~\ref{thm:recurrence-even} offers us the freedom to choose the block
partition $\{m_q\}_{1\le q\le \Gm}$ but it does not suggest what the optimal partition
might look like.  One natural starting place~\cite{HS86,ASS89,Nivasch10} is
to always choose $\Gm=2$, partitioning the sequence into 2 intervals each 
containing $m/2$ blocks.  This choice leads to $O(n + m\log^{s-2}m)$ upper bounds on $\DS{s}(n,m)$,
which is $O(n+m)$ if the alphabet/block density $n/m=\Omega(\log^{s-2}m)$.
Call this Analysis (1).
Given Analysis (1) we can conduct a stronger Analysis (2) by selecting $\Gm = m/\log^{s-2} m$,
so each interval contains $\log^{s-2} m$ blocks.  The $\DS{s}(\Gn,\Gm)$ term is bounded
via Analysis (1) (that is, $\DS{s}(\Gn,\Gm) = O(\Gn + \Gm\log^{s-2}\Gm) = O(\Gn + m)$) 
and the remaining terms bounded inductively via analysis Analysis (2).
This leads to bounds of the form $\DS{s}(n,m) = O(n + m\poly(\log^* m))$.
By iterating this process, Analysis ($i$) gives bounds of the form 
$O(n + m\poly(\log^{[i-1]} m))$.\footnote{(the $[i-1]$ here being short for $i-1$ $\star$s)}
We cannot conclude that $\DS{s}(n,m) = O(n+m)$ since
the constant hidden by the asymptotic notation, call it $\mu_{s,i}$, 
increases with $i$ and $s$.  

The discussion above is merely meant to foreshadow the analysis
of Recurrence~\ref{thm:recurrence-even} and subsequent 
Recurrences~\ref{lem:feather-rec}, \ref{thm:recurrence-odd}, \ref{rec:double-feather}, 
and \ref{rec:DS-double-feather}; see Appendices~\ref{appendix:agreeable} and \ref{sect:appendix:order5}.
We have made every attempt to segregate recurrences and structural arguments
from their quantitative analyses, which are important but nonetheless rote. 
As a consequence, Ackermann's function, its various inverses,
and quantities such as $\{\mu_{s,i}\}$ will be introduced as late as possible.

\subsection{The Evolution of Recurrence~\ref{thm:recurrence-even}}

The statement of Recurrence \ref{thm:recurrence-even} is simple, and
arguably cannot be made simpler.  We feel it is worthwhile to recount how it was 
assembled over the years in the 
works of~\cite{HS86,Sharir87,ASS89,Klazar99,Nivasch10}.

When $s$ is fixed the function $\DS{s}(n)$ depends only on one parameter, $n$,
a situation that would not ordinarily lead to expressions involving ``$\alpha$'', 
which is most naturally expressed as a function of two independent
parameters.\footnote{In graph algorithms these parameters typically correspond to nodes 
and edges~\cite{Tar79b,LengauerT79,Chaz00a},
in matrix problems~\cite{KK90,K92} to rows and columns,
and in data structures they may correspond to
elements and queries~\cite{Tar75,G85},
query time and preprocessing time~\cite{PetInvAck},
or input size and storage space~\cite{Yao82,AS87,CR91}.}
Hart and Sharir's~\cite{HS86} insight was to recognize an additional parameter
$m$ (the block count) and obtain bounds on $\DS{s}(n)$ via bounds on $\DS{s}(n,m)$.
See Lemma~\ref{lem:Sharir}.

Implicit in Hart and Sharir's analysis is a classification of symbols into local and global,
and of global occurrences into first, middle, and last.\footnote{This part of their analysis
is ostensibly about nodes and path compressions, not blocks and symbols.}
Agarwal, Sharir, and Shor~\cite{ASS89} made this local/global and 
first/middle/last classification
explicit, and arrived at a recurrence very close to Recurrence~\ref{thm:recurrence-even}.\footnote{Sharir~\cite{Sharir87}
split global occurrences into two categories---first and non-first---which leads to a near-linear upper bound 
of $\DS{s}(n) < n\cdot \alpha(n)^{O(\alpha(n))^{s-3}}$.}
However, they did not bound the contribution of global middle occurrences in the same way.
Whereas $\GmS_q$ is an $m_q$-block sequence,
it can be converted to 2-sparse one by removing up to $m_q-1$ repeated symbols
at block boundaries.
By Lemma~\ref{lem:Sharir}(\ref{item:Sharir1},\ref{item:Sharir2})
\[
|\GmS_q| < m_q + \DS{s-2}(\Gmn_q) \;\le\; m_q + \gamma_{s-2}(\Gmn_q)\cdot \Gmn_q \;\le\; m_q + \gamma_{s-2}(n)\cdot \Gmn_q
\]
In other words, when ``contracting'' $\GmS$ to form $\GmS'$, the shrinkage factor
is at most $\gamma_{s-2}(n)$.  A similar statement holds for first and last occurrences,
where the shrinkage factor is at most $\gamma_{s-1}(n)$.
This leads to a recurrence~\cite[p. 249]{ASS89} that {\em forgets} the role of $m$ when analyzing global  
occurrences.
\[
\DS{s}(n,m) \;\le\; \sum_{q=1}^{\Gm} \DS{s}(\Ln_q,m_q) \:+\: 2\cdot \gamma_{s-1}(n)\cdot n \:+\: \gamma_{s-2}(n)\cdot\DS{s}(\Gn,\Gm) \:+\: O(m)
\]
Nivasch's recurrence~\cite[Recurrence 3.1]{Nivasch10} improves that of Agarwal, Sharir, and Shor~\cite{ASS89} 
by not forgetting that $\GmS$ is an $m$-block sequence.
In particular, $|\GmS| \le \sum_q \DS{s-2}(\Gn_q,m_q)$ where $|\GmS'| < \sum_q \Gn_q \le \DS{s}(\Gn,\Gm)$.
Recurrence~\ref{thm:recurrence-even} is substantively no different than that of \cite{Nivasch10}
but it is more succinct, for two reasons.  First, the superadditivity of $\DS{s}$ lets us bound the number
of middle occurrences with the single term $\DS{s-2}(\DS{s}(\Gn,\Gm)-2\Gn,m)$.\footnote{One might think it would be dangerous
to bound middle occurrence with one aggregated term since we ``forget'' 
that $\GmS$ is partitioned into $\Gm-2$ order-$(s-2)$ DS sequences.  Doing this does not affect the solution
of $\DS{s}(n,m)$ asymptotically.}
Second, the function equivalent to $\DS{s}(n,m)$ from~\cite{ASS89,Nivasch10} is 
the extremal function of order-$s$ DS sequences that are {\em both} 2-sparse and have $m$ blocks.
This small change introduces $O(m)$ terms in~\cite[Recurrence 3.1]{Nivasch10} and \cite[p. 249]{ASS89}
since the derived sequences $\GS,\GS',$ and $\{\LS_q,\GfS_q,\GmS_q,\GlS_q\}_{1\le q\le \Gm}$ are not necessarily 2-sparse,
and must be made 2-sparse by removing $O(m)$ symbols at block boundaries.

Recurrence~\ref{thm:recurrence-even} 
could be made yet more succinct by removing the ``$-2\Gn$'' from the estimation of global middle occurrences.
This would not affect the solution asymptotically, but keeping it is essential for obtaining 
bounds on $\DS{3}(n)$ tight to the leading constant.

\section{Derivation Trees}\label{sect:derivation-tree}

A {\em derivation tree} $\Tree(S)$ for an $m$-block sequence $S$ is a rooted, ordered tree
whose nodes are identified with the blocks encountered in recursively decomposing $S$,
as in Section~\ref{sect:notation} and Recurrence~\ref{thm:recurrence-even}.
Let $\block(u)$
be the block associated with $u\in \Tree(S)$.  The leaf level of $\Tree(S)$ coincides
with $S$, that is, the $p$th leaf of $\Tree(S)$ holds the $p$th block of $S$.
As we are sometimes indifferent to the 
order of symbols within a block, $\block(v)$ is often treated as a set.
We assume without much loss in generality that no symbol appears just once in $S$.
As usual, we adopt the sequence decomposition notation from Section~\ref{sect:notation}.

\paragraph{Base Case.}
Suppose $S=\beta_1\beta_2$ is a two block sequence, where each block
contains the whole alphabet $\Sigma(S)$.
The tree $\Tree(S)$ consists of three nodes $u,u_1,$ and $u_2$, where $u$ is the parent of $u_1$ and $u_2$,
$\block(u_1)=\beta_1$, $\block(u_2)=\beta_2$, and $\block(u)$ does not exist.  For every $a\in\Sigma(S)$ call $u$
its {\em crown} and $u_1$ and $u_2$ its left and right {\em heads}, respectively.
These nodes are denoted $\crown_{|a}, \lefthead_{|a},$ and $\righthead_{|a}$.

\paragraph{Inductive Case.}
If $S$ contains $m>2$ blocks, choose an $\Gm < m$ and an arbitrary block partition $\{m_q\}_{1\le q\le \Gm}$.
Inductively construct derivation trees $\GTree = \Tree(\GS')$ and $\{\LTree_q\}_{1\le q\le \Gm}$, where 
$\LTree_q = \Tree(\LS_q)$, then identify the root of $\LTree_q$ (which has no block) with the $q$th leaf of $\GTree$.
Finally, place the blocks of $S$ at the leaves of $\Tree$.  This last step
is necessary since only local symbols appear in the blocks of $\{\LTree_q\}$ whereas
the leaves of $\Tree$ must be identified with the blocks of $S$.
Note that nodes at or above the leaf level of $\GTree$ carry only global symbols in their blocks
and that internal nodes in $\{\LTree_q\}$ carry only local symbols in their blocks. 
Local and global symbols only mingle at the leaf level of $\Tree$.

The crown and heads of each symbol $a\in \Sigma(S)$ are inherited from $\GTree$, if $a$ is global,
or some $\LTree_q$ if $a$ is local to $S_q$.  See Figure~\ref{fig:cr-lhe-rhe} for an illustration.

\begin{figure}
\centerline{\scalebox{.3}{\includegraphics{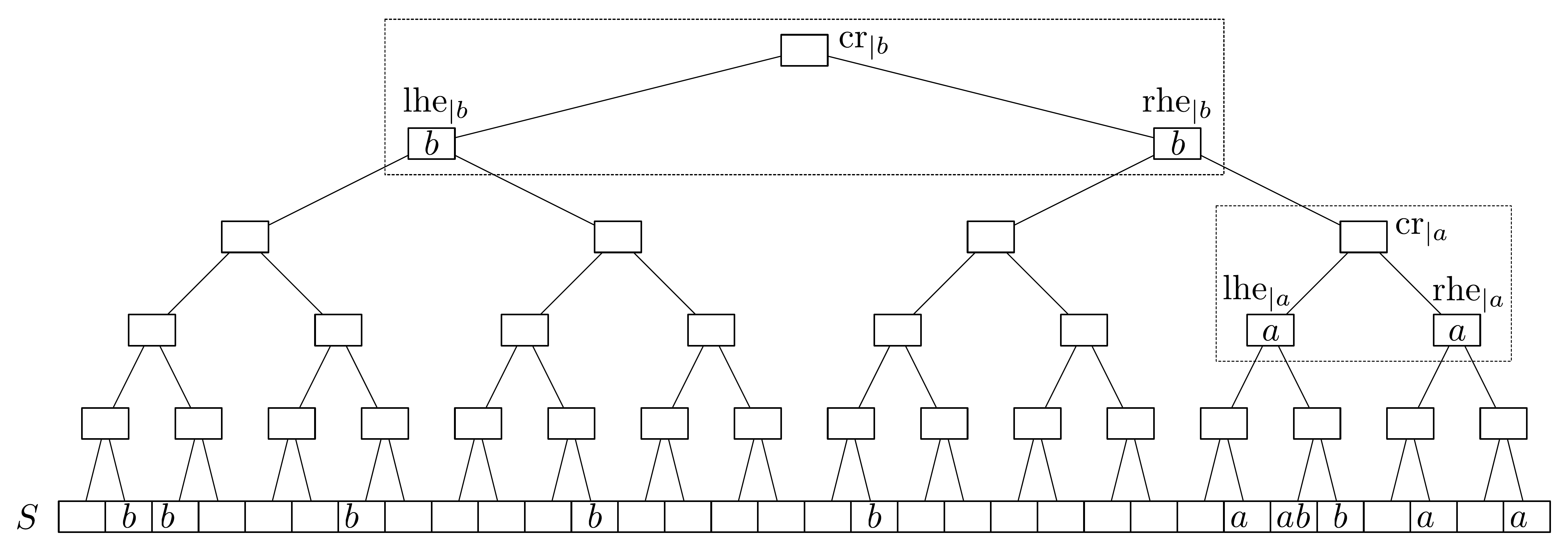}}}
\caption{\label{fig:cr-lhe-rhe}A derivation tree $\Tree(S)$ for a 32-block sequence
$S$.  The tree is generated by always choosing $\Gm=2$ and the uniform block partition 
$m_1=m_2=m/2$, where $m>2$ is the number of blocks in the given sequence.
The frames isolate the base case derivation trees
that assign the crown and heads for symbols $a,b\in\Sigma(S)$.}
\end{figure}

\ignore{
\paragraph{Base Cases}
When $i\in\{0,1\}$ the construction of $\Tree(S,c,i)$ is independent of $c$.

At $i=0$ we only consider a 2-block sequence $S=\beta_1\beta_2$, where each block
contains the whole alphabet $\Sigma(S)$.
The tree $\Tree(S,c,0)$ consists of three nodes $u,u_1,u_2$, where $u$ is the parent of $u_1$ and $u_2$, $\block(u)$
is empty, $\block(u_1)=\beta_1$, and $\block(u_2)=\beta_2$.  For every $a\in\Sigma(S)$ call $u$
its {\em crown} and $u_1$ and $u_2$ its left and right {\em heads}, respectively.
These nodes are denoted $\crown_{|a}, \lefthead_{|a},$ and $\righthead_{|a}$.

At $i=1$ we let $j'$ be minimum such that $\bl{S} \le a_{1,j'}$.
If $j'=1$ ($\bl{S}=2$) then $\Tree(S,c,1)=\Tree(S,c,0)$,
otherwise take $S=S_1S_2$ to be the uniform block partition with width $a_{1,j'-1}=a_{1,j'}/2$.
Let $\GTree = \Tree(\GS',c,0)$ be the three-node derivation tree for $\GS'$
and let $\LTree_1,\LTree_2$ be the local trees, where $\LTree_q = \Tree(\LS_q,c,1)$.
The tree $\Tree = \Tree(S,c,1)$ is formed by identifying the root of $\LTree_1$ with the first leaf of $\GTree$
and the root of $\LTree_2$ with the second leaf of $\GTree$,
then placing the blocks of $S$ at the leaves of $\Tree$.  This last step
is necessary since only local symbols appear in $\LTree_1$ and $\LTree_2$ whereas
the leaves of $\Tree$ must be identified with the blocks of $S$, which include both local and global symbols.
See Figure~\ref{fig:cr-lhe-rhe} for an illustration.

The case $i>1,j=1$ is handled just like the $i=1$ case, where the derivation tree $\Tree(S,c,i)$ will have height 
$\log(a_{i,1}^c) = \log(a_{1,c}) = c$.

\paragraph{Inductive Case}
The construction when $i,j>1$ is similar to the $i=1$ case except we take the $c$ parameter into account.
Recall that $j$ is minimum such that $\bl{S} \le a_{i,j}^c$ and $w=a_{i,j-1}$.
The tree $\Tree(S,c,i)$ is formed by composing $\GTree = \Tree(\GS',c,i-1)$ and 
$\LTree_1,\ldots,\LTree_{\Gm}$, where $\LTree_q = \Tree(\LS_q,c,i)$, $\Gm = \ceil{\bl{S}/w^c}$, 
and $\bl{\LS_q} = w^c$ if $q<\Gm$.
To be more specific, we identify the root of $\LTree_q$ (whose block is empty) with the $q$th leaf of $\GTree$, 
then assign the blocks of $S$ to the leaves of $\Tree$.  
The crown and heads of each symbol $a\in \Sigma(S)$ are inherited from $\GTree$, if $a$ is global,
or some $\LTree_q$ if $a$ is local to $S_q$.

}

\subsection{Anatomy of the Tree}

The {\em projection of $\Tree$ onto $a\in\Sigma(S)$}, denoted $\Tree_{|a}$, is the tree
on the node set $\{\crown_{|a}\} \cup \{v \in \Tree \:|\: a\in\block(v)\}$ that inherits the ancestor/descendant
relation from $\Tree$, that is, the parent of $v$ in $\Tree_{|a}$, where $v\not\in \{\crown_{|a},\lefthead_{|a},\righthead_{|a}\}$, 
is $v$'s nearest strict ancestor $u$ for which $a\in \block(u)$.
For example, in Figure~\ref{fig:cr-lhe-rhe} $\Tree_{|a}$  consists
of $\crown_{|a}$, its children $\lefthead_{|a},\righthead_{|a}$, and four grandchildren at the leaf level of $\Tree$.

\begin{definition} (Anatomy)
\begin{itemize}
\item The leftmost and rightmost leaves of $\Tree_{|a}$ are {\em wingtips},
denoted $\lefttip_{|a}$ and $\righttip_{|a}$.
\item The left and right {\em wings}
are those paths in $\Tree_{|a}$ extending from $\lefthead_{|a}$ to $\lefttip_{|a}$ and from 
$\righthead_{|a}$ to $\righttip_{|a}$.
\item Descendants of $\lefthead_{|a}$ and $\righthead_{|a}$ in $\Tree_{|a}$ are called {\em doves} and {\em hawks}, respectively.
\item A child of a wing node that is not itself on the wing is called a {\em quill}.
\item A leaf is called a {\em feather} if it is the rightmost descendant of a dove quill or leftmost descendant of a hawk quill.
\item Suppose $v$ is a node in $\Tree_{|a}$.  
Let $\head_{|a}(v)$ be the head ancestral to $v$ and $\otherhead_{|a}(v)$ be the other head.
Let $\tip_{|a}(v)$ and $\othertip_{|a}(v)$ be the wingtips descending from $\head_{|a}(v)$ and $\otherhead_{|a}(v)$.
Let $\wing_{|a}(v)$ be the nearest wing node ancestor of $v$,
$\quill_{|a}(v)$ the quill ancestral to $v$,
and $\feather_{|a}(v)$ the feather descending from $\quill_{|a}(v)$.
See Figure~\ref{fig:feathers} for an illustration.
\end{itemize}
Once $a\in\Sigma(S)$ is known or specified, we will use these terms (feather, wingtip, etc.) to refer to nodes in $\Tree_{|a}$
or to the occurrences of $a$ within those blocks.  For example, an occurrence of $a$ in $S$ would 
be a feather if 
it appears in a block $\block(v)$ in $S$, where $v$ is a feather in $\Tree_{|a}$.
\end{definition}

Note that the nodes $\head_{|a}(v),\wing_{|a}(v),\quill_{|a}(v),\tip_{|a}(v),$ and $\feather_{|a}(v)$ are not
necessarily distinct.  It may be that $\head_{|a}(v)=\wing_{|a}(v)$,
and it may be that $v = \quill_{|a}(v) = \feather_{|a}(v)$ if $v$'s parent in $\Tree_{|a}$ is $\wing_{|a}(v)$.

\begin{figure}
\centering\scalebox{.3}{\includegraphics{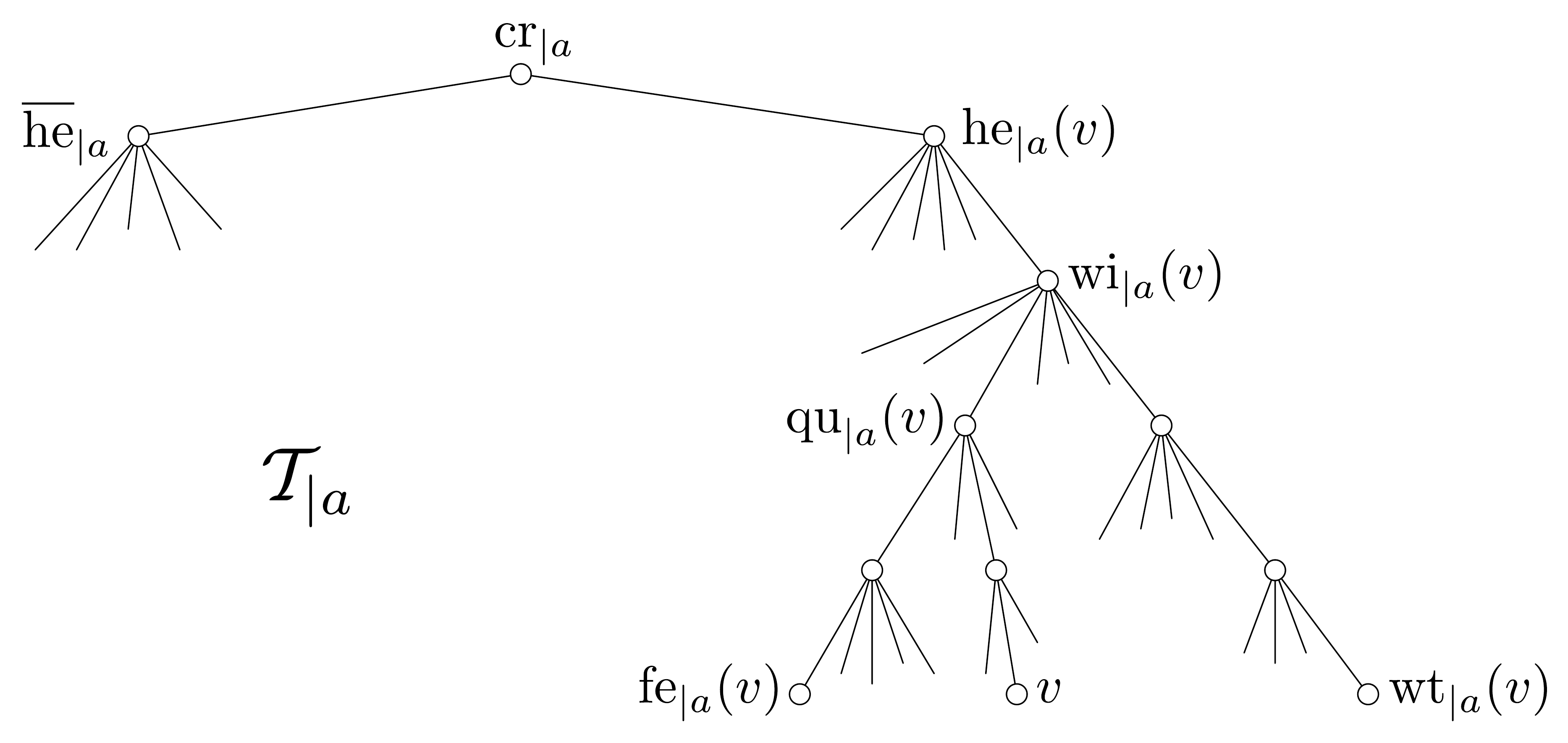}}
\caption{\label{fig:feathers} In this example $v$ is a hawk leaf in $\Tree_{|a}$ since its head $\head_{|a}(v) = \righthead_{|a}$ is the right child of $\crown_{|a}$.
Its wing node $\wing_{|a}(v)$, wingtip $\tip_{|a}(v)$, quill $\quill_{|a}(v)$, and feather $\feather_{|a}(v)$ are indicated.
}
\end{figure}

Lemma~\ref{lem:l-node} identifies one property of $\Tree$ used in the proof
of Lemma~\ref{lem:nested}.

\begin{lemma}\label{lem:l-node}
Suppose that on a leaf-to-root path in $\Tree$ we encounter
nodes $u,v,x,$ and $y$ (the last two possibly identical), 
where $u,x\in \Tree_{|a}$ and $v,y \in \Tree_{|b}$.
It must be that $a\in\block(v)$ and therefore $v\in\Tree_{|a}$.
\end{lemma}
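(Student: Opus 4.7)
My plan is to prove the lemma by induction on the construction depth of $\Tree$. The base case, where $\Tree$ is a three-node tree obtained from a two-block sequence, is vacuous since no leaf-to-root path carries four distinct nodes. For the inductive step I would use the decomposition whereby $\Tree$ is assembled from $\GTree = \Tree(\GS')$ and $\{\LTree_q = \Tree(\LS_q)\}_{1 \le q \le \Gm}$, with the root of $\LTree_q$ identified with the $q$th leaf of $\GTree$; denote this shared node $u_q$. Every internal node of $\Tree$ falls into exactly one of three regions: a $\GTree$-node strictly above every $u_q$ whose block lies in $\GSigma$; a node $u_q$ with block $\beta_q \subseteq \GSigma$; or an internal $\LTree_q$-node strictly below $u_q$ whose block lies in $\LSigma_q$. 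Leaves carry blocks of $S$ and may contain symbols of either kind.

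I would then split on the top-level locality of $a$ and $b$. When $a, b \in \GSigma$, I would show that the four nodes effectively lie on a leaf-to-root path of $\GTree$: replace $u$ by $u_q$ if $u$ itself is not already a $\GTree$-node (the promotion preserves $u \in \Tree_{|a}$ because $a \in \GSigma$ together with $a$ appearing in $S_q$ forces $a \in \GSigma_q \subseteq \beta_q = \block(u_q)$). Then either $v = u_q$, in which case $a \in \block(v)$ is immediate, or $v$ lies strictly inside $\GTree$ and the inductive hypothesis applied to $\GTree$ delivers $a \in \block(v)$. When $a$ and $b$ are both local to the same interval $S_q$, all four nodes lie in $u_q$'s subtree, which is a copy of $\LTree_q$ with its leaves overwritten by blocks of $S$; induction on $\LTree_q$ finishes this case.

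The remaining configurations I would rule out as unsatisfiable. If $a, b$ are local to distinct intervals $S_q$ and $S_r$, then $u, x$ live in $u_q$'s subtree while $v, y$ live in the disjoint $u_r$-subtree, contradicting the fact that $u, v, x, y$ lie on a single leaf-to-root path. If exactly one of $a, b$ is global --- say $a$ is local to $S_q$ and $b$ is global --- then $x \in \Tree_{|a}$ pins $x$ strictly below $u_q$ (the block $\beta_q$ of $u_q$ is purely global, and internal $\GTree$-blocks contain no local symbols), whence $v$ also lies strictly below $u_q$ in $\LTree_q$'s interior, where blocks are purely local and cannot carry the global $b$; the symmetric configuration is handled identically. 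The main obstacle will be the bookkeeping in the (global, global) case, where the promotion $u \mapsto u_q$ must faithfully produce a leaf-to-root path of $\GTree$ satisfying the lemma's hypotheses, and one must verify that crown nodes (which carry no block) cannot play the role of $v$---this follows because every node strictly between two block-bearing nodes of a fixed $\Tree_{|c}$ must itself carry a block.
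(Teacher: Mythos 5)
Your plan follows the paper's proof closely: the same decomposition of $\Tree$ into $\GTree$ and the $\{\LTree_q\}$, the same promotion of $u$ to the $\GTree$-leaf $u_q$ ancestral to it, and the same appeal to induction on $\GTree$ or $\LTree_q$. The paper organizes the case analysis by where $v$ sits (internal to some $\LTree_q$, internal to $\GTree$, or a $\GTree$-leaf), whereas you organize it by the top-level locality of $a$ and $b$; these are equivalent framings, and yours requires ruling out a few extra configurations as unsatisfiable, but nothing essential changes. Two small points should be tightened. First, in your mixed local/global case the assertion that $x$ is \emph{strictly} below $u_q$ is not quite right, since $x$ could be $\crown_{|a}$, which for $a$ local to $S_q$ can coincide with $u_q$; the argument still goes through because $v$ is then strictly below $u_q$ and is either a leaf (impossible, as $u < v$) or an $\LTree_q$-internal node whose block is purely local, but the crown possibility should be acknowledged. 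You should also note the asymmetry between the two mixed subcases: for $a$ global, $\crown_{|a}$ lies strictly above the leaf level of $\GTree$, while for $b$ local to $S_q$, $\crown_{|b}$ can equal $u_q$, so ``handled identically'' deserves a sentence of justification. Second, your closing observation about crowns is correct in spirit but phrased in a stronger form than needed: what rules out $v = \crown_{|b}$ is simply that $\crown_{|b}$ is the root of $\Tree_{|b}$, so the existence of a strict ancestor $y > v$ with $y \in \Tree_{|b}$ already forces $v \neq \crown_{|b}$; you do not need two block-bearing nodes flanking $v$.
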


\begin{proof}
Consider the decomposition of $\Tree$ into a global derivation tree $\GTree$ and local derivation trees $\{\LTree_q\}$.
If $v$ were an internal node in some $\LTree_q$ then $b$ would be classified
as local.  This implies $y\in \LTree_q$ as well and the claim follows by induction on the construction of $\LTree_q$.
If $v$ were an internal node in $\GTree$ then let $u'$ be the leaf of $\GTree$ ancestral to $u$.
The nodes $u',v,x,y\in\GTree$ also satisfy the criteria of the lemma; the claim follows by induction on the construction of $\GTree$.
Thus, we can assume $v$ is a leaf of $\GTree$ and $u$ is a leaf of $\Tree$.  See Figure~\ref{fig:uvxy}.
By construction all global symbols in $\block(u)$ also appear in $\block(v)$.
Since $x\in\GTree$, the symbol $a$ is classified as global and must appear in $\block(v)$.

\begin{figure}
\centering\scalebox{.32}{\includegraphics{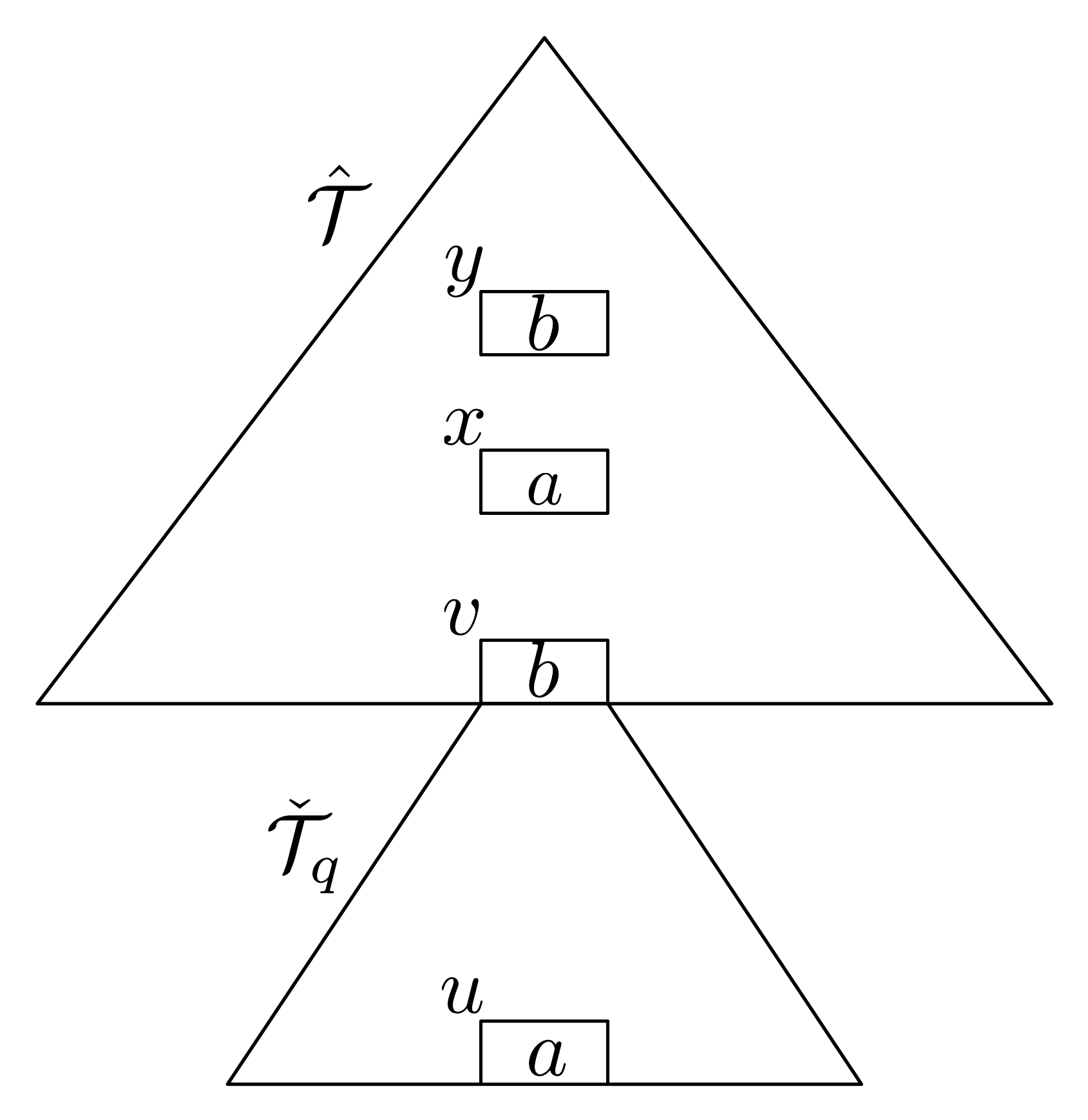}}
\caption{\label{fig:uvxy} The case where $v$ is a leaf of $\GTree$.  Both $x$ and $y$
are necessarily in $\GTree$, which implies that $a$ and $b$ are global, and further implies
that $u$ is a leaf of $\LTree_q$ since global symbols do not appear in the internal nodes
of $\LTree_q$.  All global symbols of $u$ also appear in $v$.
}
\end{figure}
\end{proof}

\subsection{Habitual Nesting}

Suppose a block $\beta$ in $S$ contains two symbols $a,b$ that are not wingtips, 
that is, they make neither their first nor last appearance in $\beta$.
We call $a$ and $b$ {\em nested in $\beta$} if $S$ contains either $ab\,\beta \, ba$
\ or \ $ba\, \beta\, ab$ and call them {\em interleaved in $\beta$} otherwise, that is,
if the occurrences of $a$ and $b$ in $S$ take the form $a^* b^* \,\beta\; a^* b^*$ or $b^*a^*\,\beta\; b^*a^*$.
Lemma~\ref{lem:nested} is the critical structural lemma used 
in our analysis.  It provides us with simple criteria for nestedness.

\begin{lemma}\label{lem:nested}
Suppose that $v\in\Tree(S)$ is a leaf and $a,b$ are symbols in a block $\block(v)$ of $S$.
If the following two criteria are satisfied then $a$ and $b$ are nested in $\block(v)$.
\renewcommand{\theenumi}{\roman{enumi}}
\begin{enumerate}
\item $v$ is not a wingtip in either $\Tree_{|a}$ or $\Tree_{|b}$.\label{criterion:middle}
\item $v$ is not a feather in either $\Tree_{|a}$ or $\Tree_{|b}$.\label{criterion:feather}
\end{enumerate}
\end{lemma}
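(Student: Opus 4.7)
The plan is a case analysis on whether $v$ is a dove or hawk in each of $\Tree_{|a}$ and $\Tree_{|b}$. By reversing $S$ (which swaps doves and hawks) and interchanging $a\leftrightarrow b$, the four possibilities collapse to two substantive cases: (A) $v$ is a dove in $\Tree_{|a}$ and a hawk in $\Tree_{|b}$, and (B) $v$ is a dove in both projections. Two simple preliminary observations — one local, one global — set up the argument. Locally, if $v$ is a non-wingtip non-feather dove in $\Tree_{|a}$ then, directly from the definitions, $\feather_{|a}(v)$ is an $a$-leaf lying strictly to the right of $v$ inside the $\Tree$-subtree of $\quill_{|a}(v)$, while $\lefttip_{|a}$ is an $a$-leaf lying to the left of $v$ inside the $\Tree$-subtree of $\wing_{|a}(v)$ (the hawk case is mirrored). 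So condition (ii) supplies an $a$-occurrence close to $v$ on the quill side and condition (i) supplies one on the wing side, and analogously for $b$. Globally, $\crown_{|a}, \crown_{|b}, \wing_{|a}(v), \quill_{|a}(v), \wing_{|b}(v), \quill_{|b}(v)$ all lie on the root-to-$v$ path in $\Tree$ and are therefore totally ordered by the ancestor relation, which permits WLOG assumptions on which crown sits higher and which quill is deeper.

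Case (A) is the easier one. The $a$- and $b$-wings point in opposite directions, so the witnesses $\lefttip_{|a}$ and $\feather_{|b}(v)$ both lie to the left of $v$ while $\feather_{|a}(v)$ and $\righttip_{|b}$ both lie to the right. Fixing WLOG which of $\crown_{|a}, \crown_{|b}$ is higher in $\Tree$, every $b$-leaf is pushed into one of the $a$-head subtrees, so in particular $\righttip_{|b}$ is separated from $\righttip_{|a}$ by the corresponding $\lefthead_{|a}$/$\righthead_{|a}$-boundary in $S$. A short application of Lemma~\ref{lem:l-node} along the root-to-$v$ path then fixes the relative order of $\lefttip_{|a}$ and $\lefttip_{|b}$ on the left, and the five witnesses together exhibit the subsequence $ab\,\block(v)\,ba$ or $ba\,\block(v)\,ab$.

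Case (B) is the main obstacle. Assume WLOG $\crown_{|a}$ is weakly above $\crown_{|b}$ in $\Tree$; together with $v$ being a dove in both projections this forces $\crown_{|b}$ into $\lefthead_{|a}$'s $\Tree$-subtree, so every $b$-leaf lies within the $a$-dove range of $S$. Now $\quill_{|a}(v)$ and $\quill_{|b}(v)$ are both ancestors of $v$ and hence comparable along the path to the root, and I would analyze the two subcases (which quill is the deeper ancestor) separately. In each subcase Lemma~\ref{lem:l-node} constrains the alternation of $a$- and $b$-containing ancestors along that path, which in turn rules out the strictly interleaved arrangements $a^*b^*\,\block(v)\,a^*b^*$ and $b^*a^*\,\block(v)\,b^*a^*$ and forces the nested pattern $ab\,\block(v)\,ba$ or $ba\,\block(v)\,ab$ by reading off the leaf order of $\Tree$. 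The delicate point is that the naive picture ``$\feather_{|a}(v)$ close right, $\feather_{|b}(v)$ close left, $\lefttip_{|a}$/$\lefttip_{|b}$ far left, additional $a$'s/$b$'s far right'' on its own only guarantees interleaved local patterns $ab\,\block(v)\,ab$ or $ba\,\block(v)\,ba$; the nested subsequence has to be extracted \emph{globally} by using Lemma~\ref{lem:l-node} to propagate the ancestry constraint from the root-to-$v$ path out to the leaf order.
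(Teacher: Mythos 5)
Your case split is genuinely different from the paper's and, as sketched, leaves several gaps that are not mere details.

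The paper fixes WLOG that $\crown_{|b}$ is ancestral to $\crown_{|a}$ and that $v$ is a dove in $\Tree_{|a}$, then splits on \emph{where the $b$s lie} among four intervals $I_1,\ldots,I_4$ cut out by the $a$s in $\tip_{|a}(v)$ and in the rightmost leaf $u$ below $\wing_{|a}(v)$. That location-based dichotomy is exactly the dichotomy forced by \emph{assuming $a,b$ are interleaved}. It then \emph{derives} dove-or-hawk status of $v$ in $\Tree_{|b}$ as an inference, rather than conditioning on it. In the ``$b$ only in $I_2,I_4$'' case it produces five witness blocks $\tip_{|a}(v),\tip_{|b}(v),v,\feather_{|b}(v),\othertip_{|a}(v)$ in the nested order; in the ``$b$ only in $I_1,I_3$'' case (where $v$ turns out to be a hawk in $\Tree_{|b}$) it uses Lemma~\ref{lem:l-node} to conclude $a\in\block(\quill_{|b}(v))$, contradicting that $\wing_{|a}(v)$ is the least common $a$-ancestor. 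Note that this is the reverse of your difficulty assessment: the dove/hawk arrangement (your Case (A)) is the one that needs Lemma~\ref{lem:l-node} and ends in a contradiction, while the dove/dove arrangement (your Case (B)) is the one with clean explicit witnesses.

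The concrete gaps in your version. First, in Case (A) you name four witness leaves but nothing in the sketch pins down their left-to-right order, and without that the pattern could equally well be $ba\,\block(v)\,ba$ or $ab\,\block(v)\,ab$ (e.g.\ if $\feather_{|b}(v)$ lies to the left of $\lefttip_{|a}$, or $\feather_{|a}(v)$ lies to the left of $\righttip_{|b}$ — both possible depending on whether $\quill_{|b}(v)$ is above or below $\wing_{|a}(v)$). Saying Lemma~\ref{lem:l-node} ``fixes the relative order of $\lefttip_{|a}$ and $\lefttip_{|b}$'' is not substantiated: that lemma's conclusion is a block-membership fact along a root-to-leaf path, not an ordering of two leaves. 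Second, Case (B), which you rightly call the main obstacle, is described at the level of intent (``analyze subcases on which quill is deeper,'' ``propagate the ancestry constraint'') with no verifiable steps. Third, you do not address ties such as $\tip_{|a}(v)=\tip_{|b}(v)$: the paper's interval construction resolves this automatically, since $\tip_{|b}(v)$ is shown to lie in $I_2$, an interval that begins strictly after the $a$ in $\tip_{|a}(v)$, so the $a$ must precede the $b$ inside the shared block; your leaf-order reading has no analogous mechanism. Taken together, the proposal identifies the right ingredients (wingtips, feathers, quills, Lemma~\ref{lem:l-node}, crown ancestrality), but the argument that is supposed to extract the nested pattern from them is missing in both cases.
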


\begin{proof}
Without loss of generality we can assert two additional criteria.
\renewcommand{\theenumi}{\roman{enumi}}
\begin{enumerate}
\setcounter{enumi}{2}
\item $\crown_{|b}$ is equal to or strictly ancestral to $\crown_{|a}$.\label{criterion:rootlevel}
\item $v$ is a dove in $\Tree_{|a}$.\label{criterion:dove}
\end{enumerate}
By Criterion (\ref{criterion:dove}) the leftmost leaf descendant of $\wing_{|a}(v)$ is $\tip_{|a}(v)$.  Let $u$
be its rightmost leaf descendant.
According to Criteria (\ref{criterion:middle},\ref{criterion:feather})
$v$ is distinct from both $\tip_{|a}(v)$ and $u$ since $u$ must be a feather.
We partition the sequence $S$ outside of $\block(v)$ into the following four intervals.
\begin{enumerate}
\item[$I_1$:] everything preceding the $a$ in $\block(\tip_{|a}(v))$,
\item[$I_2$:] everything from the end of $I_1$ to $\block(v)$,
\item[$I_3$:] everything from $\block(v)$ to the $a$ in $\block(u)$, and
\item[$I_4$:] everything following $I_3$.
\end{enumerate}
Since $v$ is not a wingtip of $\Tree_{|b}$, by Criterion (\ref{criterion:middle}), there must be occurrences of $b$ in $S$ both before and after
$\block(v)$.  If, contrary to the claim, $a$ and $b$ are not nested in $\block(v)$,
all other occurrences of $b$ must appear exclusively in $I_1$ and $I_3$ or exclusively
in $I_2$ and $I_4$.  We show that both possibilities lead to contradictions.
Figures~\ref{fig:I2I4} and \ref{fig:I1I3} illuminate the proof.

\paragraph{Case 1: $b$ does not appear in $I_1$ or $I_3$}
According to Criterion (\ref{criterion:middle}) the left wingtip $\lefttip_{|b}$ of $\Tree_{|b}$ is distinct from $v$,
and therefore appears in interval $I_2$.  Since $\lefttip_{|b}$ and $v$ are descendants
of $\wing_{|a}(v)$, which is a strict descendant of $\crown_{|a}$, which, by Criterion (\ref{criterion:rootlevel}),
is a descendant of $\crown_{|b}$, it must also be that $\lefttip_{|b}$ and $v$ descend
from the same child of $\crown_{|b}$, that is,
\renewcommand{\theenumi}{\roman{enumi}}
\begin{enumerate}
\setcounter{enumi}{4}
\item $v$ is a dove in $\Tree_{|b}$ and therefore $\tip_{|b}(v) = \lefttip_{|b}$.\label{criterion:dove-Tb}
\end{enumerate}
We shall argue below that
\begin{figure}
\centering\scalebox{.33}{\includegraphics{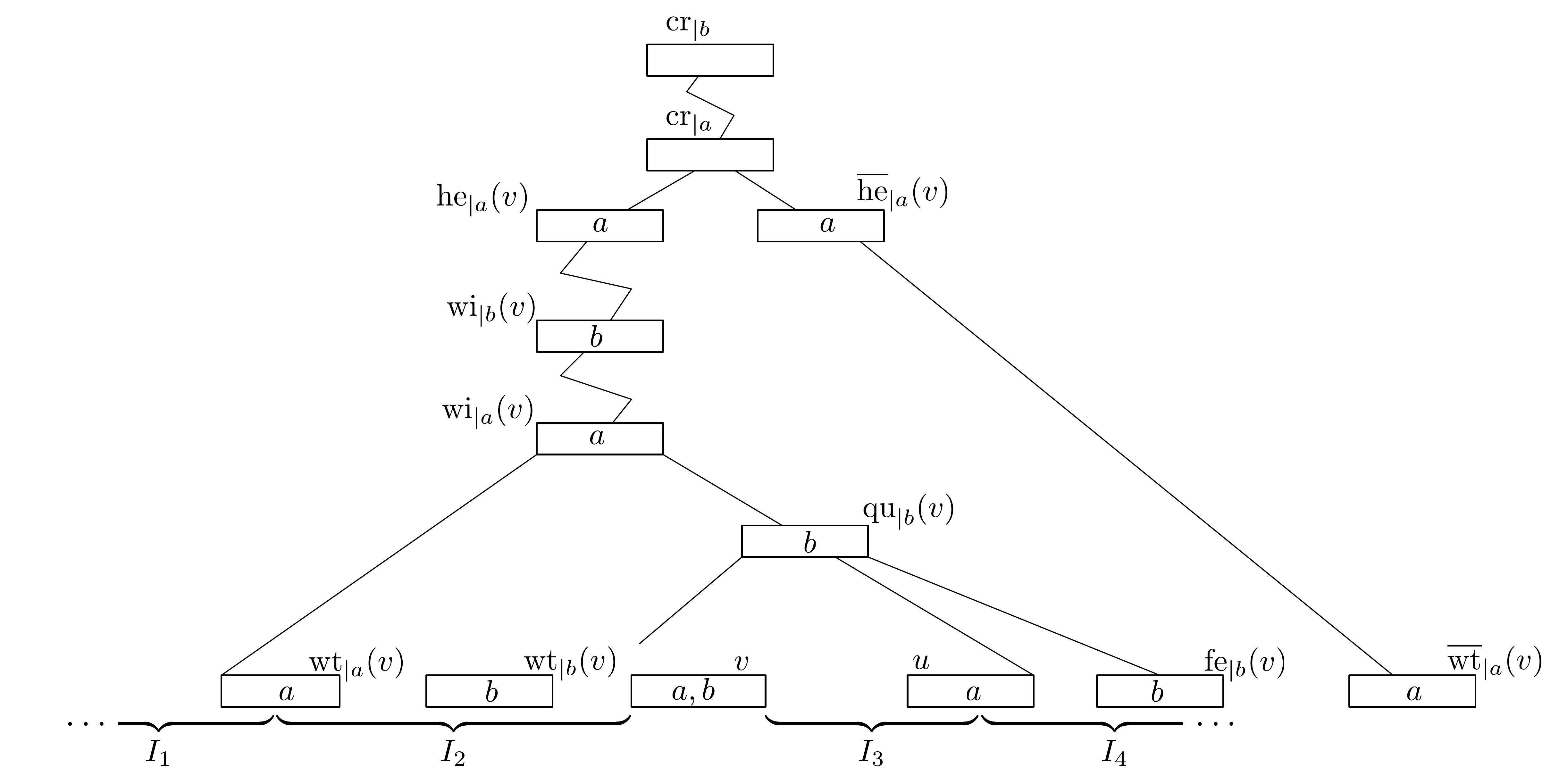}}
\caption{\label{fig:I2I4}Boxes represent nodes in $\Tree(S)$ and their associated blocks.
The blocks at the leaf-level correspond to those in $S$.
In Case 1 all occurrences of $b$ outside of $\block(v)$ appear in intervals $I_2$ and $I_4$.
Contrary to the depiction, it may be that $\crown_{|a}$ and $\crown_{|b}$ are identical, that 
$\tip_{|a}(v)$ and $\tip_{|b}(v)$ are identical,
that $u$ and $\feather_{|b}(v)$ are identical, and that $\wing_{|b}(v)$ is not a strict ancestor of $\wing_{|a}(v)$.
}
\end{figure}
\begin{figure}
\centering\scalebox{.35}{\includegraphics{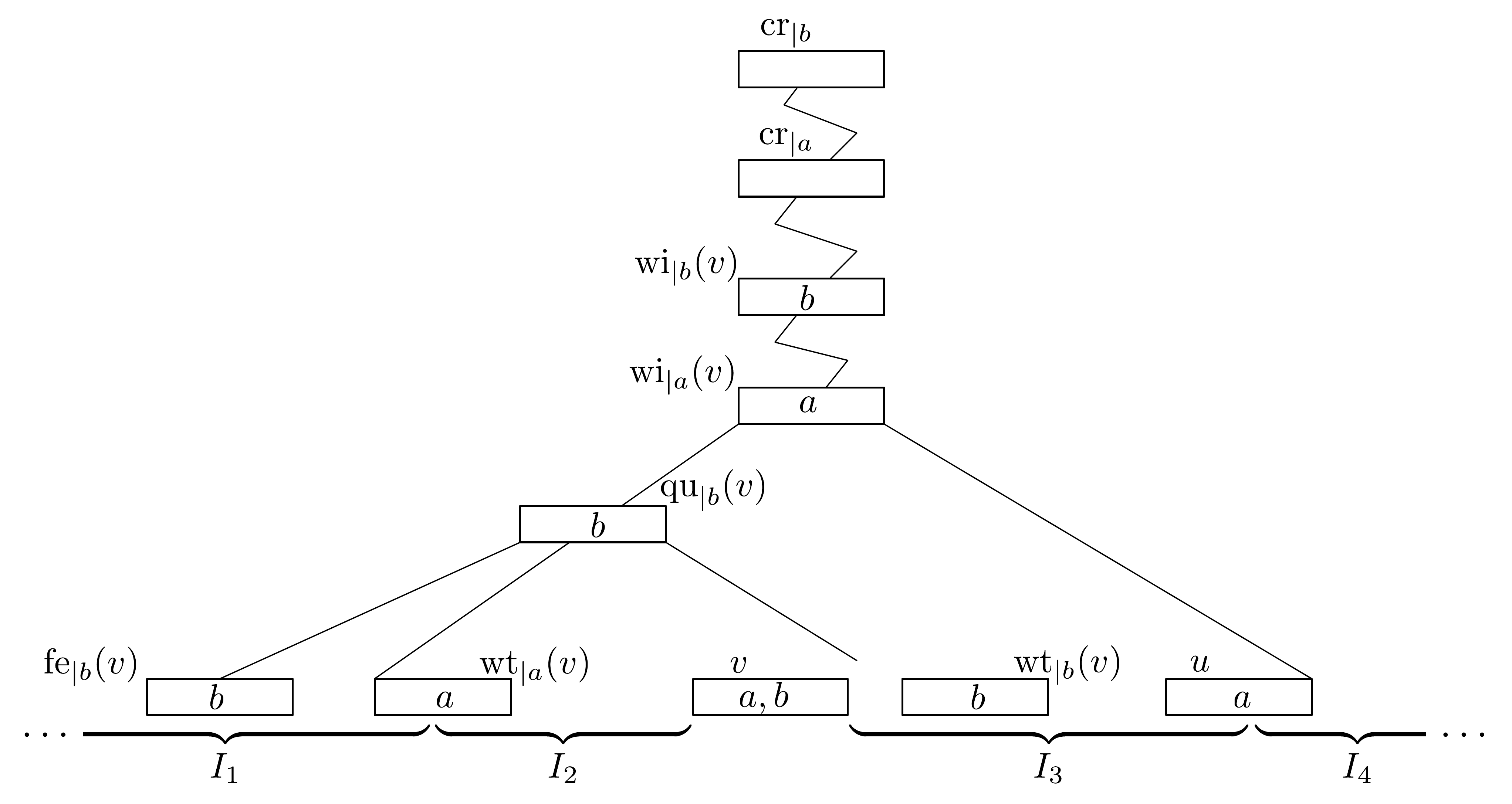}}
\caption{\label{fig:I1I3}In Case 2 all occurrences of $b$ outside of $\block(v)$ appear in intervals $I_1$ and $I_3$.
}
\end{figure}
\renewcommand{\theenumi}{\roman{enumi}}
\begin{enumerate}
\setcounter{enumi}{5}
\item In $\Tree$, $\quill_{|b}(v)$ is a strict descendant of $\wing_{|a}(v)$ and a strict ancestor of $u$,
and $\feather_{|b}(v)$ lies in interval $I_4$.\label{criterion:dovequill-Tb}
\end{enumerate}
The least common ancestor of $v$ and $\tip_{|b}(v)$ in $\Tree_{|b}$ is by definition $\wing_{|b}(v)$.
The quill $\quill_{|b}(v)$ is a child of $\wing_{|b}(v)$ not on a wing,
hence $\quill_{|b}(v)$ cannot be ancestral to $\tip_{|b}(v)$, and hence $\quill_{|b}(v)$ must be a strict descendant 
of $\wing_{|a}(v)$. 
By Criterion (\ref{criterion:feather}) and Inference (\ref{criterion:dove-Tb}), $\feather_{|b}(v)$ is the rightmost leaf descendant
of $\quill_{|b}(v)$ and distinct from $v$.  However, by supposition $I_3$ contains no occurrences of $b$,
so $\feather_{|b}(v)$ must lie in interval $I_4$.  
For $\quill_{|b}(v)$ to have descendants in both $I_2$ and $I_4$ it must be a strict ancestor
of $u$ in $\Tree$.
As we explain below, a consequence of Inference (\ref{criterion:dovequill-Tb}) is that
\renewcommand{\theenumi}{\roman{enumi}}
\begin{enumerate}
\setcounter{enumi}{6}
\item $\othertip_{|a}(v)$ lies to the right of $\feather_{|b}(v)$.\label{criterion:othertip}
\end{enumerate}
According to Inference (\ref{criterion:dovequill-Tb}) $\quill_{|b}(v)$ is a descendant of $\wing_{|a}(v)$,
which is a descendant of $\head_{|a}(v)$. According to Criterion (\ref{criterion:dove}) $\head_{|a}(v)$ is the
{\em left} head of $\Tree_{|a}$.  Since $\othertip_{|a}(v)$ is a descendant of $\otherhead_{|a}(v)$, the right sibling
of $\head_{|a}(v)$, $\othertip_{|a}(v)$ must lie to the right of $\feather_{|b}(v)$.

Let us review the situation.  Scanning the leaves from left to right we see the blocks $\tip_{|a}(v)$,
$\tip_{|b}(v)$,
$v$,
$u$,
$\feather_{|b}(v)$,
and
$\othertip_{|a}(v)$.
It may be that $\tip_{|a}(v)$ and $\tip_{|b}(v)$ are equal 
and it may be that $u$ and $\feather_{|b}(v)$ are equal.
If either of these cases hold then the $a$ 
precedes the $b$ in the given block.
The blocks $\tip_{|a}(v),\tip_{|b}(v),v,\feather_{|b}(v),\othertip_{|a}(v)$ certify that $a$ and $b$
are nested in $\block(v)$.

\paragraph{Case 2: $b$ does not appear in $I_2$ or $I_4$}
By Criterion (\ref{criterion:middle}) the right wingtip $\righttip_{|b}$ is distinct from $v$
and must therefore lie in $I_3$.  Following the same reasoning from Case 1 
we can deduce that 
\renewcommand{\theenumi}{\roman{enumi}}
\begin{enumerate}
\setcounter{enumi}{7}
\item $v$ is a hawk in $\Tree_{|b}$.\label{criterion:hawk-Tb}
\item In $\Tree$, $\quill_{|b}(v)$ is strict descendant of $\wing_{|a}(v)$ and a strict ancestor of $\tip_{|a}(v)$.\label{criterion:hawkquill-Tb}
\end{enumerate}
Inference (\ref{criterion:hawk-Tb}) follows since $v$ and $\righttip_{|b}$ must be descendants of
the same head in $\Tree_{|b}$.  This implies that $\feather_{|b}(v)$ is the {\em leftmost} leaf descendant
of $\quill_{|b}(v)$.  Since $\feather_{|b}(v)$ is distinct from $v$ and interval $I_2$ is free of $b$s,
it must be that $\feather_{|b}(v)$ lies in $I_1$ and that 
$\quill_{|b}(v)$ is a strict descendant of $\wing_{|a}(v)$ and a strict ancestor of $\tip_{|a}(v)$.
Inference (\ref{criterion:hawkquill-Tb}) follows.  See Figure~\ref{fig:I1I3}.

It follows from Criterion (\ref{criterion:rootlevel}) and Inference (\ref{criterion:hawkquill-Tb})
that on a leaf-to-root path one encounters the nodes $\tip_{|a}(v)$, $\quill_{|b}(v)$, $\wing_{|a}(v)$, and $\crown_{|b}$, in that order.
Lemma~\ref{lem:l-node} implies that $a\in\block(\quill_{|b}(v))$.
We have deduced that
$\quill_{|b}(v)$ is in $\Tree_{|a}$, is a strict descendant
of $\wing_{|a}(v)$, and is a common ancestor of $\tip_{|a}(v)$ and $v$.  
This contradicts the fact that $\wing_{|a}(v)$ is the {\em least} common ancestor of $v$ and $\tip_{|a}(v)$ in $\Tree_{|a}$.
\end{proof}

Note that Lemma~\ref{lem:nested} applies to any blocked sequence and an associated derivation tree.
It has nothing to do with Davenport-Schinzel sequences as such.

\section{A Recurrence for Odd Orders}\label{sect:odd}

Lemma~\ref{lem:nested} may be rephrased as follows.  Every blocked sequence $S$ 
is the union of four sequences:
two comprising wingtips (first occurrences and last occurrences, each of length $n$), 
one comprising all feathers,
and one comprising non-wingtip non-feathers.
The last sequence is distinguished by the property that each pair of symbols in any
block is nested with respect to $S$, 
which is a ``good'' thing if we are intent on giving strong upper bounds on odd-order sequences.
The sequence comprising feathers is ``bad'' in this sense, therefore we must obtain better-than-trivial 
upper bounds on its length if this strategy is to bear fruit.

Recall that {\em feather} is a term that can be applied to nodes in some $\Tree_{|b}$
or the corresponding {\em occurrences} of $b$ in the given sequence $S$.
This definition is with respect to one 
derivation tree $\Tree$ for $S$, which is not necessarily the best one.
In Recurrences~\ref{lem:feather-rec} and \ref{thm:recurrence-odd} it is useful
to reason about the {\em optimal} derivation tree.
Let $\Tree^*(S)$ be the derivation tree for $S$ that minimizes the number of occurrences in $S$ classified
as feathers.

\begin{recurrence}\label{lem:feather-rec}
Define $\Feather{s}(n,m)$ to be the maximum number of feathers in any
order-$s$, $m$-block DS sequence $S$ over an $n$-letter alphabet,
with respect to the optimal derivation tree $\Tree^*(S)$.
When $m\le 2$ we have $\Feather{s}(n,m)=0$.
For any $\Gm<m$, any block partition $\{m_q\}_{1\le q\le \Gm}$,
and any alphabet partition $\{\Gn\}\cup \{\Ln_q\}_{1\le q\le \Gm}$, we have
\begin{align*}
\Feather{s}(n,m)
&\le\; \sum_{q=1}^{\Gm} \Feather{s}(\Ln_q,m_q) 
					\,+\, \Feather{s}(\Gn,\Gm) \,+\, 2\cdot\DS{s-1}(\Gn,m) 
\end{align*}
\end{recurrence}

\begin{proof}
When $m\le 2$, $\Feather{s}(n,m)$ is trivially 0 since 
every occurrence in $S$ is a wingtip, and feathers are not wingtips.
When $m>2$ we shall decompose $S$ as in Section~\ref{sect:notation}.
The choice of $\Gm$ and the block partition $\{m_q\}_{1\le q\le \Gm}$ are not necessarily
those of the optimal derivation tree, but we do not need them to be.  
We are only interested in an upper bound on $\Feather{s}(n,m)$.
Let $\GTree^*$ and $\{\LTree_q^*\}_{1\le q\le \Gm}$ be the optimal
derivation trees for $\GS'$ and $\{\LS_q\}_{1\le q\le \Gm}$, and let $\Tree$ be
their composition, with the blocks of $S$ placed at $\Tree$'s leaves.

The number of occurrences of local feathers with respect to $\{\LTree^*_q\}$
is at most $\sum_q \Feather{s}(\Ln_q,m_q)$.
An occurrence of $a\in\block(v)$ in $\GS$ will be a dove feather  in $\Tree$
if either
(i) $v$ is the rightmost child of a dove feather in $\GTree^*_{|a}$
or 
(ii) $v$ is a non-wingtip child of the left wingtip in $\GTree^*_{|a}$,
which corresponds to an occurrence of $a$ in $\GfS$.
The same statement is true of hawk feathers, swapping the roles of left and right
and substituting $\GlS$ for $\GfS$.
There are at most $\Feather{s}(\Gn,\Gm)$ feathers of 
type (i) and, since $\GfS$ and $\GlS$ are order-$(s-1)$ DS sequences, less than 
$2\cdot \DS{s-1}(\Gn,m)$ of type (ii).
\end{proof}

We now have all the elements in place to provide a recurrence for odd-order Davenport-Schinzel sequences.

\begin{recurrence}\label{thm:recurrence-odd}
Let $m,n,$ and $s$ be the block count, alphabet size, and order parameters, where $s\ge 5$ is odd.
For any $\Gm<m$, any block partition $\{m_q\}_{1\le q\le \Gm}$, 
and any alphabet partition $\{\Gn\}\cup \{\Ln_q\}_{1\le q\le \Gm}$, we have
\begin{align*}
\DS{s}(n,m) &\leq \sum_{q=1}^{\Gm} \DS{s}(\Ln_q,m_q) \,+\, 2\cdot \DS{s-1}(\Gn,m) 
 \,+\, \DS{s-2}(\Feather{s}(\Gn,\Gm),m) \,+\, \DS{s-3}(\DS{s}(\Gn,\Gm), m)
\end{align*}
\end{recurrence}

\begin{proof}
As always, we adopt the notation from Section~\ref{sect:notation}.
Define $\GTree^*, \{\LTree^*_q\},$ and $\Tree$ as in the proof of Recurrence~\ref{lem:feather-rec}.
In Recurrence~\ref{thm:recurrence-even} we partitioned $S$ into local and global symbols and partitioned the occurrences 
of global symbols into first, middle, and last.  We now partition the middle occurrences one step further.
Define $\GfeatherS'$ and $\GnonfeatherS'$ to be the subsequences of $\GS'$ 
consisting of feathers (according to $\GTree^*$) and non-feather, non-wingtips, respectively.
That is, $|\GS'| = |\GfeatherS'| + |\GnonfeatherS'| + 2\Gn$.
In an analogous fashion define $\GfeatherS$ and $\GnonfeatherS$
to be the subsequences of $\GS$ consisting of children of occurrences in 
$\GfeatherS'$ and $\GnonfeatherS'$.
The sequences $\GfS$ and $\GlS$ represent the children of dove and hawk wingtips in $\GTree^*$.
Thus, $|S| = \sum_{q} |\LS_q| + |\GfS| + |\GlS| + |\GfeatherS| + |\GnonfeatherS|$.

The local sequences $\{\LS_q\}$ are order-$s$ DS sequences.  
According to the standard argument $\GfS$ and $\GlS$ are order-$(s-1)$ DS sequences
and $\GfeatherS = \GfeatherS_1\cdots\GfeatherS_{\Gm}$ is obtained from $\GfeatherS'$
by substituting for its $q$th block an order-$(s-2)$ DS sequence $\GfeatherS_q$.
From the superadditivity of $\DS{s-2}$ it follows that $|\GfeatherS| \le \DS{s-2}(|\GfeatherS'|,m) \le \DS{s-2}(\Feather{s}(\Gn,\Gm), m)$.

We claim that $\GnonfeatherS=\GnonfeatherS_1\cdots\GnonfeatherS_{\Gm}$
is obtained from $\GnonfeatherS'$ by substituting for its $q$th block an order-$(s-3)$ DS sequence $\GnonfeatherS_q$,
which, if true, would imply that $|\GnonfeatherS| \le \DS{s-3}(|\GnonfeatherS'|,m) < \DS{s-3}(\DS{s}(\Gn,\Gm),m)$.
Suppose for the purpose of obtaining a contradiction that 
the $q$th block $\beta$ in $\GnonfeatherS'$ contains $a,b\in\GSigma$,
and that $\GnonfeatherS_q$ 
is not an order-$(s-3)$ DS sequence, that is, it contains 
an alternating subsequence $ab\cdots ab$ of length $s-1$.
Note that $s-1$ is even.   By definition $\beta$ is a non-feather, non-wingtip 
in both $\GTree^*_{|a}$ and $\GTree^*_{|b}$.
According to Lemma~\ref{lem:nested}, $a$ and $b$ must be nested in $\beta$, which implies that $S$ contains a subsequence
of the form
\begin{align*}
&a \cdots b \cdots \left| \cdots \overbrace{a \cdots b \cdots a \cdots b}^{s-1} \cdots \right| \cdots b \cdots a\\
\mbox{ or } \; \; \; \; \; \; \; &
 b \cdots a \cdots \left| \cdots \overbrace{a \cdots b \cdots a \cdots b}^{s-1} \cdots \right| \cdots a \cdots b
\end{align*}
where the portion between bars is in $S_q$.
In either case $S$ contains an alternating subsequence 
with length $s+2$, contradicting the fact that $S$ is an order-$s$ DS sequence.
\end{proof}

\subsection{Analysis of the Recurrences}\label{sect:analysis}

The dependencies between $\DS{}$ and $\Phi$ established by 
Recurrences~\ref{thm:recurrence-even}, \ref{lem:feather-rec}, and \ref{thm:recurrence-odd}
are rather intricate.  For even $s$, $\DS{s}$ is a function of $\DS{s}, \DS{s-1}$ and $\DS{s-2}$,
and for odd $s$, $\DS{s}$ is a function of $\DS{s}, \DS{s-1},\DS{s-2},\DS{s-3}$, and $\Feather{s}$
while $\Feather{s}$ is a function of $\Feather{s}$ and $\DS{s-1}$.

The proof of Lemma~\ref{lem:agreeable} is by induction over parameters: $s,n,c,i,$ and $j$, where $s$ is the order,
$n$ the alphabet size, $c\ge s-2$ a constant that determines how $\Gm$ and the block partition is chosen,
$i\ge 1$ is an integer,
and $j$ is minimal such that the block count $m\le a_{i,j}^c$.
Some level of complexity is therefore unavoidable.  Furthermore,
when $s\ge 5$ is odd, $\DS{s}$ is so sensitive to approximations of $\DS{s-3}$ that we must 
treat $s\in\{1,2,3,4,5\}$ as distinct base cases, and treat even and odd $s\ge 6$ as separate inductive cases. 
Given these constraints we feel our analysis is reasonably simple.

\begin{lemma}\label{lem:agreeable}
Let $s\ge 1$ be the order parameter, $c\ge s-2$ be a constant,
and $i\ge 1$ be an arbitrary integer.
The following upper bounds on
$\DS{s}$ and $\Feather{s}$
hold for all $s\ge 1$ and all odd $s\ge 5$, respectively.
Define $j$ to be maximum such that $m\le a_{i,j}^c$.
\begin{align*}
\DS{1}(n,m) &= n + m-1						& \mbox{$s=1$}\\
\DS{2}(n,m) &= 2n + m-2						& \mbox{$s=2$}\\
\DS{3}(n,m) &\le (2i+2)n + (3i-2)cj(m-1)			& \mbox{$s=3$}\\
\DS{s}(n,m) &\le \M{s}{i}\paren{n + (cj)^{s-2}(m-1)}		& \mbox{all $s\ge 4$}	\\ 
\Feather{s}(n,m) &\le \N{s}{i}\paren{n + (cj)^{s-2}(m-1)}	& \mbox{odd $s\ge 5$}
\intertext{The values $\{\M{s}{i},\N{s}{i}\}$ are defined as follows, where $t=\floor{\frac{s-2}{2}}$.}
\istrut{5}\M{s}{i} 	&= \left\{
	\begin{array}{l}
		2^{{i+t+3}\choose t}	- 3(2(i+t+1))^t			\\
		\istrut{6}\fr{3}{2}(2(i+t+1))^{t+1}2^{{i+t+3}\choose t}
	\end{array}
	\right.
	&
	\begin{array}{r}
		\mbox{even $s\ge 4$\hcm[-.18]}\\
		\istrut{6}\mbox{odd $s\ge 5$\hcm[-.18]}
	\end{array}\\
\istrut{8}
\N{s}{i} &= 4\cdot 2^{{i+t+3}\choose t}		
	&
	\begin{array}{r}
		\mbox{odd $s\ge 5$\hcm[-.18]}
	\end{array}
\end{align*}
\end{lemma}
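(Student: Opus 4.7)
The plan is a multi-parameter induction on the triple $(s,i,j)$ in lexicographic order, with the $s\in\{1,2\}$ rows supplied already by Lemma~\ref{lem:DS12}. For $s=3$ and even $s\ge 4$ I would apply Recurrence~\ref{thm:recurrence-even}; for odd $s\ge 5$ I would apply Recurrence~\ref{thm:recurrence-odd} together with Recurrence~\ref{lem:feather-rec}, establishing the bounds on $\DS{s}(n,m)$ and $\Feather{s}(n,m)$ simultaneously. The canonical partition takes intervals of width $w^c$ where $w=a_{i,j-1}$. Each local block count then satisfies $m_q\le a_{i,j-1}^c$, triggering the $(s,i,j-1)$ inductive case, while the global block count satisfies $\Gm\le m/w^c\le a_{i-1,w}^c$ by the identity $a_{i,j}=w\cdot a_{i-1,w}$, triggering the $(s,i-1,w)$ case. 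Small-$m$ boundary cases ($j\le 2$) are handled separately via Lemma~\ref{lem:Sharir}.

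After substituting the inductive bounds into the recurrences, each inequality that must close takes the form $\text{coef}(n)\cdot n + \text{coef}(m-1)\cdot(cj)^{s-2}(m-1)$. The dominant relation controlling $\text{coef}(n)$ is $\M{s}{i}\gtrsim \M{s-2}{i}\cdot\M{s}{i-1}$, coming from the global-middle term $\DS{s-2}(\DS{s}(\Gn,\Gm),m)$ in Recurrence~\ref{thm:recurrence-even} (and its odd-order analogue). Under the ansatz $\M{s}{i}\sim g^{\binom{i+t+3}{t}}$, Pascal's identity $\binom{i+t+3}{t}=\binom{i+t+2}{t}+\binom{i+t+2}{t-1}$ delivers exactly the desired multiplicative behaviour; the base $g$ is $2$ for even $s$ (bottoming out at $\DS{2}$) and proportional to $i$ for odd $s$ (bottoming out at $\DS{3}$, whose $n$-coefficient is $2i+2$). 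The polynomial prefactors $(2(i+t+1))^t$ for even $s$ and $(2(i+t+1))^{t+1}$ for odd $s$, together with the subtracted correction $-3(2(i+t+1))^t$, are tuned so as to absorb exactly the lower-order contributions coming from $\DS{s-1}(\Gn,m)$, from $\Feather{s}$, and from the $(cj)^{s-2}(m-1)$ bookkeeping, so that the coefficient of $(cj)^{s-2}(m-1)$ in the output is again of the stated form.

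For odd $s\ge 5$ the bound on $\Feather{s}$ must be carried through the induction alongside $\DS{s}$. Recurrence~\ref{lem:feather-rec} gives $\N{s}{i}$ in terms of $\N{s}{i}$ at smaller $(i,j)$ and $\DS{s-1}$; because the recursion lacks any cross-feather multiplicative coupling, the ansatz $\N{s}{i}=4\cdot 2^{\binom{i+t+3}{t}}$ is the natural one and its verification is parallel to, but simpler than, the $\M{s}{i}$ calculation. The feather bound then feeds into Recurrence~\ref{thm:recurrence-odd} via the $\DS{s-2}(\Feather{s}(\Gn,\Gm),m)$ term, which is precisely what forces the extra polynomial factor in the odd-order $\M{s}{i}$.

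The main obstacle is this simultaneous bookkeeping: three tightly coupled inequalities must close on the nose, and the engineered correction terms leave almost no slack for the arithmetic. The most delicate spot is the transition at $s=5$: the feather contribution in Recurrence~\ref{thm:recurrence-odd} is exactly what inflates $\M{5}{i}$ by an extra polynomial factor relative to $\M{4}{i}$, and this inflation is ultimately the combinatorial source of the $\alpha(n)$ gap between $\DS{4}$ and $\DS{5}$ in Theorem~\ref{thm:main}. Verifying that the $s=3$ base (linear in $i$) feeds correctly into the $s=5$ inductive step through the $\DS{s-3}$ call, and that the even/odd recurrences mesh across every $s$, is the rote but error-prone heart of the argument.
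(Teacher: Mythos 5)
Your overall structure---simultaneous induction on $(s,i,j)$ with the three coupled recurrences, the uniform block partition of width $w^c$ with $w=a_{i,j-1}$, and Pascal's identity driving the exponential behaviour---matches the paper's proof in Appendix~\ref{appendix:agreeable}. But your middle paragraph misidentifies the key inductive relation at odd $s$ in a way that, if carried to completion, would only recover Nivasch's bound~(\ref{eqn:Nivasch}), not the sharper one claimed in the lemma. You write that the dominant relation is $\M{s}{i}\gtrsim\M{s-2}{i}\M{s}{i-1}$, sourced from $\DS{s-2}(\DS{s}(\Gn,\Gm),m)$ ``and its odd-order analogue,'' and that the exponential base $g$ is proportional to $i$ for odd $s$. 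Neither holds for the paper's argument. In Recurrence~\ref{thm:recurrence-odd} the global-middle contribution is not a single $\DS{s-2}(\DS{s}(\Gn,\Gm),m)$ term: it splits into $\DS{s-2}(\Feather{s}(\Gn,\Gm),m)$ for feathers and $\DS{s-3}(\DS{s}(\Gn,\Gm),m)$ for the mutually nested non-feathers. The multiplicative chain that controls the exponent is therefore $\M{s}{i}\gtrsim\M{s-3}{i}\M{s}{i-1}$; it drops from odd $s$ to even $s-3$, bottoms out at $\M{2}{i}=2$, and gives base $g=2$ in \emph{both} parities---this is exactly the paper's improvement over Nivasch.

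The feather chain contributes $\M{s-2}{i}\N{s}{i-1}$, and because $\N{s}{i}$ is a pure base-$2$ exponential with no polynomial-in-$i$ inflation (Recurrence~\ref{lem:feather-rec} is additive, with no $\N\cdot\N$ or $\N\cdot\M$ coupling), this term costs only the multiplicative prefactor $\frac{3}{2}(2(i+t+1))^{t+1}$ in the odd-order $\M{s}{i}$, not a change of base. If the base were genuinely $\sim i$ the closed form would read $i^{\binom{i+t+3}{t}}$ rather than $\poly(i)\cdot 2^{\binom{i+t+3}{t}}$, and the odd-order entry in Theorem~\ref{thm:main} would not improve on~(\ref{eqn:Nivasch}). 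Your third paragraph correctly attributes the polynomial factor to the feathers, so the issue is an internal inconsistency you must resolve rather than a missing idea---but it must be resolved, because it is precisely the mechanism that distinguishes the new bound from the old one. Two smaller slips: for $s=5$ the $\DS{3}$ base enters through the feather term $\DS{s-2}=\DS{3}$, not through $\DS{s-3}=\DS{2}$ as your final sentence says; and the $i=1$ and $j=1$ base cases are handled by a direct argument (Lemma~\ref{lem:i=1}), not by Lemma~\ref{lem:Sharir}, which serves only to convert between $2$-sparse and blocked sequences when assembling Theorem~\ref{thm:main} from the lemma.
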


One may want to keep in mind that we will eventually substitute $\alpha(n,m)+ O(1)$ 
for the parameter $i$, and that ${i+t+O(1) \choose t} = i^t/t! + O(i^{t-1})$.
Lemma~\ref{lem:agreeable} will, therefore, imply bounds on $\DS{s}(n,m)$
analogous to those claimed for $\DS{s}(n)$ in Theorem~\ref{thm:main}.
The proof of Lemma~\ref{lem:agreeable} appears in Appendix~\ref{appendix:agreeable}.

\subsection{The Upper Bounds of Theorem~\ref{thm:main}}\label{sect:proofthmmain}

Fix $s\ge 3, n,m$ and let $c=s-2$.
For $i \ge 1$ let $j_i$ be minimum such that $m\le a_{i,j_i}^{c}$.
Lemma~\ref{lem:agreeable} implies that an order-$s$ DS sequence has length 
at most $\M{s}{i}(n + (cj_i)^{s-2} m)$.  
Choose $\iota$ to be minimum such that\footnote{We want $(cj_{\iota})^{s-2}m$
not to be the dominant term, so $(cj_{\iota})^{s-2}$ should be less than $\ceil{n/m}$.  On the other hand,
the first and second columns of Ackermann's function ($a_{i,1}$ and $a_{i,2}$) do not exhibit
sufficient growth, so $j_\iota$ must also be at least 3.}
$(cj_\iota)^{s-2} \le \max\{\f{n}{m}, (c\cdot 3)^{s-2}\}$.
One can show that $\iota = \alpha(n,m) + O(1)$.
By choice of $\iota$ it follows that $(cj_\iota)^{s-2} m = O(m+n)$, so 
$\DS{s}(n,m) = O((n+m)\M{s}{\iota})$.
According to Lemma~\ref{lem:agreeable}'s definition of $\M{s}{\iota}$, we  have
\begin{align*}
\DS{3}(n,m) &= O((n+m)\alpha(n,m))\\
\DS{4}(n,m) &= O\paren{(n+m)2^{\alpha(n,m)}}\\
\DS{5}(n,m) &= O\paren{(n+m)\alpha^2(n,m)2^{\alpha(n,m)}}\\
\DS{s}(n,m) &= (n+m)\cdot 2^{\alpha^t(n,m)/t! \,+\, O(\alpha^{t-1}(n,m))}	& \mbox{both even and odd $s\ge 6$, where $t=\ceil{\frac{s-2}{2}}$.}
\end{align*}
The bound on $\DS{5}(n,m)$ follows since
$\M{5}{\iota} = O(\iota^2 2^{\iota})$.
When $s\ge 6$ and $t\ge 2$,
$\M{s}{\iota} < \iota^{t+1}2^{\iota + t+ O(1) \choose t} = 2^{\iota^t/t! \,+\, O(\iota^{t-1})}$.

Theorem~\ref{thm:main} stated bounds on $\DS{s}(n)$ rather than $\DS{s}(n,m)$.
If it were known that extremal order-$s$ DS sequences consisted of $m=O(n)$ blocks we could
simply substitute $\alpha(n)$ for $\alpha(n,m)$ in the bounds above, but this is not known to be true.
According to Lemma~\ref{lem:Sharir}(\ref{item:Sharir2},\ref{item:Sharir4}), if $\gamma_s$ is such that
$\DS{s}(n)\le\gamma_{s}(n)\cdot n$ then $\DS{s}(n) \le \gamma_{s-2}(n)\cdot \DS{s}(n,2n-1)$
and $\DS{s}(n) \le \gamma_{s-2}(\gamma_s(n))\cdot \DS{s}(n,3n-1)$.
Applying Lemma~\ref{lem:Sharir} when $s\in\{3,4\}$ has no asymptotic affect since $\gamma_1=1$ and $\gamma_2=2$.
It has no {\em perceptible} effect when $s\ge 6$ since 
$\gamma_{s-2}(n)$ or $\gamma_{s-2}(\gamma_s(n))$ is dwarfed by the lower order terms in the exponent.
However, for $s\in\{3,5\}$ these reductions only show that
$\DS{3}(n) = O(n\alpha(n))$
and that
$\DS{5}(n) = O(n\alpha(\alpha(n))\alpha^2(n)2^{\alpha(n)})$,
which are weaker than the bounds
claimed in Theorem~\ref{thm:main}.

In Section~\ref{sect:order-3-ub} we prove $\DS{3}(n) = 2n\alpha(n) + O(n)$, which is a tiny improvement
over Klazar's bound~\cite{Klazar99,Nivasch10}, though it is within $O(n)$ of Nivasch's construction~\cite{Nivasch10}
and is therefore optimal in the Ackermann-invariant
sense.  See Remark~\ref{remark:Ackermann-invariance}.
To prove $\DS{5}(n) = \Theta(n\alpha(n)2^{\alpha(n)})$ we require a significant generalization of the derivation tree
method.
Sections~\ref{sect:lb} and \ref{sect:order5-ub} give the matching lower and upper bounds on order-$5$ DS sequences.

\subsubsection{Order $s=3$}\label{sect:order-3-ub}

Let $S$ be an order-$3$ DS sequence over an $n$-letter alphabet.  According 
to Lemma~\ref{lem:Sharir} (\cite{HS86}), $|S| \le \DS{3}(n) \le \DS{3}(n,m)$, where $m=2n-1$.  Letting $\iota$ be minimum such 
that $m \le a_{\iota,3}$, Lemma~\ref{lem:agreeable} implies that 
$\DS{3}(n,m) < (2 \iota+2)n + (3\iota-2)m < (8\iota -2)n$.
It is straightforward to show that $\iota \le \alpha(n)+O(1)$.
The problem is clearly that there are too many blocks.
Were there less than $(2n-1)/\iota$ blocks, Lemma~\ref{lem:agreeable} would
give a bound of $(2\iota+2)n + O(\iota m/\iota)=2n\alpha(n) + O(n)$.  
We can invoke Recurrence~\ref{thm:recurrence-even}
to divide $S$ into a global $\GS$ and local $\LS=\LS_1\cdots\LS_{\Gm}$, where $\Gm = m/\iota \le (2n-1)/\iota$,
that is, each $\LS_q$ is an $\iota$-block sequence.
Using Lemma~\ref{lem:agreeable} we will bound $\GS$ with $i=\iota$ 
and each of the $\{\LS_q\}_q$ with $i=1$.
\begin{align*}
|S| &\le \DS{3}(n) \le \DS{3}(n,m) & \mbox{\{where $m=2n-1$\}}\\
&\le \sum_{q=1}^{\Gm} \DS{3}(\Ln_q,\iota) \;+\; 2\cdot \DS{2}(\Gn,m) \;+\; \DS{1}(\DS{3}(\Gn,\Gm)-2\Gn,m) & \mbox{\{Recurrence~\ref{thm:recurrence-even}\}}\\
&< \sum_{q=1}^{\Gm} \SqBrack{4\Ln_q \;+\; \min\Bigcurly{\iota\ceil{\log \iota}, \;\, (\iota-1) + (2\Ln_q-1)\ceil{\log(2\Ln_q-1)}}}     & \mbox{(*)}\\
&\hcm +\; [4\Gn + 2m] \;+\; [2\iota\Gn \,+\, (3\iota-2)\Gm \,+\, m]                                                          & \mbox{\{Lemmas~\ref{lem:DS12}, \ref{lem:agreeable}\}}\\
&< \SqBrack{m + (n-\Gn)(4 + 2\ceil{\log (2\iota-1)})} + (2\iota+4)\Gn + (3\iota-2)m/\iota + 3m    & \mbox{\{$\Gm = m/\iota\}$}\\
&< (2\iota+4)n + 7m  	& \mbox{\{worst case if $\Gn=n$\}}\\
&\le 2n\alpha(n) + O(n)   & \mbox{\{$\iota = \alpha(n)+O(1)$\}}
\end{align*}

The bound on local symbols in line (*) follows from Lemma~\ref{lem:agreeable} and Hart and Sharir's~\cite{HS86}
observation that $\DS{3}(n)\le \DS{3}(n,2n-1)$.  When $i=1$ and $j=\ceil{\log\iota}$, Lemma~\ref{lem:agreeable}  gives us 
a bound of $\DS{3}(\Ln_q,\iota) \le 4\Ln_q + \iota\ceil{\log \iota}$.
Alternatively, we could make $\LS_q$ 2-sparse by removing up to $\iota-1$ duplicated
symbols at block boundaries, then
partitioning the remaining sequence into $2\Ln_q-1$ blocks, hence $\DS{3}(\Ln_q,\iota) \le \iota-1 + \DS{3}(\Ln_q,2\Ln_q-1) \le 
\iota-1 + 4\Ln_q + (2\Ln_q-1)\ceil{\log(2\Ln_q-1)}$.
This matches Nivasch's~lower bound~\cite{Nivasch10} on $\DS{3}(n)$ to within $O(n)$.

\section{Lower Bounds on Fifth-Order Sequences}\label{sect:lb}

We have established every bound claimed in Theorem~\ref{thm:main} except for those on order-$5$ DS sequences. 
In this section
we give a construction that yields bounds of
$\DS{5}(n,m) = \Omega(n\alpha(n,m)2^{\alpha(n,m)})$ and $\DS{5}(n) = \Omega(n\alpha(n)2^{\alpha(n)})$.
This is the first construction that is asymptotically longer than the order-$4$ DS sequences of~\cite{ASS89}
having length $\Theta(n2^{\alpha(n)})$.
Our construction is based on generalized forms of sequence {\em composition} and {\em shuffling} used by
Agarwal, Sharir, and Shor~\cite{ASS89}, Nivasch~\cite{Nivasch10}, and Pettie~\cite{Pettie-GenDS11}.

Recall from Section~\ref{sect:notation-and-terminology} that $\|S\|=|\Sigma(S)|$ is the alphabet
size of $S$ and, if $S$ is partitioned into blocks, $\bl{S}$ is its block count.

\subsection{Composition and Shuffling}\label{sect:comp-and-shuffle}

In its generic form, a sequence $S$ is assumed to be over the alphabet $\{1,\ldots,\|S\|\}$, that is, any totally ordered set with size $\|S\|$.  
To {\em substitute} $S$ for a block $\beta = [a_1\ldots a_{|\beta|}]$ means to replace $\beta$ with a copy $S(\beta)$ under the
alphabet mapping $k \mapsto a_k$, where $|\beta| \le \|S\|$.  If $|\beta|$ is strictly smaller than $\|S\|$, any occurrences of 
the $\|S\|-|\beta|$ unused symbols of $\Sigma(S)$ do not appear in $S(\beta)$.
We always assume that $S$ is in {\em canonical form}: the symbols are ordered according to the position of 
their first appearance in $S$.

\paragraph{Composition.}
If $\Smid$ is a sequence in canonical form with $\|\Smid\| = j$ and $\Stop$ a sequence partitioned into blocks with length at most $j$,
$\Ssub = \Stop \compose \Smid$ is obtained by substituting for each block $\beta$ in $\Stop$ a copy $\Smid(\beta)$.
Clearly $\bl{\Ssub} = \bl{\Stop}\cdot\bl{\Smid}$.  If $\Smid$ and $\Stop$ contain $\mu$ and $\mu'$ occurrences of each symbol, respectively, then $\Ssub$  contains $\mu\mu'$ occurrences of each symbol.  Composition preserves canonical form, that is, if $\Smid$ and $\Stop$
are in canonical form, so is $\Ssub$.

\paragraph{Shuffling.}
If $\Sbot$ is a $j'$-block sequence and $\Ssub$ is partitioned into blocks of length at most $j'$, we can form the 
{\em shuffle} $\Ssh = \Ssub \shuffle \Sbot$ as follows.
First create a sequence $\Sbot^*$
consisting of the concatenation of $\bl{\Ssub}$ copies of $\Sbot$, each copy being over an alphabet disjoint
from the other copies and disjoint from that of $\Ssub$.  
By design the length of $\Ssub$ is at most the number of blocks in $\Sbot^*$, and precisely the same 
if all blocks in $\Ssub$ have their maximum length $j'$.
The sequence $\Ssub \shuffle \Sbot$ is obtained by shuffling the $j'$ symbols of the $l$th block of $\Ssub$ into the $j'$ blocks of the $l$th copy
of $\Sbot$ in $\Sbot^*$.  Specifically, the $k$th symbol of the $l$th block is inserted at the {\em end} of the $k$th block of the $l$th copy of $\Sbot$.

\paragraph{Three-Fold Composition.}
Our construction of order-$5$ DS sequences uses a generalized form of composition that treats symbols in $\beta$ differently
based on context.  
Suppose $\Stop$ is partitioned into blocks with length at most $j$
and $\Smid^{\operatorname{f}},\Smid^{\operatorname{m}},$ and $\Smid^{\operatorname{l}}$ are sequences with 
alphabet size $\|\Smid^{\operatorname{f}}\|=\|\Smid^{\operatorname{m}}\|=\|\Smid^{\operatorname{l}}\|=j$.
The 3-fold composition $\Stop \compose \angbrack{\Smid^{\operatorname{f}},\Smid^{\operatorname{m}},\Smid^{\operatorname{l}}}$ is formed
as follows.  
For each block $\beta$ in $\Stop$, categorize its symbols 
as {\em first} if they occur in no earlier block, {\em last} if they occur in no later block,
and {\em middle} otherwise.
Let $\beta^{\operatorname{f}},\beta^{\operatorname{m}},$ and $\beta^{\operatorname{l}}$ be the subsequences
of $\beta$ consisting of first, middle, and last symbols.
Note that these three sequences do not necessarily occur contiguously in $\beta$, but each is nonetheless 
a subsequence of $\beta$.
Substitute for $\beta$ the concatenation of $\Smid^{\operatorname{f}}(\beta^{\operatorname{f}}), \Smid^{\operatorname{m}}(\beta^{\operatorname{m}}),$ and $\Smid^{\operatorname{l}}(\beta^{\operatorname{l}})$.
Note that if $\Stop,\Smid^{\operatorname{f}},\Smid^{\operatorname{m}},$ and $\Smid^{\operatorname{l}}$ contain 
$\mu \ge 2,\mu^{\operatorname{f}},\mu^{\operatorname{m}},$ and $\mu^{\operatorname{l}}$ occurrences of each symbol
then $\Stop \compose \angbrack{\Smid^{\operatorname{f}},\Smid^{\operatorname{m}},\Smid^{\operatorname{l}}}$ contains 
$\mu^{\operatorname{f}} + \mu^{\operatorname{l}} + (\mu-2)\mu^{\operatorname{m}}$ occurrences of each symbol.
Figure~\ref{fig:substitution-shuffle} gives a schematic of the generation 
of the sequence $\paren{\Stop\compose\angbrack{\Smid^{\operatorname{f}},\Smid^{\operatorname{m}},\Smid^{\operatorname{l}}}}
\shuffle \Sbot$.

\begin{figure}
\centering
\scalebox{.30}{\includegraphics{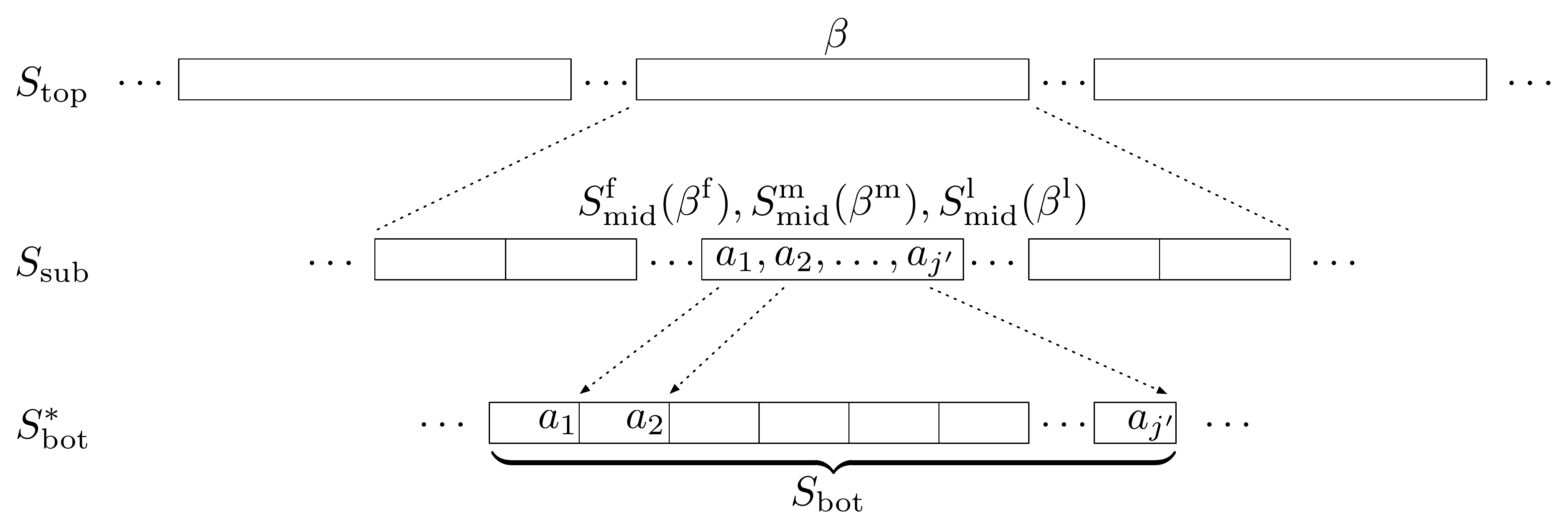}}
\caption{\label{fig:substitution-shuffle}Three-fold composition followed by shuffling.
Each block $\beta$ in $\Stop$ is replaced with the concatenation of
$\Smid^{\operatorname{f}}(\beta^{\operatorname{f}}), \Smid^{\operatorname{m}}(\beta^{\operatorname{m}})$,
and $\Smid^{\operatorname{l}}(\beta^{\operatorname{l}})$
and each block of {\em that} sequence is shuffled with a single copy of $\Sbot$ in $\Sbot^*$.
In general, blocks in $\Smid^{\operatorname{f}}(\beta^{\operatorname{f}}), \Smid^{\operatorname{m}}(\beta^{\operatorname{m}})$,
and $\Smid^{\operatorname{l}}(\beta^{\operatorname{l}})$ will not attain their maximum length $j'$.
}
\end{figure}

\subsection{Sequences of Orders 4 and 5}

The sequences $S_4(i,j)$ and $S_5(i,j)$ are defined inductively below.
As we will prove, $S_4(i,j)$ is an order-$4$ DS sequence partitioned into blocks
of length precisely $j$ in which each symbol appears $2^i$ times,
whereas $S_5(i,j)$ is an order-$5$ DS sequence partitioned into blocks of length {\em at most}
$j$ in which each symbol appears $(2i-3)2^i+4$ times.  Let $B_s(i,j) = \bl{S_s(i,j)}$ and $N_s(i,j) = \|S_s(i,j)\|$ be, respectively, 
the number of blocks in $S_s(i,j)$ and the alphabet size of $S_s(i,j)$.
By definition $|S_4(i,j)| = 2^i\cdot N_4(i,j) = j\cdot B_4(i,j)$ and $|S_5(i,j)| = ((2i-3)2^i+4)\cdot N_5(i,j) \le j\cdot B_5(i,j)$.
The construction of $S_4$ is the same as Nivasch's~\cite{Nivasch10} and similar to that of Agarwal et al.~\cite{ASS89}.

The base cases for our sequences are given below, where square brackets indicate blocks:
\begin{align*}
S_2(j) 	&= [12\cdots(j-1)j] \:\: \zero{[j(j-1)\cdots 21]}		& \mbox{ two blocks with length $j$}\\
S_4(1,j) &= S_5(1,j)	= S_2(j)\\
S_4(i,1)	&= [1]^{2^i}							& \mbox{ $2^i$ identical blocks}\\
S_5(i,1)	&= [1]^{(2i-3)2^i+4}						& \mbox{ $(2i-3)2^i + 4$ identical blocks}\\
\intertext{Observe that these base cases satisfy the property that symbols appear precisely $2^i$
times in $S_4(i,\cdot)$ and $(2i-3)2^i + 4$ times in $S_5(i,\cdot)$.  Define $S_4(i,j)$ as}
S_4(i,j)	&= \zero{\Paren{S_4(i-1, y)\compose S_2(y)} \shuffle S_4(i,j-1),} & \mbox{ where $y=B_4(i,j-1)$}\\
\intertext{and $S_5(i,j)$ as}
S_5(i,j)	&= \zero{\left(\Stop \compose \Angbrack{\Smid^{\operatorname{f}},\Smid^{\operatorname{m}},\Smid^{\operatorname{l}}}\right)
\shuffle \Sbot}\\
\mbox{where \ } & \Sbot = S_5(i,j-1), & \mbox{ $z=B_5(i,j-1)$}\\
	& \Smid^{\operatorname{f}} = \Smid^{\operatorname{l}} = S_4(i,z),    \\
	& \Smid^{\operatorname{m}} = S_2(N_4(i,z)),\\
\mbox{and \ }	& \Stop = S_5(i-1, N_4(i,z)),\\
\end{align*}

By definition $\Smid^{\operatorname{f}}$ and $\Smid^{\operatorname{l}}$ are partitioned into blocks with length $z$.
In the three-fold composition operation we also interpret $S_2(N_4(i,z))$ as a sequence of blocks of length precisely $z$.  It will
be shown shortly that $N_4(i,z)$ is, in fact, a multiple of $z$.
We argue by induction that symbols appear with the correct multiplicity in $S_4$ and $S_5$.
In the case of $S_4$ each symbol appears  $2^{i-1}$ times in $S_4(i-1,y)$ (by the inductive hypothesis), twice in $S_2(y)$,
and therefore $2^i$ times in $S_4(i-1,y)\compose S_2(y)$.  Symbols in copies of $S_4(i,j-1)$ already appear $2^i$ times,
by the inductive hypothesis.  In $S_5(i-1,N_4(i,z))$ each symbol appears
$(2i-5)2^{i-1} + 4$ times.  
The 3-fold composition operation increases the multiplicity of such symbols to 
$2\left((2i-5)2^{i-1} + 2\right) + 2\left(2^{i}\right) = (2i-3)2^i + 4$,
where the first term accounts for the blowup in middle occurrences and the second term for the blowup in first and last occurrences.
It follows that $B$ and $N$ are defined inductively as follows.
\begin{align*}
B_4(1,j) &= B_5(1,j) = B_2(j) = 2\\
B_4(i,1) &= 2^i\\
B_5(i,1) &= (2i-3)2^i+4\\
B_4(i,j)	&= B_4(i-1,y) \cdot 2 \cdot y	& \mbox{where $y=B_4(i,j-1)$}\\
B_5(i,j)	&= B_5(i-1,  N_4(i,z)) \cdot (2+2^{-i+1}) B_4(i,z) \cdot z & \mbox{where $z=B_5(i,j-1)$} \\
N_4(1,j) &= N_5(1,j) = N_2(j) = j\\
N_4(i,1) &= N_5(i,1) = 1\\
N_4(i,j)	&= N_4(i-1,y) \; + \; B_4(i-1,y) \cdot 2 \cdot N_4(i,j-1)\\
N_5(i,j)	&= \zero{N_5(i-1,N_4(i,z)) \; + \; B_5(i-1,N_4(i,z)) \cdot (2+2^{-i+1}) B_4(i,z) \cdot N_5(i,j-1)}
\end{align*}

The $2+2^{-i+1}$ factor in the definition of $B_5(i,j)$ and $N_5(i,j)$ comes from the fact
that in the shuffling step, $S_2(N_4(i,z))$ is interpreted as having $|S_2(N_4(i,z))|/z$ blocks of length $z$,
where
\[
\f{|S_2(N_4(i,z))|}{z} = \f{2\cdot N_4(i,z)}{z} = \f{2\cdot z\cdot B_4(i,z)}{z\cdot 2^{i}} = 2^{-i+1}B_4(i,z)
\]

\begin{lemma}
For $s\in\{4,5\}$, $S_s(i,j)$ is an order-$s$ Davenport-Schinzel sequence.
\end{lemma}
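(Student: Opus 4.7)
I would argue by simultaneous induction on $(i,j)$ in lexicographic order for $s\in\{4,5\}$. The base cases are routine: $S_2(j)$ has alternation at most $3$ between any two symbols, so is order-$2$ and hence order-$s$ for every $s\ge 2$; the monochromatic sequences $S_4(i,1)$ and $S_5(i,1)$ contain no $abab$ at all.

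For the inductive step I would write $S_s(i,j)=\Ssh\shuffle\Sbot$ with $\Sbot=S_s(i,j-1)$ and $\Ssh$ equal to either $S_4(i-1,y)\compose S_2(y)$ or the three-fold composition $\Stop\compose\angbrack{\Smid^{\operatorname{f}},\Smid^{\operatorname{m}},\Smid^{\operatorname{l}}}$. Fixing two symbols $a,b$ occurring in $S_s(i,j)$, I would bound their alternation length by case analysis on their origin. Three cases arise: (i) both $a,b$ lie in $\Sigma(\Sbot)$, in which case alphabet disjointness of the copies forces $a,b$ into a single copy and the alternation is exactly that of $\Sbot$, bounded by induction; (ii) exactly one of $a,b$ lies in $\Sigma(\Ssh)$, in which case the $\Sbot$-symbol is confined to a single copy and the block of $\Ssh$ shuffled into that copy contributes at most one occurrence of the $\Ssh$-symbol (since blocks carry distinct symbols), yielding an alternation of at most $5$; (iii) both $a,b$ lie in $\Sigma(\Ssh)$, in which case the shuffle preserves their relative order and the problem reduces to bounding the alternation of $a,b$ inside $\Ssh$ itself.

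The substantive content is case (iii), and the two orders must be handled separately. For $s=4$, the substitution of $S_2(y)$ inside each block of $\Stop=S_4(i-1,y)$ replaces every symbol occurrence by two occurrences arranged so that any two symbols sharing a block of $\Stop$ appear in $\Ssh$ in a nested pattern $x\,y\;y\,x$ rather than an interleaved one, while two symbols in distinct blocks of $\Stop$ simply have their occurrences doubled without interleaving. A careful accounting of the possible alternation patterns in $\Ssh$, together with the inductive order-$4$ status of $\Stop$, yields the desired bound. For $s=5$, the three-fold composition refines this picture: middle symbols in a block $\beta$ of $\Stop$ are substituted via the order-$2$ sequence $\Smid^{\operatorname{m}}=S_2(N_4(i,z))$ and so remain nested relative to each other within $\beta^{\operatorname{m}}$; first and last symbols expand via the richer order-$4$ sequence $S_4(i,z)$, but by virtue of their first (resp.\ last) status they occur only after (resp.\ before) all middle occurrences of any partner symbol. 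These structural constraints, combined with the inductive order-$5$ bound on $\Stop=S_5(i-1,\cdot)$ and the order-$4$ status of $S_4(i,z)$, show that a hypothetical length-$7$ alternation of $a,b$ in $\Ssh$ would force a forbidden subsequence in one of $\Stop$, $\Smid^{\operatorname{m}}$, or $\Smid^{\operatorname{f}}/\Smid^{\operatorname{l}}$. I expect the main difficulty to be the bookkeeping required when $a$ and $b$ fall in different first/middle/last categories across the various blocks of $\Stop$ that contain them; handling this cleanly is where the bulk of the proof effort will lie.
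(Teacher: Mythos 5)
Your overall structure is sound: induction on $(i,j)$, case analysis by the origin of $a,b$ relative to $\Sigma(\Ssh)$ and $\Sigma(\Sbot)$, and the observation that the shuffle preserves relative order of $\Ssh$-symbols. Cases (i) and (ii) and the nestedness remark about the $S_2(y)$ substitution are all correct.

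The gap is in case (iii). You reduce to bounding the alternation of $a,b$ inside $\Ssh=\Stop\compose S_2(y)$ and propose to close this by ``a careful accounting'' using only ``the inductive order-$4$ status of $\Stop$.'' That is not enough: composition does \emph{not} preserve the order of a DS sequence in general. Consider the three-block sequence $[ab]\,[ba]\,[ab]$. Its longest $a,b$-alternation is $abab$ (length $4$), so it is order $3$, hence order $4$. Yet substituting the nested pattern $baab$ for the middle block gives $ab\;baab\;ab$, which contains $ababab$ (length $6$) and is therefore not order $4$. So knowing $\Stop$ is order $4$ does not let you conclude $\Ssh$ is order $4$.

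What the paper proves instead is a strictly stronger inductive invariant, stated in their bracket notation: for all $i>1,j\ge 1$, $ba[ba]\nsubseq S_s(i,j)$ and $[ba]ab\nsubseq S_s(i,j)$, where $[ba]$ denotes a block of $S_s(i,j)$ containing $b$ then $a$. This invariant is preserved by the shuffle because each copy of $\Sbot$ in $\Sbot^{*}$ receives at most one occurrence of any given $\Ssh$-symbol, and it implies both that two symbols share at most one block of $\Stop$ and that around that shared block $\beta$ the projection onto $\{a,b\}$ has the shape $a^{*}b^{*}\,\beta\,b^{*}a^{*}$. Only with this ``$a^{*}b^{*}\cdots b^{*}a^{*}$'' constraint does the substitution $ba\mapsto baab$ keep the alternation at length $5$ (and analogously for the $s=5$ three-fold composition, where the first/last substitutions via $S_4(i,z)$ additionally need the $a^{*}b^{*}\cdots b^{*}a^{*}$ control outside $\beta$). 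You would need to add this invariant to your inductive hypothesis, prove that shuffling preserves it, and then carry it through case (iii); without it the key step is genuinely false, not merely unbookkept.
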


\begin{proof}
We use brackets to indicate block boundaries in (forbidden) patterns, e.g., $[ba]ba$ is a 
pattern where the first $ba$ appears in one block and the last $ba$ appears outside that block.
One can easily show by induction that $ba[ba] \nsubseq S_s(i,j)$ 
and $[ba]ab\nsubseq S_s(i,j)$ for all $s\in\{4,5\},i>1,j\ge 1$.
The base cases are trivial.  When $a$ is shuffled into the indicated block in a copy of $S_s(i,j-1)$, all $b$s appear in that copy
and all other $a$s are shuffled into different copies, hence $[ba]$ cannot be preceded by $ba$ or followed by $ab$.
This also implies that two symbols cannot both appear in two blocks of $S_s(i,j)$, for all $i>1$.
It follows that the patterns $ababab$ (and $abababa$) cannot be introduced into $S_4$ (and $S_5$) by the shuffling operation but 
must come from the composition (and 3-fold composition) operation.
Suppose $ba\subseq \beta$ for some block $\beta$ in $S_4(i-1, y)$.  
It follows that composing $\beta$ with $S_2(y)$ (a $baba$-free sequence)
does not introduce an $ababab$ pattern.  (Substituting $\underline{bab}$ for $ba\subseq \beta$ and projecting onto $\{a,b\}$ yields
sequences of the form $a^* b^* \underline{bab} b^* a^*$.)

Turning to $S_5$, suppose $ab\subseq \beta$ for some block $\beta$ in $\Stop$.
If $a$ and $b$ are both middle symbols in $\beta$ then, by the same argument, 
composing $\beta$ with $\Smid^{\operatorname{m}} = S_2(N_4(i,z))$ does not introduce an 
$ababab$ pattern much less an $abababa$ pattern.
If both $a,b$ are first then composing $\beta$ with an order-$4$ DS sequence $\Smid^{\operatorname{f}} = S_4(i,z)$ and 
projecting onto $\{a,b\}$
yields patterns of the form $\underline{a^*b^*a^*b^*a^*} a^*b^*$, where the underlined portion originated from $\beta$.
The case when $a$ and $b$ are last is symmetric.
The cases when $a$ and $b$ are of different types (first-middle, first-last, last-middle) are handled similarly.
\end{proof}

We have shown that $\DS{4}(N_4(i,j), B_4(i,j)) \ge 2^i N_4(i,j)$
and $\DS{5}(N_5(i,j), B_5(i,j)) \ge ((2i-3)2^i + 4) N_5(i,j)$.
Since any blocked sequence can be turned into a $2$-sparse
sequence by removing duplicates at block boundaries this also implies
that $\DS{4}(N_4(i,j)) \ge 2^i N_4(i,j) - B_4(i,j) > (1-1/j)2^iN_4(i,j)$.
Remember that all blocks in $S_4(i,j)$ have length exactly $j$.
There is no such guarantee for $S_5$, however.  It is conceivable that it consists 
largely of long runs of identical symbols (each in a block of length 1), 
nearly {\em all} of which would be removed when converting it to a 2-sparse sequence.
That is, statements of the form
$\DS{5}(N_5(i,j)) \ge ((2i-3)2^i+4)N_5(i,j) - B_5(i,j)$ 
become trivial if the $B_5(i,j)$ term dominates.
Lemma~\ref{lem:blockdensity} shows that for $j$ sufficiently large this does not occur
and therefore removing duplicates at block boundaries does not affect the length of $S_5(i,j)$ asymptotically.

\begin{lemma}\label{lem:blockdensity}
$N_5(i,j) \ge j\cdot B_5(i,j) / \xi(i)$, where $\xi(i) = 3^i 2^{i+1\choose 2}$.
\end{lemma}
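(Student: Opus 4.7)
My plan is to prove the equivalent inequality $g(i,j) := N_5(i,j)/B_5(i,j) \ge j/\xi(i)$ by induction on the lexicographic order of $(i,j)$. The first step is to distill a clean additive recurrence for $g$ from the definitions of $B_5$ and $N_5$. Writing $z = B_5(i,j-1)$, $w = N_4(i,z)$, and $\lambda = 2 + 2^{-i+1}$, the defining equations are $B_5(i,j) = B_5(i-1,w)\cdot \lambda B_4(i,z)\cdot z$ and $N_5(i,j) = N_5(i-1,w) + B_5(i-1,w)\cdot \lambda B_4(i,z)\cdot N_5(i,j-1)$. Dividing the latter by the former produces the telescoping recurrence
\[
g(i,j) \;=\; g(i,j-1) \;+\; \frac{g(i-1,w)}{\lambda B_4(i,z)\cdot z}.
\]

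For the inductive step I invoke the hypotheses $g(i,j-1) \ge (j-1)/\xi(i)$ and $g(i-1,w) \ge w/\xi(i-1)$, so it suffices to show that the additive correction is at least $1/\xi(i)$. Here the key simplification is the identity $N_4(i,z) = z\cdot B_4(i,z)/2^i$, which follows immediately from $|S_4(i,z)| = z\cdot B_4(i,z) = 2^i\cdot N_4(i,z)$ (blocks of $S_4$ are always full, and each symbol appears $2^i$ times). Substituting $w = N_4(i,z)$ collapses the fraction to $1/(2^i\lambda\cdot\xi(i-1))$, reducing the inductive step to the purely numerical inequality $\xi(i) \ge 2^i\lambda\cdot\xi(i-1)$. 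Since $\xi(i)/\xi(i-1) = 3\cdot 2^i$ and $2^i\lambda = 2\cdot 2^i + 2$, this is equivalent to $2^i \ge 2$, which holds for all $i \ge 1$.

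The base cases are straightforward. For $i=1$, the sequence $S_5(1,j) = S_2(j)$ gives $g(1,j) = j/2 \ge j/6 = j/\xi(1)$. For $j=1$ with $i\ge 2$ one has $g(i,1) = 1/B_5(i,1)$, and $\xi(i) = 3^i 2^{{i+1}\choose 2}$ grows double-exponentially while $B_5(i,1) = (2i-3)2^i+4$ grows only single-exponentially, so $\xi(i) \ge B_5(i,1)$ is easily verified (an aside induction using $\xi(i)/\xi(i-1) = 3\cdot 2^i$ suffices). The main -- indeed essentially only -- obstacle is the exponent bookkeeping: confirming that the very specific choice $\xi(i) = 3^i 2^{{i+1}\choose 2}$ was engineered so that its multiplicative growth $3\cdot 2^i$ just barely absorbs the $2^{i+1}+2$ blowup factor that the shuffling step with $\lambda = 2 + 2^{-i+1}$ introduces at each level of recursion on $i$.
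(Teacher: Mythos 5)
Your proof is correct and follows essentially the same approach as the paper: induction on $(i,j)$ in lexicographic order, the same key identity $N_4(i,z) = z\cdot B_4(i,z)/2^i$, and the same numerical observation that $\xi(i)/\xi(i-1) = 3\cdot 2^i$ dominates $2^i\lambda = 2^{i+1}+2$. The only difference is presentational: by working with the ratio $g = N_5/B_5$ you get a clean telescoping additive recurrence, whereas the paper carries out the equivalent chain of inequalities directly on $N_5(i,j)$ and bounds $2+2^{-i+1}\le 3$ rather than isolating $2^i\ge 2$.
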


\begin{proof}
When $i=1$ we have $N_5(1,j) = j \ge j\cdot B_5(1,j) / \xi(1) = 2j/6$.
When $j=1$ we have $N_5(i,1) = 1 \ge B_5(i,1)/\xi(i) = \left((2i-3)2^i+4\right)/ 3^i2^{i+1 \choose 2}$.
Assuming the claim holds for all $(i',j') < (i,j)$ lexicographically, 
\begin{align*}
\lefteqn{N_5(i,j)}\\
&= N_5(i-1,N_4(i,z)) + B_5(i-1,N_4(i,z)) \cdot (2+2^{-i+1}) B_4(i,z)\cdot N_5(i,j-1) \hcm[-.8] & \mbox{\{defn. of $N_5$\}}\\
		&\ge N_5(i-1,N_4(i,z)) +  \f{1}{\xi(i)} B_5(i-1,N_4(i,z)) \cdot (2+2^{-i+1}) B_4(i,z)\cdot (j-1) z \hcm[-.8]&\mbox{\{ind., defn. of $z$\}}\\
		&= N_5(i-1,N_4(i,z)) \,+\,  \f{j-1}{\xi(i)} B_5(i,j) & \mbox{\{defn. of $B_5$\}}\\
		&\ge \f{1}{\xi(i-1)} N_4(i,z)\cdot B_5(i-1, N_4(i,z)) \,+\, \f{j-1}{\xi(i)} B_5(i,j) & \mbox{\{ind. hyp.\}}\\
		&\ge \f{1}{\xi(i-1)\cdot 2^{i}}\cdot z\cdot B_4(i,z) \cdot B_5(i-1, N_4(i,z)) + \f{j-1}{\xi(i)} B_5(i,j) & \mbox{\{$N_4(i,z) = \fr{z}{2^i}B_4(i,z)$\}}\\
		&\ge \f{1}{\xi(i-1)\cdot 2^{i}\cdot 3} \cdot (2+2^{-i+1}) \cdot z  \cdot B_4(i,z) \cdot B_5(i-1,N_4(i,z)) + \f{j-1}{\xi(i)} B_5(i,j) \hcm[-.8]& \mbox{\{$2+2^{-i+1} \le 3$\}}\\
		&= \f{1}{\xi(i)}B_5(i,j) + \f{j-1}{\xi(i)} B_5(i,j) \; = \; \f{j}{\xi(i)}B_5(i,j) & \mbox{\{defn. of $B_5$, $\xi$\}}
\end{align*}

\end{proof}

\begin{theorem}\label{thm:DS5-lb}
For any $n$ and $m$, $\DS{5}(n,m) = \Omega(n \alpha(n,m)2^{\alpha(n,m)})$ and 
$\DS{5}(n) = \Omega(n\alpha(n)2^{\alpha(n)})$.
\end{theorem}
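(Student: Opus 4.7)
The construction $S_5(i,j)$ already does the combinatorial work; what remains is quantitative calibration. Recall that $S_5(i,j)$ is an order-$5$ DS sequence in which every symbol appears $(2i-3)2^i+4 = \Theta(i\cdot 2^i)$ times, so $|S_5(i,j)| = \Theta(i\cdot 2^i \cdot N_5(i,j))$, partitioned into $B_5(i,j)$ blocks. My plan is to show that, with the right choice of $i$ and $j$, the pair $(N_5(i,j), B_5(i,j))$ corresponds to $(n,m)$ with $\alpha(n,m) = i + O(1)$, so the multiplicative factor $\Theta(i\cdot 2^i)$ becomes exactly $\Theta(\alpha(n,m)2^{\alpha(n,m)})$.

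The first step is to calibrate the growth of the recurrences against Ackermann's function. Unrolling the coupled system for $N_4,B_4,N_5,B_5$, I would prove by induction on $i$, and by a secondary induction on $j$, that for some absolute constant $c$ one has $B_5(i,j) \le a_{i,j+O(1)}^c$ and, symmetrically, $N_5(i,j) \ge a_{i,j-O(1)}^{1/c}$. The point is that for fixed $i$ the $j$-recurrence $N_5(i,j) \approx B_5(i-1,N_4(i,z)) \cdot B_4(i,z) \cdot z \cdot N_5(i,j-1)$ iterates a multiplier built out of order-$4$ quantities that are themselves Ackermann-fast (this is exactly the calibration used in the ASS/Nivasch order-$4$ construction), while bumping $i$ diagonalizes through $a_{i,\cdot}$. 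Consequently any $(n,m)$ with $a_{i,j}^c \ge \max(n,m)$ can be realized by $S_5(i,j)$, and for $(n,m)$ in the Ackermann window this forces $i = \alpha(n,m) + O(1)$.

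Given the calibration, the blocked bound is immediate: pick $i$ and $j$ minimal with $N_5(i,j) \ge n$ and $B_5(i,j) \ge m$, trim or pad by harmless block operations, and read off $|S_5(i,j)| = \Theta(i\cdot 2^i \cdot n) = \Omega(n\alpha(n,m) 2^{\alpha(n,m)})$. For the $2$-sparse bound $\DS{5}(n) = \Omega(n\alpha(n)2^{\alpha(n)})$, convert $S_5(i,j)$ to a $2$-sparse sequence by deleting at most one duplicate at each of the $B_5(i,j)-1$ block boundaries. By Lemma~\ref{lem:blockdensity}, $B_5(i,j) \le \xi(i) N_5(i,j)/j$ with $\xi(i) = 3^i 2^{\binom{i+1}{2}}$; choosing $j$ even modestly larger than $\xi(i)/(i\cdot 2^i)$, a constraint that is automatically satisfied in the relevant Ackermann regime, makes the deletion a lower-order term and preserves the asymptotic bound.

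The main obstacle is Step~1: the recurrences for $N_5$ and $B_5$ are triply nested (a $j$-step expands through $z = B_5(i,j-1)$, then through $N_4(i,z)$, which has its own growth law), so one must keep two Ackermann ``rulers'' in sync through the composition with $S_4$. Once the inductive invariants are stated correctly, the verification is routine arithmetic. The remaining steps amount to invoking Lemma~\ref{lem:blockdensity} and arranging parameters, and are essentially bookkeeping.
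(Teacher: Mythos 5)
Your proposal follows essentially the same route as the paper: both use the $S_5(i,j)$ construction, read off the multiplicity $\Theta(i\,2^i)$, invoke Lemma~\ref{lem:blockdensity} to guarantee that removing duplicates at block boundaries is a lower-order loss, and then calibrate $i$ against $\alpha(n,m)$ (resp.\ $\alpha(n)$) via the recurrences for $N_5$ and $B_5$. The paper simply fixes $j\ge\xi(i)$ (which already makes the number of deleted boundary symbols at most $N_5(i,j)$) and asserts the calibration $i=\alpha(N_5(i,j),B_5(i,j))+O(1)$ as straightforward, where you spell the calibration step out a bit more; the underlying argument is the same.
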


\begin{proof}
Consider the sequence $S_5 = S_5(i,j)$, where $j \ge \xi(i)$, and let $S_5'$ be obtained by removing duplicates at block boundaries.
It follows that $S_5'$ is $2$-sparse and, from Lemma~\ref{lem:blockdensity}, 
that $|S_5'| \ge ((2i-3)2^i + 3)N_5(i,j)$.  
It is straightforward to prove that $i = \alpha(N_5(i,j), B_5(i,j)) + O(1)$ and that $i = \alpha(N_5(i,j)) + O(1)$ when $j = \xi(i)$.
\end{proof}

\section{Upper Bounds on Fifth-Order Sequences}\label{sect:order5-ub}

Recall from Section~\ref{sect:derivation-tree} that a derivation tree $\Tree(S)$ for a blocked sequence $S$ 
is the composition of a global tree $\GTree = \GTree(\GS')$ and local trees $\{\LTree_q\}_{q\le\Gm}$,
where $\LTree_q = \Tree(\LS_q)$.
The composition is effected by identifying the $q$th leaf of $\GTree$, call it $x_q$, 
with the root of $\LTree_q$, then populating the leaves of $\Tree$ with the blocks of $S$.

\subsection{Superimposed Derivation Trees}

One can view $\Tree(S)$ as representing a hypothetical {\em process} for generating the sequence $S$,
but it only represents this process at one granularity.
For example, $\GS_q$ is the portion 
of $\GS$ at the leaf descendants of $x_q$ in $\Tree$.
The derivation tree $\Tree$ 
does not let us inspect the structure of $\GS_q$ and provides 
no explanation for how $\GS_q$ came to be.  

To reason about $\GS_q$ we could, of course, build a {\em new} derivation tree
$\GTree_q = \Tree(\GS_q)$ just for $\GS_q$.  One can see that $\LTree_q$ and $\GTree_q$
will be structurally identical if, in their inductive construction, we always choose block partitions in the same way.
One can imagine {\em superimposing} $\GTree_q$ onto $\LTree_q$, 
regarding both as being on the same node set but populated with different blocks.

In our actual analysis we do not consider the derivation tree for $\GS_q$, which includes all global occurrences 
in $S_q$, but just those derivation trees for $\GfS_q$ and $\GlS_q$, which are restricted to global symbols making their
first and last appearance in $S_q$, respectively.
Define $\FirstTree[x_q] = \Tree(\GfS_q)$ and $\LastTree[x_q] = \Tree(\GlS_q)$ to be any derivation trees of 
$\GfS_q$ and $\GlS_q$ that are defined on the same node set as $\LTree_q$.  Recall that $x_q$ is the $q$th leaf of $\GTree$. 

One can think of the $\FirstTree$ and $\LastTree$ derivation trees 
as filling in the gaps between wing nodes and quills.  
Suppose $v$ were a leaf in some derivation tree $\Tree$ whose block $\block(v)$ contains a symbol $a$.  
By definition $\quill_{|a}(v)$ is a child of $\wing_{|a}(v)$ in $\Tree_{|a}$.  If $v$ were a dove (or hawk) in $\Tree_{|a}$
then $\quill_{|a}(v)$ would be identified with a leaf of $\FirstTree[\wing_{|a}(v)]$ (or a leaf of $\LastTree[\wing_{|a}(v)]$)
whose block contains $a$.
However, {\em within} $\FirstTree[\wing_{|a}(v)]$ (or $\LastTree[\wing_{|a}(v)]$),
$\quill_{|a}(v)$ could be a dove or hawk, feather or non-feather, wingtip or non-wingtip.

The new concept needed to tightly bound order-$5$ DS sequences is that of a {\em double-feather}.
See Figure~\ref{fig:FirstTree}.

\begin{definition}\label{def:double-feather}
Let $\{\Tree\}\cup \{\FirstTree[u],\LastTree[u]\}_{u\in\Tree}$ be a derivation tree ensemble.
Let $v$ be a dove leaf in $\Tree$ for which $a\in \block(v)$,
and let $\FirstTree = \FirstTree[\wing_{|a}(v)]$.
We call $v$ a {\em double-feather} in $\Tree_{|a}$ if it is a feather (that is, it is the rightmost descendant 
of $\quill_{|a}(v)$ in $\Tree_{|a}$) \underline{and} $\quill_{|a}(v)$ is {\em either} a dove feather or hawk wingtip
in $\FirstTree_{|a}$.  The definition of double-feather is symmetric when $v$ is a hawk, that is,
we substitute $\LastTree$ for $\FirstTree$ and swap the roles of left and right, dove and hawk.
\end{definition}

\begin{figure}
\centerline{\scalebox{.35}{\includegraphics{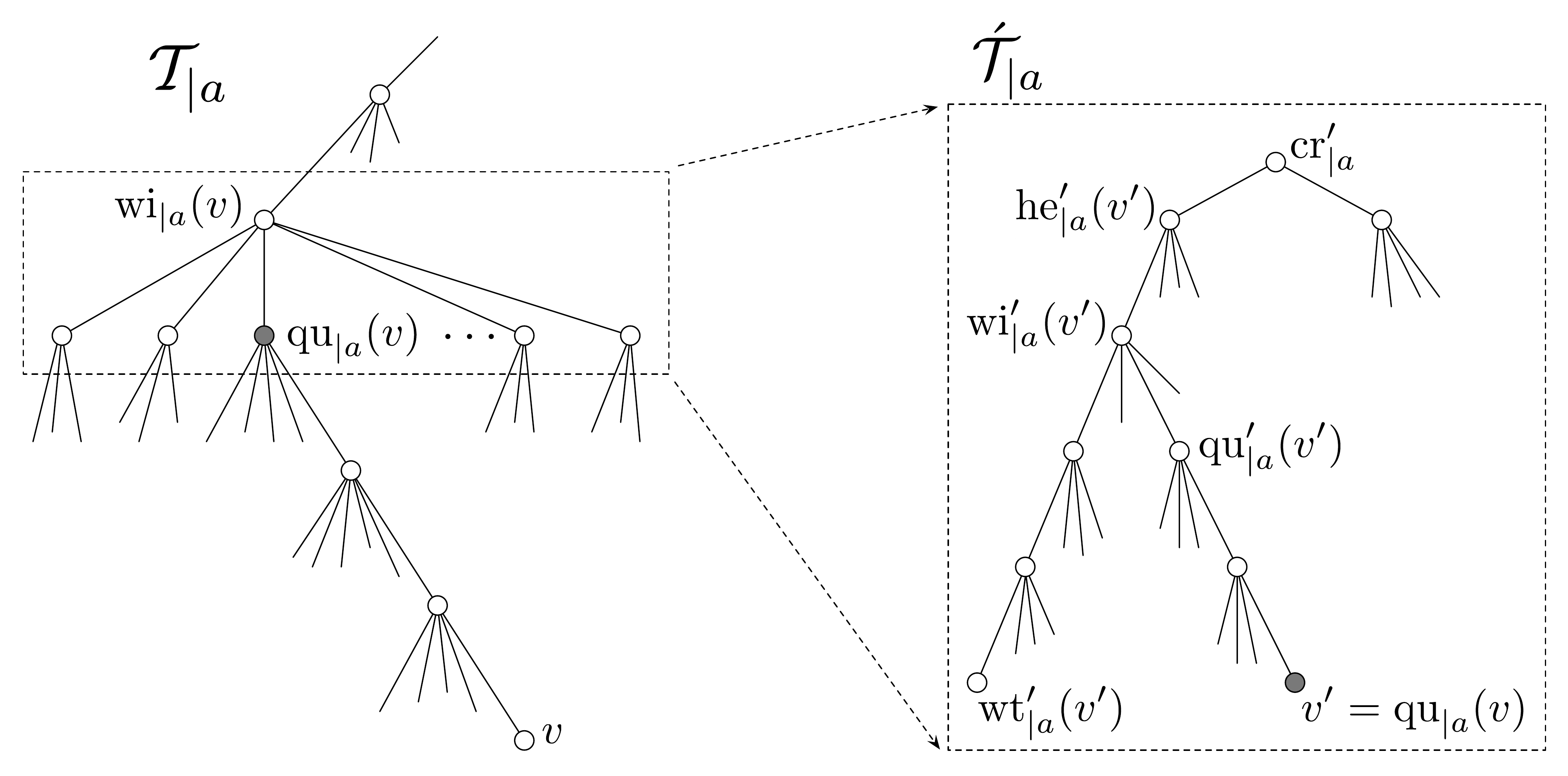}}}
\caption{\label{fig:FirstTree} Left: a dove feather $v$ in $\Tree_{|a}$.  The quill and wing
node of $v$ are indicated.  Right: the derivation tree $\FirstTree_{|a}$ where $\FirstTree=\FirstTree[\wing_{|a}(v)]$.
Within $\FirstTree_{|a}$ the leaf $v'=\quill_{|a}(v)$ has its own wing node $\wing_{|a}'(v')$, quill $\quill_{|a}'(v')$,
and so on.  By virtue of $v'$ being a dove feather in $\FirstTree_{|a}$, $v$ is a double-feather in $\Tree_{|a}$.}
\end{figure}

As with the term {\em feather}, {\em double-feather} is used to refer to leaf nodes in some derivation tree $\Tree_{|a}$
and the corresponding occurrences of $a$ in the underlying sequence $S$.
Lemma~\ref{lem:order5-nested} is a more refined
version of Lemma~\ref{lem:nested}.

\begin{lemma}\label{lem:order5-nested}
Let $\{\Tree\} \cup \{\FirstTree[u], \LastTree[u]\}_{u\in\Tree}$ be a derivation tree ensemble for some sequence $S$.
Suppose that $v\in\Tree$ is a leaf and $a,b$ are symbols in a block $\block(v)$ of $S$.
If the following three criteria are satisfied then $a$ and $b$ are nested in $\block(v)$.
\renewcommand{\theenumi}{\roman{enumi}}
\begin{enumerate}
\item $v$ is not a wingtip in either $\Tree_{|a}$ or $\Tree_{|b}$.\label{crit:middle}
\item $v$ is not a double-feather in either $\Tree_{|a}$ or $\Tree_{|b}$.\label{crit:feather}
\item $v$ is a dove in both $\Tree_{|a}$ and $\Tree_{|b}$ or a hawk in both $\Tree_{|a}$ and $\Tree_{|b}$.\label{crit:sametype}
\end{enumerate}
\end{lemma}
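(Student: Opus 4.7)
The plan is to follow the proof of Lemma~\ref{lem:nested} closely; criterion~(iii) eliminates half of its case analysis, while the superimposed trees $\FirstTree,\LastTree$ compensate for the weakening of its criterion~(ii). By criterion~(iii) and symmetry, WLOG $v$ is a dove in both $\Tree_{|a}$ and $\Tree_{|b}$, and WLOG $\crown_{|b}$ is equal to or ancestral to $\crown_{|a}$. Suppose for contradiction that $a$ and $b$ are not nested in $\block(v)$, so every $b$ outside $\block(v)$ lies exclusively in $I_1\cup I_3$ or exclusively in $I_2\cup I_4$, using the four intervals from the proof of Lemma~\ref{lem:nested}. The $I_1\cup I_3$ case is ruled out because the Case~2 argument of Lemma~\ref{lem:nested} would force $v$ to be a hawk in $\Tree_{|b}$, contradicting criterion~(iii). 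So every $b$ outside $\block(v)$ lies in $I_2\cup I_4$, and the Case~1 argument places $\quill_{|b}(v)$ as a strict descendant of $\wing_{|a}(v)$, with $\othertip_{|a}(v)$ positioned strictly to the right of $\quill_{|b}(v)$'s subtree in $\Tree$. To complete a nesting certificate it therefore suffices to produce one occurrence of $b$ strictly between $\block(v)$ and $\othertip_{|a}(v)$---necessarily in $I_4$, since $I_3$ is $b$-free.

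When $v$ is not a feather of $\Tree_{|b}$, the $\feather_{|b}(v)\neq v$ used in Lemma~\ref{lem:nested} supplies this occurrence verbatim. The new case is when $v$ is a feather but, by criterion~(ii), not a double-feather in $\Tree_{|b}$. Then $\quill_{|b}(v)$, viewed as a leaf of $\FirstTree[\wing_{|b}(v)]_{|b}$, is by Definition~\ref{def:double-feather} neither a dove feather nor a hawk wingtip of that tree. I plan to run a short case analysis on $\quill_{|b}(v)$'s role in $\FirstTree[\wing_{|b}(v)]_{|b}$---dove wingtip, dove non-feather non-wingtip, or hawk non-wingtip---and in each sub-case walk within $\FirstTree[\wing_{|b}(v)]_{|b}$ to a distinct leaf that lies strictly to the right of $\quill_{|b}(v)$ within the portion of $S$ spanned by $\wing_{|b}(v)$. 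The $b$-occurrence at that leaf then plays the role of the missing $\feather_{|b}(v)$.

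The main obstacle will be verifying that this ``escape'' occurrence of $b$ actually lies between $\block(v)$ and $\othertip_{|a}(v)$ in $S$---not past $\othertip_{|a}(v)$ on the right, and not strictly inside $\wing_{|a}(v)$'s subtree on the left (where the $b$-freeness of $I_3$ would rule it out anyway). This positional claim requires reconciling the topology of the superimposed $\FirstTree[\wing_{|b}(v)]$ with that of $\Tree$ and $\Tree_{|a}$, and I anticipate it will mirror Lemma~\ref{lem:l-node}, adapted to the superimposed derivation tree ensemble.
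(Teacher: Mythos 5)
Your high-level strategy is right, and your observations that criterion~(iii) eliminates the ``$I_1\cup I_3$'' case and that the remaining task is to manufacture a $b$ in $I_4$ that sits left of $\othertip_{|a}(v)$ are exactly the paper's. But there are two genuine gaps in the plan.

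First, the paper adds a minimality assumption you have omitted: $\Tree$ is the \emph{smallest} derivation tree for which the criteria hold while $a,b$ are interleaved (the paper's Criterion~(vi)). This is used, together with Lemma~\ref{lem:l-node}, to argue that $\quill_{|b}(v)=v$ (Inference~(x)): if $\quill_{|b}(v)$ were a strict ancestor of $v$, Lemma~\ref{lem:l-node} would put $a$ in its block, and a strictly smaller derivation tree $\Tree^0$ rooted above $\quill_{|b}(v)$ would already witness nesting, contradicting interleaving. Only because $\quill_{|b}(v)$ is the \emph{leaf} $v$ do the leaves of $\FirstTree[\wing_{|b}(v)]$ coincide with those of $\Tree$; without that, the ``escape'' leaf you find in $\FirstTree[\wing_{|b}(v)]_{|b}$ lives at a coarser level, and its $\FirstTree$-block need not carry $b$ in the corresponding $\Tree$-block. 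Your plan as stated has no way to align the two granularities.

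Second, the last paragraph of your plan is aimed in the wrong direction. You ``anticipate it will mirror Lemma~\ref{lem:l-node}, adapted to the superimposed derivation tree ensemble,'' but Remark~\ref{remark:double-feather} points out that Lemma~\ref{lem:l-node} fundamentally \emph{cannot} be ported across superimposed trees --- $\FirstTree[\wing_{|b}(v)]$ shares $\Tree$'s node set but is not the same derivation tree, so the lemma's block-containment conclusion fails. That impossibility is precisely why Criterion~(iii) was introduced. Because $v$ is a dove in both $\Tree_{|a}$ and $\Tree_{|b}$, the paper can conclude (Inferences~(xii)--(xiv)) that $\crown'_{|b}$ is ancestral to $\crown_{|a}$, that $v'=\quill_{|b}(v)$ is a dove in $\FirstTree_{|b}$ with $\tip_{|b}(v)$ as its left wingtip, and hence that $\quill'_{|b}(v')$ is a strict descendant of $\wing_{|a}(v)$. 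The payoff is purely topological: $\feather'_{|b}(v')$ is a descendant of $\quill'_{|b}(v')$ and therefore lies \emph{inside} $\wing_{|a}(v)$'s subtree of $\Tree$, hence automatically to the left of $\othertip_{|a}(v)$, which descends from $\otherhead_{|a}(v)$. No l-node analog is invoked or needed. Your proposed case analysis on $\quill_{|b}(v)$'s role (dove wingtip / dove non-feather / hawk non-wingtip) also collapses: the dove conclusion of Inference~(xiii) kills the hawk branches, and $\quill_{|b}(v)\neq\tip_{|b}(v)$ (since quills don't sit on the wing) kills the dove-wingtip branch, leaving only the one case the paper handles.
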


\begin{proof}
We assume the claim is false, that $a$ and $b$ are interleaved in $\block(v)$.
Without loss of generality, we can assume the following additional criteria.
\renewcommand{\theenumi}{\roman{enumi}}
\begin{enumerate}
\setcounter{enumi}{3}
\item $\crown_{|b}$ is equal to or strictly ancestral to $\crown_{|a}$.\label{crit:crown}
\item $v$ is a dove in both $\Tree_{|a}$ and $\Tree_{|b}$.\label{crit:dove}
\item $\Tree$ is the smallest derivation tree for which Criteria~(\ref{crit:middle}--\ref{crit:dove}) hold and where
$a$ and $b$ are interleaved in $\block(v)$.\label{crit:smallest}
\end{enumerate}

By Criterion (\ref{crit:dove}) the leftmost descendent of $\wing_{|a}(v)$ in $\Tree_{|a}$ is $\tip_{|a}(v)$.
Let $u$ be its rightmost descendant.
Criteria (\ref{crit:middle},\ref{crit:feather}) imply that
\renewcommand{\theenumi}{\roman{enumi}}
\begin{enumerate}
\setcounter{enumi}{6}
\item $v,\tip_{|a}(v)$, and $u$ are distinct nodes.\label{crit:uv-distinct}
\end{enumerate}
Criterion (\ref{crit:middle}) states that $v$ is distinct from $\tip_{|a}(v)$.  
Consider the derivation tree $\FirstTree = \FirstTree[\wing_{|a}(v)]$.
The node $u$ is the rightmost descendant (in $\Tree_{|a}$) of the hawk wingtip of $\FirstTree_{|a}$.
It is therefore a double-feather, and distinct from $v$, by Criterion~(\ref{crit:feather}).

Partition the sequence outside of $\block(v)$ into four intervals, namely
$I_1$: everything preceding the $a$ in $\block(\tip_{|a}(v))$, $I_2$: everything from the end of $I_1$ to $\block(v)$,
$I_3$: everything from $\block(v)$ to the $a$ in $\block(u)$, and $I_4$: everything following $I_3$.
If $a$ and $b$ are not nested in $\block(v)$ then all remaining $b$s lie exclusively in $I_1$ and $I_3$ or
exclusively in $I_2$ and $I_4$.  We claim
\renewcommand{\theenumi}{\roman{enumi}}
\begin{enumerate}
\setcounter{enumi}{7}
\item $I_1$ and $I_3$ contain no occurrences of $b$.\label{crit:I2I4}
\end{enumerate}
If the contrary were true, that all occurrences of $b$ outside $\block(v)$ were in $I_1$ and $I_3$,
then $I_3$ would contain $\righttip_{|b}$, which is distinct from $v$ according to Criterion (\ref{crit:middle}).
By Criterion (\ref{crit:crown}) $\crown_{|b}$ is ancestral to $\crown_{|a}$, which is strictly ancestral to $\wing_{|a}(v)$,
which is ancestral to both $v$ and $\righttip_{|b}$.  This implies that $v$ and $\righttip_{|b}$ descend from the same child of $\crown_{|b}$, namely, $\righthead_{|b}$, which violates Criterion (\ref{crit:dove}).
Thus, $\tip_{|b}(v)$ is the dove wingtip in $\Tree_{|b}$ and lies in interval $I_2$.

Define $v' = \quill_{|b}(v)$ to be $v$'s quill in $\Tree_{|b}$.  It is not necessarily the case that $v'$ is distinct from $v$,
though we can claim that
\renewcommand{\theenumi}{\roman{enumi}}
\begin{enumerate}
\setcounter{enumi}{8}
\item $v'$ is a strict descendant of $\wing_{|a}(v)$ and $v=\feather_{|b}(v)$ is a feather.\label{crit:v'}
\end{enumerate}
By definition quills are not wing nodes, so $v'$ cannot be ancestral to $\tip_{|b}(v)$.  
However, if $v'$ were ancestral to $\wing_{|a}(v)$ it would be ancestral
to $\tip_{|b}(v)$ as well, a contradiction.
If $v$ were distinct from $\feather_{|b}(v)$, that is, if $v$ were not the rightmost descendant of $v'$ in $\Tree_{|b}$,
then $\feather_{|b}(v)$ must, by Inference (\ref{crit:I2I4}), lie in interval $I_4$.  Since $\othertip_{|a}(v)$ is not a descendant
of $\wing_{|a}(v)$ it must lie to the right of $\feather_{|b}(v)$ in $I_4$.  
However, this arrangement of nodes (namely $\tip_{|a}(v),\tip_{|b}(v),v,\feather_{|b}(v),\othertip_{|a}(v)$, where $\tip_{|a}(v)$
and $\tip_{|b}(v)$ may be equal)
shows that $a$ and $b$ are nested in $\block(v)$, hence $v=\feather_{|b}(v)$ is a feather.

From Criterion~(\ref{crit:smallest}), Inference~(\ref{crit:v'}), and Lemma~\ref{lem:l-node} we shall infer that
\renewcommand{\theenumi}{\roman{enumi}}
\begin{enumerate}
\setcounter{enumi}{9}
\item$v=v'$.\label{crit:v=v'}
\end{enumerate}
Suppose $v\neq v'$.
Lemma~\ref{lem:l-node} implies that $a\in\block(v')$,
as witnessed by the nodes $v,v',\crown_{|a},\crown_{|b}$ on a leaf-to-root path.
This means that somewhere in the inductive construction of $\Tree$ we encountered
a derivation tree $\Tree^0$ containing both $\crown_{|a}$ and $\crown_{|b}$,
whose leaves are at the level of $v'$.  However, $\Tree^0$ and $v'$ satisfy the
conditions of the lemma, namely $a,b\in \block(v')$ and $v'$ is neither a wingtip nor double-feather nor hawk
in both $\Tree^0_{|a}$ and $\Tree^0_{|b}$.
Since $\Tree^0$ is smaller than $\Tree$, Criterion~(\ref{crit:smallest}) implies
that $a$ and $b$ are nested in $\block(v')$ with respect to $\Tree^0$, which
then implies that they are nested in $\block(v)$ with respect to $\Tree$ as well.
This contradicts the hypothesis that $a$ and $b$ are interleaved in $\block(v)$.

\begin{figure}
\centering\scalebox{.28}{\includegraphics{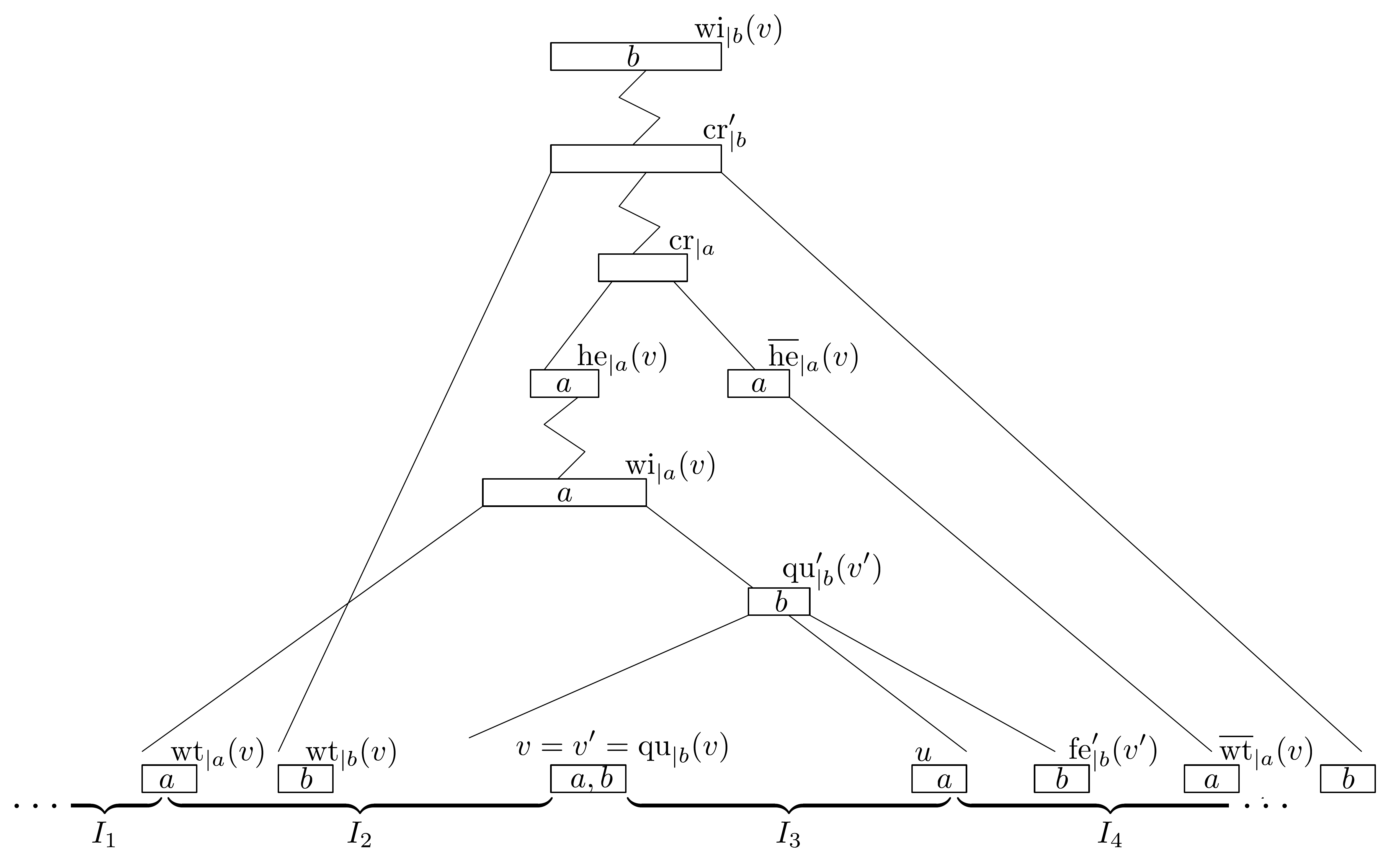}}
\caption{\label{fig:order5-nested}
Here $v'=\quill_{|b}(v)$ is $v$'s quill in $\Tree_{|b}$, and therefore a leaf of $\FirstTree = \FirstTree[\wing_{|b}(v)]$.
The tree $\FirstTree_{|b}$ is rooted at the crown $\crown_{|b}'$.
Looking within $\FirstTree_{|b}$, $v'$ has its own quill $\quill_{|b}'(v')$ and feather $\feather_{|b}'(v')$.
Since $\FirstTree$ is topologically identical to a corresponding subtree of $\Tree$, we can talk sensibly about
nodes in one tree being ancestral to nodes in the other.  For example, it is deduced that
$\crown_{|b}'$ (in $\FirstTree$) is ancestral to $\crown_{|a}$ (in $\Tree$) 
and that $\wing_{|a}(v)$ (in $\Tree$) is a strict ancestor of $\quill_{|b}'(v')$ (in $\FirstTree$).
}
\end{figure}

By definition $v'$ is the child of $\wing_{|b}(v)$ in $\Tree_{|b}$ but we have yet to deduce where $\wing_{|b}(v)$
is relative to other nodes.  
\renewcommand{\theenumi}{\roman{enumi}}
\begin{enumerate}
\setcounter{enumi}{10}
\item $\wing_{|b}(v)$ is ancestral to $\crown_{|a}$.\label{crit:wingb}
\end{enumerate}
Criterion (\ref{crit:middle}) and Inference (\ref{crit:I2I4}) imply that some $b$ appears in interval $I_4$.
All such $b$s must appear after $\othertip_{|a}(v)$ for otherwise $a$ and $b$ would be nested in $\block(v)$.
By Criterion (\ref{crit:feather}) $v$ cannot be the rightmost descendant of $\wing_{|b}(v)$, which is a double-feather.
Thus, the rightmost descendant of $\wing_{|b}(v)$ appears after the $a$ in $\othertip_{|a}(v)$, implying that $\wing_{|b}(v)$ is ancestral to $\crown_{|a}$.

Consider the derivation tree $\FirstTree = \FirstTree[\wing_{|b}(v)]$, the leaf-level of which 
coincides with the leaf-level of $\Tree$ since $\quill_{|b}(v) = v' = v$.
Let $\crown_{|b}'$ be $b$'s crown in $\FirstTree_{|b}$,
and let $\quill_{|b}'(v')$ and $\feather_{|b}'(v')$ be the quill and feather of $v'$ in $\FirstTree_{|b}$.\footnote{Observe that $\quill'_{|b}(v')$ lies strictly between $v'=\quill_{|b}(v)$ and $\wing_{|b}(v)$ in $\Tree$,
so its block contains $b$ only with respect to $\FirstTree$, not $\Tree$.  In contrast, $b$ appears in $\feather_{|b}'(v')$'s block
in both $\Tree$ and $\FirstTree$.}
We deduce the following variants of Criteria (\ref{crit:crown},\ref{crit:dove}) and Inference~(\ref{crit:v'}).
\renewcommand{\theenumi}{\roman{enumi}}
\begin{enumerate}
\setcounter{enumi}{11}
\item $\crown_{|b}'$ is ancestral to $\crown_{|a}$.\label{crit:crown'}
\item $v'$ is a dove in $\FirstTree_{|b}$.\label{crit:dove'}
\item $\quill_{|b}'(v')$ is a strict descendant of $\wing_{|a}(v)$ and strict ancestor of $u$.
Furthermore, $\feather_{|b}'(v')$ lies between $u$ and $\othertip_{|a}(v)$.\label{crit:v''}
\end{enumerate}
Inference (\ref{crit:crown'}) is just a restatement of Inference (\ref{crit:wingb}) since all 
$b$s descending from $\wing_{|b}(v)$ are also descendants of $\crown_{|b}'$.\footnote{The assertion
that $\crown_{|b}'$ is ancestral to $\crown_{|a}$ is only well defined if $\FirstTree[\wing_{|b}(v)]$ can be
regarded as a subtree of $\Tree$ but populated with different blocks.  This is why we do not permit
corresponding derivation trees to have different structure.}
Inference (\ref{crit:dove'}) follows from the fact that $\tip_{|b}(v)$ is also the dove wingtip in $\FirstTree_{|b}$
and that the least common ancestor of $\tip_{|b}(v)$ and $v$ is a descendant of $\wing_{|a}(v)$ and therefore
a strict descendant of $\crown_{|b}'$.
We now turn to Inference (\ref{crit:v''}).  By Inference (\ref{crit:dove'}) the quill $\quill_{|b}'(v')$ 
must be an ancestor of $v'$ but not $\tip_{|b}(v)$, so it must be a strict descendant of $\wing_{|a}(v)$.
If $v'$ were the rightmost descendant of $\quill_{|b}'(v')$ in $\FirstTree_{|b}$,
then $v'=\feather_{|b}'(v')$ and $v$ would be a double-feather in $\Tree$, contrary to Criterion (\ref{crit:feather}).
Since $I_3$ is $b$-free, it must be that $\quill_{|b}'(v')$ is a strict descendant of $\wing_{|a}(v)$,
a strict ancestor of $u$, and that $\feather_{|b}'(v')$ lies between $u$ and $\othertip_{|a}(v)$.
The blocks $\tip_{|a}(v),\tip_{|b}(v),v,\feather_{|b}'(v'),$ and $\othertip_{|a}(v)$ certify that $a$ and $b$ are
nested in $\block(v)$, a contradicting the hypothesis that they are not.
\end{proof}

\begin{remark}\label{remark:double-feather}
Whereas Lemma~\ref{lem:nested} implicitly partitioned occurrences of global symbols
into four categories: dove wingtips, hawk wingtips, feathers, and all remaining non-feathers,
Lemma~\ref{lem:order5-nested} further distinguishes dove non-feathers and hawk non-feathers.
The reason for this is rather technical.  If Criterion~(\ref{crit:sametype}) were dropped
and $v$ were a dove in $\Tree_{|a}$ and a hawk in $\Tree_{|b}$, we could deduce that $\quill_{|b}'(v')$
is a strict descendant of $\wing_{|a}(v)$ and strict ancestor of $\tip_{|a}(v)$.  However,
we could not deduce that $a\in\block(\quill_{|b}'(v'))$ (that is, $\quill_{|b}'(v')$'s block in $\Tree$)
since Lemma~\ref{lem:l-node} only applies to symbols and blocks that exist in the {\em same} derivation tree.
Note that $\FirstTree[\wing_{|b}(v)]$ can be regarded as being superimposed on the subtree of $\Tree$ rooted at $\wing_{|b}(v)$, 
but they are not identical derivation trees.
\end{remark}

As in Section~\ref{sect:derivation-tree} it is useful to define and reason about optimal derivation trees.
For technical reasons it is convenient to only permit uniform block partitions with widths that
are powers of two.  

\begin{definition}\label{def:permissible} {\bf (Permissible block partitions)}
Let $S$ be a blocked sequence and $m=\bl{S}$ be its block count.
A block partition $\{m_q\}_{1\le q\le \Gm}$ is {\em permissible} if $m_q = 2^r$ for all $q < \Gm$,
where $\Gm = \ceil{m/2^r}$ and $r\ge 1$ is an integer.
A derivation tree $\Tree(S)$ is {\em permissible} if it was defined using only permissible block partitions.
\end{definition}

Note that if $\Tree_0$ and $\Tree_1$ are two permissible derivation trees for some sequence with $m$ blocks,
they must have exactly the same structure, though their nodes may be populated with different blocks.  
In particular, both are binary trees with height $\ceil{\log m}$, where one-child nodes may only exist on the 
rightmost root-to-leaf path if $m$ is not a power of 2.

\subsection{Recurrences for Fifth-Order Sequences}

Lemma~\ref{lem:order5-nested} provides us with new criteria for nestedness.
In order to write a new recurrence for $\DS{5}$ we need to know how many 
double-feathers an order-5 DS sequence can have, which depends
on the number of dove and hawk feathers in an order-4 DS sequence.  

\begin{definition} {\bf (Optimal derivation trees)}
\begin{itemize}
\item When $S$ is an order-4 DS sequence let $\Tree^*(S)$ denote the permissible derivation tree that minimizes
the number of feathers of a given type (dove or hawk).  When $S$ is an order-5 DS sequence let
$\Ensemble^*(S) = \{\Tree^*(S)\} \cup \{\FirstTree^*[u],\LastTree^*[u]\}_{u\in\Tree^*(S)}$ denote the ensemble of permissible
derivation trees that minimize the number of double-feathers in $S$.
\item Define $\FFeather(n,m)$ to be the maximum number of feathers of one
type (dove or hawk) in an order-4 DS sequence $S$ with respect to $\Tree^*(S)$, where $\|S\|=n$ and $\bl{S}=m$.
\item Define $\DblFeather(n,m)$ to be the maximum number of double-feathers (of both types)
in an order-5 DS sequence with respect to the ensemble $\Ensemble^*(S)$.
\end{itemize}
\end{definition}

\begin{recurrence}\label{rec:double-feather}
Let $m$ and $n$ be the block count and alphabet size.
For any permissible block partition $\{m_q\}_{1\le q\le \Gm}$
and any alphabet partition $\{\Gn\}\cup \{\Ln_q\}_{1\le q\le \Gm}$, we have
\begin{align*}
\FFeather(n,m) &\;=\; \DblFeather(n,m) = 0	& \mbox{when $m\le 2$}\\
\FFeather(n,m) &\;\le\; \sum_{q=1}^{\Gm} \FFeather(\Ln_q,m_q) \, + \, \FFeather(\Gn,\Gm)
						\, + \, \DS{3}(\Gn,m) - \Gn\\
\DblFeather(n,m) &\;\le\; \sum_{q=1}^{\Gm} \DblFeather(\Ln_q,m_q) \, + \, \DblFeather(\Gn,\Gm)
						\, + \, 2(\FFeather(\Gn,m) + \Gn)		& \mbox{\rb{7}{}}
\end{align*}
\end{recurrence}

\begin{proof}
\ignore{
In the worst case every occurrence of a symbol, excluding the two wingtips, is a 
feather or double-feather,
so $\FFeather_{i}(n,m) \le \DS{4}(n,m) - 2n$ and $\DblFeather_{i}(n,m) \le \DS{5}(n,m) - 2n$ for any $i$.\footnote{The number of dove feathers and hawk feathers in an order-$4$ DS sequence should be about the same,
but we can afford to entertain the possibility that one type dominates the other.}
This bound is only invoked when $i=1$ or $j=1$.  By Lemma~\ref{lem:i=1}
$\DS{4}(n,m) \le 8n + (cj)^2(m-1)$ and $\DS{5}(n,m) \le 16n + (cj)^3(m-1)$, where $c$ is fixed at $3$.
}
Consider an order-4 DS sequence $S$. Let $\GTree^*$ and $\{\LTree_q^*\}_q$ be the optimal derivation trees
of the global sequence $\GS'$ and local sequences $\{\LS_q\}$, and let $\Tree$ be their composition.
The number of dove feathers of local symbols is at most 
$\sum_{q} \FFeather(\Ln_q,m_q)$.  Every global dove feather in $\GS$ 
is either (i) the rightmost child of a dove feather in $\GS'$,
or (ii) a child of a left wingtip in $\GfS'$, excluding the leftmost such child, which is also a left wingtip in $\GS$.
Category (i) is counted by $\FFeather(\Gn,\Gm)$
and  
Category (ii) is counted by $\DS{3}(\Gn,m)-\Gn$ since 
$\GfS$ (the children of dove wingtips) is an order-$3$ DS sequence
over an $\Gn$-letter alphabet.  A symmetric analysis applies to hawk feathers, reversing the roles of left and right.

The analysis of $\DblFeather(n,m)$ when $S$ is an order-$5$ sequence is similar.
After the block partition is selected, construct the optimal derivation tree ensemble 
$\Ensemble^*(\GS') = \{\GTree^*\} \cup \{\FirstTree^*[u],\LastTree^*[u]\}_{u\in\GTree^*}$
for $\GS'$ and, separately, the optimal derivation tree ensemble 
$\Ensemble_q^*(\LS_q) = \{\LTree_q^*\} \cup \{\FirstTree^*[u],\LastTree^*[u]\}_{u\in \LTree^*_q}$ for each $\LS_q$ separately.  
These ensembles are composed to form $\Ensemble(S) = \{\Tree\}\cup \{\FirstTree^*[u],\LastTree^*[u]\}_{u\in\Tree}$
in the obvious way.  As usual, $\Tree = \Tree(S)$ is the composition of $\GTree^*$ and the $\{\LTree^*_q\}$.
The only nodes $u\in\Tree$ whose derivation trees $\FirstTree^*[u]$
and $\LastTree^*[u]$ are not well defined (that is, they are not already included in $\Ensemble^*(\GS')$ or the $\{\Ensemble^*(\LS_q)\}$)
are those at the leaf level of $\GTree^*$ in $\Tree$.  Define $\FirstTree^*[x_q]$ and $\LastTree^*[x_q]$ to be the optimal
derivation trees for $\GfS_q$ and $\GlS_q$, respectively.  

Now let us argue that the recurrence correctly bounds the number of double-feathers in $S$ with respect to 
$\Ensemble(S)$.
The summation $\sum_q \DblFeather(\Ln_q,m_q)$ 
counts double-feathers of local symbols.  Every occurrence of a global double-feather in $S$
is either (i) the rightmost child of a dove double-feather in $\GS'$ or the leftmost child of a hawk double-feather in $\GS'$
(with respect to $\Ensemble^*(\GS')$),
(ii) a dove feather in $\GfS_q$ or hawk feather in $\GlS_q$, for some $q$, with respect to $\FirstTree^*[x_q]$ or $\LastTree^*[x_q]$,
or
(iii) a hawk wingtip in $\GfS$ or a dove wingtip in $\GlS$.
Category (i) is counted by $\DblFeather(\Gn,\Gm)$, Category (ii) by 
$\sum_q \SqBrack{\FFeather(\Gfn_q,m_q) + \FFeather(\Gln_q,m_q)} \le 2\cdot \FFeather(\Gn,m)$,
and Category (iii) by $2\Gn$.
\end{proof}

\begin{recurrence}\label{rec:DS-double-feather}
Let $m$ and $n$ be the block count and alphabet size parameters.
For any permissible block partition $\{m_q\}_{1\le q\le \Gm}$ and 
alphabet partition $\{\Gn\} \cup\{\Ln_q\}_{1\le q\le \Gm}$,
\begin{align*}
\DS{5}(n,m) &\;\le\; \sum_{q=1}^{\Gm} \DS{5}(\Ln_q,m_q) 
			\,+\, 2\cdot \DS{4}(\Gn,m) \,+\, \DS{3}(\DblFeather(\Gn,\Gm),m) \,+\, 
			\DS{2}(\DS{5}(\Gn,\Gm),2m-1)
\end{align*}
\end{recurrence}

\begin{proof}
We adopt the usual notation for an order-5 sequence $S$ with the following additions and modifications.
Let $\GfeatherS',\GlnonfeatherS',$ and $\GrnonfeatherS'$ be the subsequences of $\GS'$ consisting of,
respectively, double-feathers, non-double-feather doves, and non-double-feather hawks, all of which 
exclude wingtips, and let $\GfeatherS,\GlnonfeatherS,$ and $\GrnonfeatherS$ be the 
subsequences of $\GS$ made up of their children.  Recall that $\GfS$ and $\GlS$ are the children 
of dove and hawk wingtips in $\GS'$.

The contribution of local symbols to $|S|$ is bounded by 
$\sum_{q} \DS{5}(\Ln_q,m_q)$.  The global sequence $\GS$ is the union 
of five subsequences $\GfS, \GlS, \GfeatherS, \GlnonfeatherS,$ and $\GrnonfeatherS$.
Both $\GfS$ and $\GlS$ are order-4
sequences, so $|\GfS|+|\GlS| \le 2\cdot \DS{4}(\Gn,m)$.  Since double-feathers are all middle occurrences in $\GS'$,
$\GfeatherS$ is obtained by substituting for each block in $\GfeatherS'$ an order-$3$ DS sequence,
so $|\GfeatherS| \le \DS{3}(|\GfeatherS'|,m) \le \DS{3}(\DblFeather(\Gn,\Gm),m)$.

According to Lemma~\ref{lem:order5-nested}, the blocks of $\GlnonfeatherS'$
consist of {\em mutually} nested symbols.  The argument from Recurrence~\ref{thm:recurrence-odd}
shows that $\GlnonfeatherS$ is obtained by substituting an order-$2$ sequence for each block in $\GlnonfeatherS'$.  
The same is true for $\GrnonfeatherS$ as well, 
so 
\begin{align*}
|\GlnonfeatherS| + |\GrnonfeatherS| &\le \DS{2}(|\GlnonfeatherS'|,m) + \DS{2}(|\GrnonfeatherS'|,m)\\
&\le \DS{2}(|\GlnonfeatherS'| + |\GrnonfeatherS'|,2m-1) & \mbox{\{superadditivity of $\DS{2}$\}}\\
&< \DS{2}(|\GS'|,2m-1)\\
&\le \DS{2}(\DS{5}(\Gn,\Gm),2m-1)\\
\end{align*}
The last inequalities follow from fact that $\GlnonfeatherS'$ and $\GrnonfeatherS'$
are disjoint subsequences of $\GS'$, which is an order-5 DS sequence.
\end{proof}

Recurrences~\ref{rec:double-feather} and \ref{rec:DS-double-feather}
allow us to find closed-form bounds on the number of feathers and double-feathers,
and on the length of order-5 DS sequences.  Refer to Appendix~\ref{sect:appendix:order5}
for proof of Lemma~\ref{lem:doublefeather-bounds}.

\begin{lemma}\label{lem:doublefeather-bounds}
Let $n$ and $m$ be the alphabet size and block count.
After a parameter $i\ge 1$ is chosen let $j\ge 1$ be minimum such that $m \le a_{i,j}^c$,
where $c=3$ is fixed.
We have the following upper
bounds on $\FFeather(n,m), \DblFeather(n,m),$ and $\DS{5}(n,m)$.

\begin{align*}
\FFeather(n,m) &\le \nu_i'\paren{n + (cj)^2(m-1)}	& \mbox{where $\nu_i' = 3{i+1\choose 2} + 3$}\\
\DblFeather(n,m) &\le \nu_i''\paren{n + (cj)^3(m-1)}  & \mbox{where $\nu_i'' = 6{i+2\choose 3} + 8i$}\\
\DS{5}(n,m) &\le \mu_{5,i}\paren{n + (cj)^3(m-1)}  & \mbox{where $\mu_{5,i} = i2^{i+7}$}\\
\end{align*}
\end{lemma}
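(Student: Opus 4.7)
\medskip
\noindent\textbf{Proof plan.}
The plan is to prove all three bounds simultaneously by induction on the lexicographic pair $(i,j)$, in direct parallel to the proof of Lemma~\ref{lem:agreeable}. Given a sequence with $m$ blocks and $n$ symbols, at the inductive step ($i,j\ge 2$) I will invoke Recurrences~\ref{rec:double-feather} and \ref{rec:DS-double-feather} with the permissible uniform block partition of width $w=a_{i,j-1}^{\,c}$, so that $\Gm=\lceil m/w\rceil$ satisfies $\Gm\le a_{i-1,w}$ by the definition of Ackermann's function. This choice sets things up so that the local sequences $\{\LS_q\}$ can be bounded using the inductive hypothesis with parameters $(i,j-1)$, while the contracted sequence $\GS'$ is bounded using parameters $(i-1,w)$, and in the telescoping one uses $cw(\Gm-1)\le cm$ so that the $(cw)^2(\Gm-1)$ and $(cw)^3(\Gm-1)$ contributions are absorbed into $(cj)^2(m-1)$ and $(cj)^3(m-1)$ respectively.

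Base cases: when $m\le 2$ every occurrence is a wingtip, so $\FFeather=\DblFeather=0$; when $j=1$ we have $m\le 2^c=8$ and the three quantities are bounded crudely via $\FFeather\le \DS{4}-2n$ and $\DblFeather\le \DS{5}-2n$ combined with Lemma~\ref{lem:agreeable}; when $i=1$ the partition has $\Gm\le 2$ and the recurrences degenerate to $O(1)$-many extra terms that fit inside the constants $\nu_1'$, $\nu_1''$, $\mu_{5,1}$. In the inductive step for $\FFeather$, I split $n=\Gn+\sum_q\Ln_q$, apply the hypothesis to get $\sum_q\FFeather(\Ln_q,m_q)\le \nu_i'((n-\Gn)+(cj)^2(m-\Gm))$ and $\FFeather(\Gn,\Gm)\le \nu_{i-1}'(\Gn+(cw)^2(\Gm-1))$, and use Lemma~\ref{lem:agreeable}'s bound $\DS{3}(\Gn,m)-\Gn\le (2i+1)\Gn+(3i-2)cj(m-1)$ for the residual term. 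Matching coefficients on $\Gn$, $(n-\Gn)$, and $(cj)^2(m-1)$ yields a recurrence of the shape $\nu_i'\ge \nu_{i-1}'+3i+O(1)$, which telescopes to $\nu_i'=3\binom{i+1}{2}+3$. The same template applied to the $\DblFeather$ recurrence gives $\nu_i''\ge \nu_{i-1}''+2\nu_i'+O(1)$, telescoping to $\nu_i''=6\binom{i+2}{3}+8i$.

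For $\DS{5}(n,m)$ I feed Recurrence~\ref{rec:DS-double-feather} with the same partition. The dominant term is $\DS{2}(\DS{5}(\Gn,\Gm),2m-1)\le 2\mu_{5,i-1}(\Gn+(cw)^3(\Gm-1))+O(m)$, which contributes $2\mu_{5,i-1}$ to the coefficient of $\Gn$; the remaining terms $2\DS{4}(\Gn,m)$ and $\DS{3}(\DblFeather(\Gn,\Gm),m)$ contribute on the order of $2\mu_{4,i}+\mu_{3,i}\nu_i''=O(2^{i+5}+i^4\cdot 2^i)$ to that coefficient, using the closed forms from Lemma~\ref{lem:agreeable} and the $\nu_i''$ bound just established. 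The induction then reduces to verifying the purely arithmetic inequality
\[
\mu_{5,i}\;\ge\;2\mu_{5,i-1}\;+\;2\mu_{4,i}\;+\;\mu_{3,i}\,\nu_i''\;+\;O(1),
\]
and the choice $\mu_{5,i}=i\,2^{i+7}$ gives $\mu_{5,i}-2\mu_{5,i-1}=2^{i+7}$, which comfortably dominates the right-hand side for all $i\ge 2$. The main obstacle is this last bookkeeping step: because the $O(i^4)$-sized contribution of $\DS{3}$ applied to $\DblFeather$ already rides on top of a factor of $2^i$, one cannot be cavalier with Ackermann-style lower-order terms in $(cj)^3(m-1)$ versus $(cw)^3(\Gm-1)$; the bounds on $\nu_i''$ must be established \emph{before} the $\mu_{5,i}$ induction, and the constants in $\mu_{5,i}$ must be chosen large enough that a single extra $2^{i+7}$ of slack per induction step absorbs $2\mu_{4,i}$, $\mu_{3,i}\nu_i''$, and the $O(m)$ from the $\DS{2}$ reduction simultaneously.
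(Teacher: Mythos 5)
Your proposal follows essentially the same approach as the paper: induction on $(i,j)$ with the Ackermann-driven uniform block partition of width $a_{i,j-1}^c$, base cases via $\FFeather\le\DS{4}-2n$ and $\DblFeather\le\DS{5}-2n$, and inductive steps feeding Recurrences~\ref{rec:double-feather} and \ref{rec:DS-double-feather} to derive the constraints $\nu_i'\ge\nu_{i-1}'+3i$, $\nu_i''\ge\nu_{i-1}''+2\nu_i'+2$, and $\mu_{5,i}\ge 2\mu_{5,i-1}+2\mu_{4,i}+\mu_{3,i}\nu_{i-1}''$, which the stated closed forms satisfy. Only small cosmetic slips (e.g., $\nu_i''$ where the paper has $\nu_{i-1}''$ in the $\mu_{5,i}$ constraint, and a garbled $O(i^4\cdot 2^i)$ estimate) that do not affect correctness.
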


\subsection{Blocked versus 2-Sparse Order-$5$ Sequences}\label{sect:blockedvs2sparseorder5}

Lemma~\ref{lem:doublefeather-bounds} states that for any $i$, $\DS{5}(n,m) < \mu_{5,i}(n + (3j_i)^3m)$
where $j_i$ is minimum such that $m\le a_{i,j_i}^3$.  
Choose $\iota\ge 1$ be minimum such that $(3j_\iota)^3 \le \max\{\f{n}{m}, (3\cdot 3)^3\}$.
One can show that $\iota = \alpha(n,m) + O(1)$, implying
that $\DS{5}(n,m) = O((n+m)\mu_{5,\iota}) = O((n+m)\alpha(n,m)2^{\alpha(n,m)})$,
matching the construction from Section~\ref{sect:lb}.
According to Lemma~\ref{lem:Sharir}(\ref{item:Sharir2},\ref{item:Sharir4}) 
$\DS{5}(n) = O(\alpha(\alpha(n)))\cdot \DS{5}(n,3n-1)$.
In this Section we 
present a more efficient reduction from 2-sparse, order-5 DS sequences
to blocked order-5 sequences, thereby removing the extra $\alpha(\alpha(n))$ factor.

\begin{theorem}\label{thm:order5-ub}
$\DS{5}(n) = O(n\alpha(n)2^{\alpha(n)})$
and 
$\DS{5}(n,m) = m+O(n\alpha(n,m)2^{\alpha(n,m)})$.
\end{theorem}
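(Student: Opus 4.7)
The plan is to establish the blocked bound first and then to derive the $2$-sparse bound \emph{without} invoking Lemma~\ref{lem:Sharir}(\ref{item:Sharir2},\ref{item:Sharir4}) as a black box, since the latter's $\gamma_3(\gamma_5(n)) = \Theta(\alpha(\alpha(n)))$ overhead would spoil the target estimate $\DS{5}(n) = O(n\alpha(n)2^{\alpha(n)})$.

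For the blocked bound $\DS{5}(n,m) = m + O(n\alpha(n,m)\,2^{\alpha(n,m)})$, the route of Section~\ref{sect:proofthmmain} already yields $\DS{5}(n,m) = O((n+m)\alpha(n,m)2^{\alpha(n,m)})$, but with coefficient $\alpha(n,m)2^{\alpha(n,m)}$ on $m$ rather than the desired coefficient~$1$. I would strengthen Lemma~\ref{lem:doublefeather-bounds} to an inductive statement of the form
\[
\DS{5}(n,m) \;\le\; (m-1) \;+\; \tilde\mu_{5,i}\bigparen{n + g_i(m)},
\]
where $g_i(m)$ is a residue that collapses to $O(n)$ once $i = \alpha(n,m) + O(1)$, together with parallel strengthenings for $\FFeather$ and $\DblFeather$. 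The key structural observation is that the three non-local terms in Recurrence~\ref{rec:DS-double-feather} (the $\DS{4}$, $\DS{3}$, and $\DS{2}$ terms) each contain an $O(m)$ summand arising from their block-count argument; by the exact expressions of Lemma~\ref{lem:DS12} (e.g.\ $\DS{2}(N,M) = 2N + M - 2$), these summands are only $O(m)$ \emph{once}, not $O(m)$ multiplied by a slowly growing factor. Charging them additively at the top level of the recursion rather than multiplicatively at every level produces the coefficient $1$ on~$m$.

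For the $2$-sparse bound, I would bypass Lemma~\ref{lem:Sharir} and apply the top level of Recurrence~\ref{rec:DS-double-feather} directly to the $2$-sparse sequence. Given a $2$-sparse order-$5$ DS sequence $S$, partition it into $\Gm$ contiguous $2$-sparse intervals $S_1,\ldots,S_{\Gm}$, and treat the global skeleton $\GS'$ as a $\Gm$-block order-$5$ DS sequence. The just-established blocked bound then handles $\GS'$ and the derived sequences $\GfS, \GlS, \GfeatherS, \GlnonfeatherS, \GrnonfeatherS$, while the local intervals $\LS_q$ are treated inductively as $2$-sparse order-$5$ sequences over smaller alphabets. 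Choosing $\Gm$ so that each $\LS_q$ drops by one level of the inverse-Ackermann hierarchy (analogous to the choice of $\iota$ in Section~\ref{sect:proofthmmain}) matches the $\alpha(n)$ behavior of the blocked bound without paying any additional $\gamma_3$-style multiplicative factor.

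The principal obstacle is the first stage: threading coefficient~$1$ on $m$ through an inductive argument whose depth is $\alpha(n,m)$. Recurrence~\ref{rec:DS-double-feather} introduces a fresh $O(m)$ block-overhead at every level of recursion, and a naive induction multiplies these overheads together, reproducing the unwanted $m\cdot\alpha(n,m)2^{\alpha(n,m)}$ term. Controlling this requires a carefully designed tripartite invariant on $(\DS{5},\FFeather,\DblFeather)$ that isolates block-overhead from alphabet-growth, stricter than the scheme of Lemma~\ref{lem:agreeable}, together with a termination argument showing that the residue $g_i(m)$ is dominated by $n$ once $i = \alpha(n,m) + O(1)$. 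Once this invariant is in place, the $2$-sparse reduction and the final assembly into $\DS{5}(n) = O(n\alpha(n)2^{\alpha(n)})$ are essentially routine.
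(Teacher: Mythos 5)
Your proposal runs in the opposite direction from the paper and, more importantly, it leaves the hard part unaddressed. The paper does \emph{not} prove a strengthened blocked invariant first: it proves $\DS{5}(n)=O(n\alpha(n)2^{\alpha(n)})$ directly and then observes that the blocked bound follows for free (when $m\ge n$ one has $\alpha(n,m)\ge\alpha(n)$, so $\DS{5}(n,m)\le m-1+\DS{5}(n)$ already gives the claim; when $m<n$ the bound $\DS{5}(n,m)=O((n+m)\alpha(n,m)2^{\alpha(n,m)})$ already established in Section~\ref{sect:blockedvs2sparseorder5} is stronger than needed). So the strengthened invariant $\DS{5}(n,m)\le(m-1)+\tilde\mu_{5,i}(n+g_i(m))$ is both unnecessary and doubtful: the $\DS{4}(\Gn,m)$ and $\DS{3}(\cdot,m)$ terms in Recurrence~\ref{rec:DS-double-feather} carry $m$-coefficients on the order of $\mu_{4,i}(cj)^2$, not $O(1)$, and there is no clean one-time charge to make. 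Your appeal to the exact form of $\DS{2}(N,M)=2N+M-2$ only covers the last term.

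The real obstacle, which your sketch skips over as ``essentially routine,'' is the recombination step in the $2$-sparse case. When you project a $2$-sparse $S$ onto the global, first, last, feather, or non-feather subalphabets, the resulting subsequences $\GfS,\GlS,\GfeatherS,\GlnonfeatherS,\GrnonfeatherS,\LS$ are \emph{not} $2$-sparse---removing symbols creates immediate repeats. You cannot bound their lengths by $\DS{s}$ of their alphabet size, and you cannot recurse on them as $2$-sparse sequences as you propose to do for the $\LS_q$. The paper's resolution is genuinely new machinery: it takes the maximal $2$-sparse subsequences $\GfS^*,\ldots,\LS^*$, bounds each via the blocked lemmas, and then proves $|S|\le h\cdot(|\LS^*|+|\GfeatherS^*|+\cdots+|\GlS^*|)$ for a constant $h$ by a counting argument---partition $S$ into length-$h$ pieces, map each position to its image in a $2$-sparse constituent, and observe two adjacent pieces with identical image sets would give an order-$5$ DS sequence of length $2h>\DS{5}(6)$ over six letters. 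Without this (or something playing the same role), your reduction does not close. You should also commit to the specific interval partition: the paper greedily partitions $S$ into \emph{maximal order-$3$} DS sequences so that $\Gm\le 2n-1$ by Sharir's argument and the local part is bounded directly by $\DS{3}(n-\Gn)$, not recursed on.
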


\begin{proof}
The second bound is asymptotically the same as $O((n+m)\alpha(n,m)2^{\alpha(n,m)})$ if $m=O(n)$.
If not, we remove up to $m-1$ repeated symbols at block boundaries, yielding a 
2-sparse, order-5 DS sequence.
Our remaining task is therefore to prove that $\DS{5}(n) = O(n\alpha(n)2^{\alpha(n)})$.

Let $S$ be a $2$-sparse, order-$5$ DS sequence with $\|S\|=n$.
Greedily partition $S$ into maximal order-$3$ DS sequences $S_1S_2\cdots S_{m}$.  
According to Sharir's argument~\cite{Sharir87}, $m\le 2n-1$.  See the proof of Lemma~\ref{lem:Sharir}(\ref{item:Sharir2})
in Appendix~\ref{appendix:Sharir}.
As usual, let $\LS,\GS\subseq S$ be the subsequences of local and global symbols, 
and let $\GS'$ be derived by contracting each interval to a single block.
The number of global symbols is $\Gn = \|\GS\|$.
In contrast to the situations we considered earlier, 
$\LS$ and $\GS$ are {\em neither} 2-sparse nor partitioned into blocks.

Let $\Ensemble^*(\GS') = \{\Tree^*(\GS')\} \cup \{\FirstTree^*[u],\LastTree^*[u]\}_{u\in\Tree^*(\GS')}$ be the optimal derivation tree ensemble for $\GS'$.
This ensemble categorizes all occurrences in $\GS'$ as (i) double-feathers, (ii) non-double-feather doves, (iii) non-double-feather hawks, 
(iv) dove wingtips, or (v) hawk wingtips.  Furthermore, occurrences in $\GS$ inherit the category of their corresponding
occurrence in $\GS'$.
Let $\GfeatherS',\GlnonfeatherS',$ and $\GrnonfeatherS'$ be the subsequences of $\GS'$ in categories (i--iii)
and
let $\GfeatherS,\GlnonfeatherS,\GrnonfeatherS,\GfS,$ and $\GlS$ be the subsequences of $\GS$
in categories (i--v), none of which are necessarily 2-sparse. 
Define $\GfeatherS^*,\GlnonfeatherS^*,\GrnonfeatherS^*,\GfS^*,$ and $\GlS^*$ to be their maximal length 
2-sparse subsequences, and define $\LS^*$ to be the maximal length 2-sparse subsequence of $\LS$.

Lemma~\ref{lem:Sharir}(\ref{item:Sharir2}) and the arguments from Recurrence~\ref{rec:DS-double-feather}
imply that
\begin{align}
|\GfeatherS^*| + |\GlnonfeatherS^*| + |\GrnonfeatherS^*| + |\GfS^*| + |\GlS^*|
 &\le  \DS{3}(\varphi,2\varphi-1) + 2\cdot \DS{5}(\Gn,2n-1) + 4\cdot \DS{4}(\Gn,2\Gn-1)\label{eqn:sixconstituents}\\
\istrut{7}\mbox{where }\; \varphi &= \DblFeather(\Gn,2n-1)
\nonumber
\end{align}

The sequence $\GfeatherS^*$ is obtained by substituting for each block in $\GfeatherS'$ a 2-sparse,
order-3 DS sequence.  Since $|\GfeatherS^*| \le \varphi$, by the superadditivity of $\DS{3}$ 
we have $|\GfeatherS^*| \le \DS{3}(\varphi)$, which is at most $\DS{3}(\varphi,2\varphi-1)$ by Lemma~\ref{lem:Sharir}(\ref{item:Sharir2}).
By the same reasoning, $|\GfS^*|$ and $|\GlS^*|$ 
are each at most $\DS{4}(\Gn) \le 2\cdot \DS{4}(\Gn,2\Gn-1)$ 
and $|\GlnonfeatherS^*| + |\GrnonfeatherS^*| \le 
\DS{2}(|\GlnonfeatherS'|+ |\GrnonfeatherS'|) < \DS{2}(|\GS'|) < 2\cdot \DS{5}(\Gn,2n-1)$.
We can also conclude that $|\LS^*| \le \DS{3}(n-\Gn) \le \DS{3}(n-\Gn,2(n-\Gn)-1)$.

In bounding various sequences above, the second argument of $\DS{s}$ and $\DblFeather$
is never more than $2\cdot \varphi$.
Choose $\iota$ to be minimal such that $2\cdot\varphi \le a_{\iota,3}^3$, so 
$j=3$ will be constant whenever we invoke Lemmas~\ref{lem:agreeable} and \ref{lem:doublefeather-bounds}
with $s\le 5$ and $c=3$.
It is straightforward to show that $\iota = \alpha(n) + O(1)$.

Observe that $S$ can be constructed by shuffling its six non-2-sparse constituent subsequences
$\GfeatherS$, $\GlnonfeatherS$, $\GrnonfeatherS$, $\GfS$, $\GlS$, $\LS$ in some fashion 
that restores 2-sparseness.  In other words, there is a 1-1 map between positions in $S$
and positions in its six constituents, and a surjective map $\psi$ from positions in $S$ to 
positions in its 2-sparse constituents 
$\GfeatherS^*$, $\GlnonfeatherS^*$, $\GrnonfeatherS^*$, $\GfS^*$, $\GlS^*$, $\LS^*$.
Partition $S$ into intervals $T_1T_2\cdots T_{\ceil{|S|/h}}$,
each with length $h = \ceil{\f{\DS{5}(6)+1}{2}} = O(1)$.
The image of $\psi$ on two consecutive intervals 
$T_{p-1}$ and $T_{p}$ (where $p<\ceil{|S|/h}$) cannot be
identical, for otherwise $T_{p-1} T_p$ would be a $2$-sparse, order-$5$ DS sequence with length 
$2h > \DS{5}(6)$ over a 6-letter alphabet, a contradiction.
Therefore, 
\begin{align*}
|S| &\le \zero{h\cdot (|\LS^*|+|\GfeatherS^*| + |\GlnonfeatherS^*| + |\GrnonfeatherS^*| + |\GfS^*| + |\GlS^*|)}\\
&= \zero{h \cdot n\cdot O(\mu_{3,\iota} + \mu_{3,\iota}\nu_{\iota}'' + \mu_{5,\iota} + \mu_{4,\iota})}\\
&= O(n\iota 2^\iota)  & \mbox{\{Since $\mu_{3,\iota} = O(\iota),\; \nu_{\iota}'' = O(\iota^3),\; \mu_{4,\iota}=O(2^{\iota})$, and 
$\mu_{5,\iota} = O(\iota2^{\iota})$\}}\\
&= O(n\alpha(n)2^{\alpha(n)}).
\end{align*}

\end{proof}

\section{Discussion and Open Problems}\label{sect:discussion}

Davenport-Schinzel sequences have been applied almost exclusively to problems in
combinatorial and computational geometry, with only a smattering of applications in 
other areas.  For example, see~\cite{SalvoP07,AlstrupLST97,Pettie-Deque-08,Pettie10a}.
One explanation for this, which is undoubtedly true, is that there is a natural fit
between geometric objects and their characterizations in terms of forbidden substructures.\footnote{E.g.,
in general position two lines do not share two points, three spheres do not share three points, degree-$d$ polynomials
do not have $d+1$ zeros, and so on.}
An equally compelling explanation, in our opinion, is that DS sequences are simply underpublicized,
and that the broader algorithms 
community is not used to analyzing algorithms and data structures with
forbidden substructure arguments.  We are optimistic that with increased awareness
of DS sequences
and their generalizations 
(e.g., forbidden 0-1 matrices)
the {\em forbidden substructure method}~\cite{Pettie10a} will become a standard tool in every algorithms researcher's toolbox.

Our bounds on Davenport-Schinzel sequences are sharp
for every order $s$, leaving little room for improvement.\footnote{That is, they cannot be expressed
more tightly using a generic inverse-Ackermann function $\alpha(n)$.  See Remark~\ref{remark:Ackermann-invariance}.}   However, there are many open problems 
on the geometric realizability of DS sequences and on various generalizations of DS sequences.
The most significant realizability result is due to Wiernik and Sharir~\cite{WS88}, who proved that
the lower envelope of $n$ line segments (that is, $n$ linear functions, each defined over a different interval)
has complexity $\Theta(\DS{3}(n)) = \Theta(n\alpha(n))$.  It is an open question whether this result can be generalized 
to degree-$s$ polynomials or polynomial segments.
In particular, it may be that the lower envelope of any set of $n$ degree-$s$ polynomials has complexity $O(n)$,
where $s$ only influences the leading constant.
Although our results do not address problems
of geometric realizability,
we suspect that modeling lower envelopes by derivation trees (rather than just sequences)
will open up a new line of attack on these fundamental realizability problems.

There are several challenging open problems in the realm of {\em generalized} Davenport-Schinzel sequences,
the foremost one being to characterize the set 
of all linear forbidden subsequences: those $\sigma$ for which $\Ex(\sigma,n)=O(n)$~\cite{Klazar02,Pettie-GenDS11}.
Linear forbidden subsequences and minimally nonlinear ones were exhibited by
Adamec, Klazar, and Valtr~\cite{AKV92}, Klazar and Valtr~\cite{KV94}, and 
Pettie~\cite{Pettie-DS-nonlin11,Pettie-GenDS11,Pettie-FH11,Pettie-SoCG11}.
It is also an open problem to characterize minimally non-linear 
forbidden 0-1 matrices~\cite{FurediH92}.  Though far from being solved,
there has been significant 
progress on this problem in the last decade~\cite{MarcusT04,Tardos05,Fulek09,Geneson09,Pettie-FH11,Pettie-GenDS11}.


%
%
%
%
%
%
%
%
%
%
\appendix

\section{Proof of Lemma~\ref{lem:Sharir}}\label{appendix:Sharir}

Recall the four parts of Lemma~\ref{lem:Sharir}.\\

\noindent{\bf Restatement of Lemma~\ref{lem:Sharir}\ } 
{\em
Let $\gamma_s(n) : \mathbb{N}\rightarrow\mathbb{N}$ be a non-decreasing function such that 
$\DS{s}(n) \le \gamma_s(n)\cdot n$.
\begin{enumerate}
\item (Trivial) For $s\ge 1$,  $\DS{s}(n,m) \le m-1 + \DS{s}(n)$.
\item (Sharir~\cite{Sharir87}) For $s\ge 3$, $\DS{s}(n) \le \gamma_{s-2}(n)\cdot \DS{s}(n,2n-1)$.
(This generalizes Hart and Sharir's proof~\cite{HS86} for $s=3$.)
\item (Sharir~\cite{Sharir87}) For $s\ge 2$, $\DS{s}(n) \le \gamma_{s-1}(n)\cdot \DS{s}(n,n)$.
\item (New) For $s\ge 3$, $\DS{s}(n) = \gamma_{s-2}(\gamma_s(n))\cdot \DS{s}(n,3n-1)$.
\end{enumerate}
}

\begin{proof}
Removing at most $m-1$ repeated symbols at block boundaries makes any sequence 2-sparse,
which implies Part (\ref{item:Sharir1}).

For Parts (\ref{item:Sharir2}) and (\ref{item:Sharir3}), consider the following method for greedily partitioning a 2-sparse, order-$s$ DS sequence
$S$ with $\|S\|=n$.
Write $S$ as $S_1S_2\cdots S_m$, where $S_1$ is the longest order-$(s-2)$ prefix of $S$,
$S_2$ is the longest order-$(s-2)$ prefix of the remainder of the sequence, and so on.
Each $S_q$ contains the first or last occurrence of some symbol, which implies $m\le 2n-1$ since $S_1$ must contain the first
occurrence of at least two symbols.  To see this, consider the symbol $b$ which caused the termination of $S_q$, 
that is, $S_q$ has order $s-2$ but $S_qb$ contains an alternating subsequence $\sigma_s = aba\cdots ab$ or $ba\cdots ab$ with length $s$;
whether it starts with $a$ depends on the parity of $s$.
If $S_q$ contained neither the first nor last occurrence of both $a$ and $b$, $S$ would contain an alternating subsequence 
$\sigma_{s+2}$ of length $s+2$, a contradiction.  
Obtain $S'$ from $S$ replacing each $S_q$ with a block containing exactly one occurrence of each symbol in $\Sigma(S_q)$.
Thus, 
\begin{align*}
|S| = \sum_{q=1}^m |S_q|
	&\le \sum_{q=1}^m \gamma_{s-2}(\|S_q\|)\cdot \|S_q\|	& \mbox{\{$S_q$ has order $s-2$, defn.~of $\gamma_{s-2}$\}}\\
	&\le \gamma_{s-2}(n)\cdot \sum_{q=1}^m \|S_q\|		& \mbox{\{$\gamma_{s-2}$ is non-decreasing\}}\\
	&= \gamma_{s-2}(n) \cdot |S'| \le \gamma_{s-2}(n)\cdot \DS{s}(n,m)	& \mbox{\{$S'\subseq S$ has order $s$\}}
\end{align*}
which proves Part~(\ref{item:Sharir2}).  Part~(\ref{item:Sharir3}) is proved in the same way except that we partition
$S$ into order-$(s-1)$ DS sequences.  In this case each $S_q$ must contain the last occurrence of some symbol, so $m\le n$.
We turn now to Part~(\ref{item:Sharir4}).

Partition $S$ into order-$(s-2)$ sequences $S_1S_2\cdots S_m$ as follows.  After $S_1,\cdots,S_{q-1}$
have been selected, let $S_{q}$ be the longest prefix of the remaining sequence that (i) has order $s-2$
and (ii) has length at most $\gamma_s(n)$.  The number of such sequences that were terminated due to (i) is at most
$2n-1$, by the same argument from Part~(\ref{item:Sharir2}).  The number terminated due to (ii) is at most $n$
since $|S| \le \gamma_s(n)\cdot n$, so $m \le 3n-1$.
Obtain an $m$-block sequence $S'$ in the usual way, by replacing each $S_q$ with a block containing its alphabet.
Thus,
\begin{align*}
|S| = \sum_{q=1}^m |S_q|
	&\le \sum_{q=1}^m \gamma_{s-2}(\|S_q\|)\cdot \|S_q\|	& \mbox{\{$S_q$ has order $s-2$, defn.~of $\gamma_{s-2}$\}}\\
	&\le \gamma_{s-2}(\gamma_s(n)) \cdot \sum_{q=1}^m \|S_q\|		& \mbox{\{$\gamma_{s}$ is non-decreasing, 
	$\|S_q\|\le |S_q|\le \gamma_s(n)$\}}\\
	&= \gamma_{s-2}(\gamma_s(n)) \cdot \zero{|S'| \le \gamma_{s-2}(\gamma_s(n))\cdot \DS{s}(n,m)}	& \mbox{\{$S'\subseq S$ has order $s$\}}
\end{align*}
\end{proof}

Note that while Part~(\ref{item:Sharir4}) is stronger than Part~(\ref{item:Sharir2}), it requires an upper bound on
$\gamma_s(n)$ to be applied, which is obtained by invoking Part~(\ref{item:Sharir2}).
In the end it does not matter precisely what $\gamma_s(n)$ is.
Once $\gamma_s(n)$ is known to be some primitive recursive
function of $\alpha(n)$, it follows that $\gamma_{s-2}(\gamma_s(n)) = \gamma_{s-2}(\alpha(n)) + O(1)$.

\section{Proof of Lemma~\ref{lem:agreeable}}\label{appendix:agreeable}

Recall our definition of Ackermann's function: $a_{1,j} = 2^j$, $a_{i,1} = 2$, 
and $a_{i,j} = w\cdot a_{i-1,w}$ where $w=a_{i,j-1}$.
Our task in this section is to prove the omnibus 
Lemma~\ref{lem:agreeable} in several stages.\\

\noindent {\bf Restatement of Lemma~\ref{lem:agreeable}}\ 
{\em 
Let $s\ge 1$ be the order parameter, $c\ge s-2$ be a constant,
and $i\ge 1$ be an arbitrary integer.
The following upper bounds on
$\DS{s}$ and $\Feather{s}$
hold for all $s\ge 1$ and all odd $s\ge 5$, respectively.
Define $j$ to be maximum such that $m\le a_{i,j}^c$.
\begin{align*}
\DS{1}(n,m) &= n + m-1						& \mbox{$s=1$}\\
\DS{2}(n,m) &= 2n + m-2						& \mbox{$s=2$}\\
\DS{3}(n,m) &\le (2i+2)n + (3i-2)cj(m-1)			& \mbox{$s=3$}\\
\DS{s}(n,m) &\le \M{s}{i}\paren{n + (cj)^{s-2}(m-1)}		& \mbox{all $s\ge 4$}	\\ 
\Feather{s}(n,m) &\le \N{s}{i}\paren{n + (cj)^{s-2}(m-1)}	& \mbox{odd $s\ge 5$}
\intertext{The values $\{\M{s}{i},\N{s}{i}\}$ are defined as follows, where $t=\floor{\frac{s-2}{2}}$.}
\istrut{5}\M{s}{i} 	&= \left\{
	\begin{array}{l}
		2^{{i+t+3}\choose t}	- 3(2(i+t+1))^t			\\
		\istrut{6}\fr{3}{2}(2(i+t+1))^{t+1}2^{{i+t+3}\choose t}
	\end{array}
	\right.
	&
	\begin{array}{r}
		\mbox{even $s\ge 4$\hcm[-.18]}\\
		\istrut{6}\mbox{odd $s\ge 5$\hcm[-.18]}
	\end{array}\\
\istrut{8}
\N{s}{i} &= 4\cdot 2^{{i+t+3}\choose t}		
	&
	\begin{array}{r}
		\mbox{odd $s\ge 5$\hcm[-.18]}
	\end{array}
\end{align*}
}

\paragraph{Overview.} 
The proof is by induction on $(s,i,j)$ with respect to any fixed $c\ge s-2$.
In Section~\ref{sect:agreeable:basecases} we confirm that Lemma~\ref{lem:agreeable}
holds when $i=1$.
In Section~\ref{sect:blockpartition} we discuss the role that Ackermann's function plays
in selecting block partitions for Recurrences~\ref{thm:recurrence-even},
\ref{lem:feather-rec}, and \ref{thm:recurrence-odd}.
In Section~\ref{sect:agreeable:order3} we confirm Lemma~\ref{lem:agreeable} at $s=3$.
In Section~\ref{sect:agreeable:lb} we identify sufficient lower bounds on
the elements of $\{\M{s}{i},\N{s}{i}\}_{s\ge 2,i\ge 1}$, then, in Section~\ref{sect:agreeable:happy},
prove that the particular ensemble $\{\M{s}{i},\N{s}{i}\}_{s\ge 2,i\ge 1}$ proposed
in Lemma~\ref{lem:agreeable} does, in fact, satisfy these lower bounds.

\subsection{Base Cases}\label{sect:agreeable:basecases}

\begin{lemma}\label{lem:i=1}
Let $n,m,$ and $s\ge 2$ be the alphabet size, block count, and order parameters.
Given $i\ge 1$, let $j'$ be minimum such that $m\le a_{i,j'}$.
Whether $i=1$ and $j'\ge 1$ or $j'=1$ and $i>1$, we have
\begin{align*}
\Feather{s}(n,m) &\;\le\; \DS{s}(n,m) \;\le\; 2^{s-1}n + j'^{s-2}(m-1).
\end{align*}
\end{lemma}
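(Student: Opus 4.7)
}
The inequality $\Feather{s}(n,m)\le \DS{s}(n,m)$ is immediate, since feathers are occurrences in $S$ and an occurrence-count is trivially bounded by $|S|$. The rest of the plan concerns the upper bound $\DS{s}(n,m)\le 2^{s-1}n + (j')^{s-2}(m-1)$. The plan is to proceed by induction whose outer parameter is $s$ and whose inner parameter is $j'$ in the case $i=1$; the case $j'=1$ (and any $i\ge 1$) is a true base case, since $a_{i,1}=2$ forces $m\le 2$.

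The base cases are disposed of quickly. For $s=2$, Lemma~\ref{lem:DS12} gives $\DS{2}(n,m)=2n+m-2\le 2n+(m-1)$, which is exactly the claimed bound with $(j')^{0}=1$. For $j'=1$ (so $m\le 2$), every block has at most $n$ distinct symbols, hence $\DS{s}(n,m)\le mn\le 2n$, and $2n\le 2^{s-1}n+(j')^{s-2}(m-1)$ for $s\ge 2$.

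For the inductive step ($s\ge 3$, $i=1$, $j'\ge 2$) I would invoke Recurrence~\ref{thm:recurrence-even} with $\Gm=2$ and a balanced block partition $m_1,m_2\le\lceil m/2\rceil\le a_{1,j'-1}$. This gives four terms to estimate. The two local summands are handled by the inner inductive hypothesis (same $s$, parameter $j'-1$), contributing at most $2^{s-1}(n-\Gn)+(j'-1)^{s-2}(m-2)$. The first/last term $2\cdot\DS{s-1}(\Gn,m)$ is handled by the outer inductive hypothesis (order $s-1$, same $j'$), contributing at most $2^{s-1}\Gn+2(j')^{s-3}(m-1)$. The middle term collapses: since $\DS{s}(\Gn,2)=2\Gn$ and the base case of the lemma shows $\DS{s}(\Gn,\Gm)-2\Gn\le 0$ for $\Gm=2$, we have $\DS{s-2}(0,m)=0$. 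Summing everything and using the binomial identity $(j')^{s-2}\ge (j'-1)^{s-2}+(s-2)(j'-1)^{s-3}$ should yield the claimed bound.

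The hard part, as far as I can see, is precisely this final arithmetic verification. A naive summation already costs $(j'-1)^{s-2}(m-2)+2(j')^{s-3}(m-1)$ outside of the $2^{s-1}n$ term, and for the small orders $s=3,4$ this is \emph{not} always dominated by $(j')^{s-2}(m-1)$; the discrepancy is only a small additive amount, but it has to be absorbed somehow. I expect the actual write-up either strengthens the inductive hypothesis by a small correction term (e.g.\ subtracting a suitable linear term so that the first/last contribution telescopes into the middle argument), or selects the partition slightly asymmetrically so that the $(j'-1)^{s-2}$ coefficient is multiplied by exactly $m-1-O(1)$ rather than $m-2$. For $s\ge 5$ and $j'\ge 3$, the binomial slack is already comfortable and no such refinement is needed. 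Either way, the entire proof is elementary bookkeeping once the inductive skeleton above is fixed.
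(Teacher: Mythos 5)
Your inductive skeleton is exactly the paper's: double induction on $(s,j')$, Recurrence~\ref{thm:recurrence-even} with $\Gm=2$ and a halved block partition of width $a_{1,j'-1}$, and the observation that the middle term vanishes since $\DS{s}(\Gn,2)=2\Gn$. The base cases are handled the same way too. So the architecture is right.

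The one genuine gap is the first/last term, and you correctly flagged it as the sore spot, but your guesses about the fix are both off. Bounding it by $2\cdot\DS{s-1}(\Gn,m)$ at parameter $j'$ gives $2(j')^{s-3}(m-1)$, which, as you computed, cannot be absorbed into $(j')^{s-2}(m-1)$ for small $s$ once $m$ is allowed to be as large as $2^{j'}$. The paper neither strengthens the inductive hypothesis nor tilts the partition. Instead it exploits the special structure of $\Gm=2$: every global symbol makes all its first occurrences inside $S_1$ and all its last occurrences inside $S_2$, so $\GfS=\GfS_1$ and $\GlS=\GlS_2$. The first/last contribution is therefore $\DS{s-1}(\Gn,\bl{S_1})+\DS{s-1}(\Gn,\bl{S_2})$ rather than $2\DS{s-1}(\Gn,m)$. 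Since $\bl{S_1},\bl{S_2}\le a_{1,j'-1}$, each of these is bounded by the inductive hypothesis at parameter $j'-1$, yielding $2^{s-1}\Gn+(j'-1)^{s-3}(m-2)$. Combined with the local term $2^{s-1}(n-\Gn)+(j'-1)^{s-2}(m-2)$, the total is
\[
2^{s-1}n+\bigl[(j'-1)^{s-2}+(j'-1)^{s-3}\bigr](m-2),
\]
and the elementary identity $(j'-1)^{s-2}+(j'-1)^{s-3}=(j'-1)^{s-3}\,j'\le (j')^{s-2}$ closes the calculation with room to spare, for all $s\ge 3$ including $s=3,4$. The key point you were missing is that the first/last sequences get to inherit the \emph{decremented} $j'$ parameter, not the full one, precisely because they are confined to a single half of the partition.
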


\begin{proof} 
First note that $\Feather{s}(n,m) \le \DS{s}(n,m) - 2n$ holds trivially since, in the worst case,
every occurrence in the sequence is a feather, except for the first and last occurrence of each letter.

At $s=2$ the claim follows directly from Lemma~\ref{lem:DS12}.
At $s\ge 3,j'=1$, the claim is trivial since there are only $a_{i,1}=2$ blocks and $\DS{s}(n,2)=2n$.

In the general case we have $s\ge 3$ and $j'>1$.
Let $S$ be an order-$s$, $m$-block sequence over an $n$-letter alphabet, where $m\le a_{1,j'}=2^{j'}$.  
Let $S=S_1S_2$ be the partition of $S$ using a uniform block partition with width $a_{1,j'-1} = 2^{j'-1}$,
so $\bl{S_1} = a_{1,j'-1}$ and $\bl{S_2} = m-a_{1,j'-1} \le a_{1,j'-1}$.
Note that $\GS'=\beta_1\beta_2$ consists of two blocks, where each $\beta_q$ is some permutation of the global alphabet $\GSigma$.
Since there are no middle occurrences in $\GS'$ or $S$
we can apply a simplified version of Recurrence~\ref{thm:recurrence-even}.
\begin{align*}
\lefteqn{\DS{s}(n,m)}\\
 &\leq \sum_{q=1,2} \DS{s}(\Ln_q, \bl{S_q}) \;+\; \DS{s-1}(\Gn, \bl{S_1}) \;+\; \DS{s-1}(\Gn, \bl{S_2}) & \mbox{\{local, first, and last\}}\\
		&\leq 2^{s-1}(n-\Gn) + (j'-1)^{s-2}(m-2) \;+\; 2(2^{s-2}\Gn)
		+ (j'-1)^{s-3}(m-2)     & \mbox{\{inductive~hypothesis\}}\\
		&< 2^{s-1}n \,+\, j'^{s-2}(m-1)\hcm[.5]			
\end{align*}
The last inequality follows from the fact that when $s\ge 3$,
$(j'-1)^{s-2} + (j'-1)^{s-3} \le j'^{s-2}$.
This concludes the induction. 
\end{proof}

If we introduce the `$c$' parameter and define $j$ to be minimum such that $m\le a_{i,j}^c$,
Lemma~\ref{lem:i=1} implies that $\DS{s}(n,m) \le 2^{s-1}n + (cj)^{s-2}(m-1)$ since $j' \le cj$.
Note that by definition of Ackermann's function, 
$a_{1,j}^c = (2^j)^c = a_{1,cj}$ and $a_{i,1}^c = 2^c = a_{1,c}$.

Lemma~\ref{lem:i=1} implies the claims of Lemma~\ref{lem:agreeable} at $i=1$.
When $s=3$, $2i+2 = 4 = 2^{s-1}$ and $3i-2 = 1$.
When $s\ge 4$ is even, $\M{s}{1} = 2^{t+4\choose t} - 3(2(t+2))^t \ge 2^{2t+1} = 2^{s-1}$.
When $s\ge 5$ is odd, $\M{s}{1} = \fr{3}{2}(2(t+2))^{t+1}2^{{t+4}\choose t} \ge 2^{2t+2} = 2^{s-1}$
and $\N{s}{1} = 4\cdot 2^{t + 4\choose t} \ge 2^{2t+2} = 2^{s-1}$.
The bounds above also imply that Lemma~\ref{lem:agreeable} holds at $j=1$ and $i>1$
since $a_{i,1}^c = a_{1,1}^c$ and both $\M{s}{i}$ and $\N{s}{i}$ are increasing in $i$.

\subsection{Block Partitions and Inductive Hypotheses}\label{sect:blockpartition}

When analyzing order-$s$ DS sequences we express the block count $m$
and partition size $\Gm$ in terms of constant powers of Ackermann's function $\{a_{i,j}^c\}$,
where the constant $c\ge s-2$ is fixed.  Recall that once $i$ is selected, $j$ is minimal such that $m \le a_{i,j}^c$. 
The base cases $i=1$ and $j=1$ have been handled so 
we can assume both are at least 2.  Let $w=a_{i,j-1}$.  

We always choose a uniform block partition $\{m_q\}_{1\le q\le \Gm}$ 
with width $w^c$, that is, $m_q = w^c$ for all $q<\Gm = \ceil{m/w^c}$ and the leftover $m_{\Gm}$ may be smaller.
When invoking the inductive hypothesis (Lemma~\ref{lem:agreeable}) 
on the $\Gm$-block sequence $\GS'$
we use parameter $i-1$.  In all other invocations of the inductive hypothesis we use parameter $i$.
When applied to any $m_q$-block sequences the `$j$' parameter is decremented since 
$m_q \le w^c = a_{i,j-1}^c$.
When applied to a $\Gm$-block sequence the `$j$' parameter is $w$ since
\[
\Gm = \ceil{\frac{m}{w^c}} \le \paren{\frac{a_{i,j}}{w}}^c = a_{i-1,w}^c.
\]
Furthermore, in such an invocation the dependence on $\Gm$ will always be at most linear in $m$
since $(cw)^{s-2}(\Gm-1) \le (cw)^{s-2}(\ceil{\f{m}{w^c}}-1) \le c^{s-2}(m-1)$.  This is the reason we require
the lower bound $c\ge s-2$.\\

If one is more familiar with the slowly growing row-inverses of Ackermann's function, it may be
helpful to remember that $cj = \log m - O(1)$ when $i=1$
and that $j = \log^{[i-1]}(m) - O(1)$ when $i>1$, the effect of the $c$ parameter being 
negligible since $a_{i,j}$ and $a_{i,j}^c$ are essentially identical relative to any sufficiently slowly growing function.\footnote{Recall that $\log^{[i-1]}(m)$ is short for $\log^{\star\cdots\star}(m)$ with $i-1$ $\star$s.}
Thus, the bounds of Lemma~\ref{lem:agreeable} could be rephrased
as $\DS{s}(n,m) \le \M{s}{i}\paren{n + O\paren{m(\log^{[i-1]}(m))^{s-2}}}$.
Since $\M{s}{i}$ is increasing in $i$, the best bounds are obtained by choosing $i$ to be minimal
such that $\log^{[i-1]}(m) = n/m + O(1)$.

\subsection{Order $s=3$}\label{sect:agreeable:order3}

\begin{lemma} {\bf (Order $s=3$)}
Let $n$ and $m$ be the alphabet size and block count of an order-3 DS sequence $S$.
For any $i, c\ge 1$, define $j$ to be minimum such that $m \le a_{i,j}^c$.
Then $\DS{3}$ is bounded by
\[
\DS{3}(n,m) \le (2i+2)n + (3i-2)cj(m-1)
\]
\end{lemma}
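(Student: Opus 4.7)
The plan is to induct on the pair $(i,j)$ with $s=3$ fixed, using Recurrence~\ref{thm:recurrence-even} as the engine. The base cases $i=1$ (any $j$) and $j=1$ (any $i$) are already covered by Lemma~\ref{lem:i=1}: there we have $\DS{3}(n,m)\le 4n + cj(m-1)$, and since $(2i+2)\ge 4$ and $(3i-2)\ge 1$ for $i\ge 1$, this dominates the claimed bound when $i=1$ or $j=1$. So we may assume $i,j\ge 2$ and invoke the inductive hypothesis at $(i,j-1)$ and $(i-1,w)$ where $w=a_{i,j-1}$.

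For the inductive step, apply Recurrence~\ref{thm:recurrence-even} with the uniform block partition of width $w^c$, so $\Gm=\lceil m/w^c\rceil\le a_{i-1,w}^c$ and $m_q\le w^c=a_{i,j-1}^c$ for every $q$. Using $\DS{2}(\Gn,m)=2\Gn+m-2$ from Lemma~\ref{lem:DS12} and $\DS{1}(x,m)=x+m-1$, the recurrence gives
\begin{align*}
\DS{3}(n,m)\;\le\;&\sum_{q=1}^{\Gm}\DS{3}(\Ln_q,m_q)\;+\;2\DS{2}(\Gn,m)\;+\;\DS{1}\bigl(\DS{3}(\Gn,\Gm)-2\Gn,\,m\bigr)\\
\le\;&(2i{+}2)(n-\Gn)\,+\,(3i{-}2)c(j{-}1)(m-\Gm)\\
&\;+\;4\Gn+2m-4\\
&\;+\;\bigl[(2i)\Gn+(3i{-}5)cw(\Gm{-}1)\bigr]-2\Gn+m-1,
\end{align*}
where I have applied the inductive hypothesis to each $\LS_q$ with parameters $(i,j-1)$ and to $\GS'$ with parameters $(i-1,w)$. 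Note that $\sum_q(m_q-1)=m-\Gm$.

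The decisive bookkeeping step is to observe that the coefficient of $\Gn$ collapses: $-(2i{+}2)+4+(2i{-}2)=0$. Hence the bound reduces to
\[
\DS{3}(n,m)\;\le\;(2i{+}2)n\;+\;(3i{-}2)c(j{-}1)(m-\Gm)\;+\;(3i{-}5)cw(\Gm{-}1)\;+\;3m-5.
\]
It remains to show the three tail terms are absorbed by $(3i{-}2)cj(m-1)$. Subtracting, the target inequality is
\[
(3i{-}5)cw(\Gm{-}1)+3m-5\;\le\;(3i{-}2)cm-(3i{-}2)cj+(3i{-}2)c(j{-}1)\Gm.
\]
Using $w\Gm\le m+w$ (since $c\ge 1$ and $\Gm\le m/w^c+1$) we get $(3i{-}5)cw(\Gm{-}1)\le(3i{-}2)cm$, which cancels the $(3i{-}2)cm$ on the right. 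What remains is to verify that $(3i{-}2)c(j{-}1)\Gm\ge(3i{-}2)cj+3m-5$, which follows from $\Gm\ge 2$ and $j\ge 2$ together with the fact that $c(j{-}1)\Gm$ grows at least linearly in $m$ (indeed $c(j{-}1)\Gm\ge cj\cdot\Gm/2\ge$ a constant times $m/w^{c-1}$, and $3m$ is absorbed once $i$ is even modestly large).

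The main obstacle is precisely this last arithmetic check: the coefficients $2i{+}2$, $3i{-}2$, $3i{-}5$ have been calibrated exactly so that the $\Gn$ terms cancel cleanly and the residual $3m-5$ slack is covered by the gap between $(3i{-}2)cj(m-1)$ and $(3i{-}2)c(j{-}1)(m-\Gm)+(3i{-}5)cw(\Gm{-}1)$. Any looser choice of the constant on $n$ would either leave a nonzero positive $\Gn$ coefficient (which is fatal since $\Gn$ can be as large as $n$) or force a larger multiplier on the $cj(m{-}1)$ term. The one genuine delicacy is handling the boundary case $c=s-2=1$, where $w\Gm$ can only be bounded by $m+w$ rather than the stronger $m/w^{c-1}$ available at $c\ge 2$; here one uses $w\le a_{i,j-1}\le m$ together with $\Gm\ge 2$ to absorb the additive $w$.
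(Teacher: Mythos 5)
Your overall strategy mirrors the paper's: induct on $(i,j)$ with Lemma~\ref{lem:i=1} supplying the $i=1$ and $j=1$ base cases, apply Recurrence~\ref{thm:recurrence-even} with a uniform block partition of width $w^c$, and observe that the $\Gn$-coefficient collapses to $-(2i+2)+4+(2i-2)=0$. Up to the display
\[
\DS{3}(n,m)\;\le\;(2i{+}2)n\;+\;(3i{-}2)c(j{-}1)(m-\Gm)\;+\;(3i{-}5)cw(\Gm{-}1)\;+\;3m-5
\]
your derivation is correct and essentially identical to the paper's.

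The final arithmetic step, however, contains a genuine error. You bound $(3i-5)cw(\Gm-1)\le (3i-2)cm$, relaxing the coefficient from $(3i-5)$ up to $(3i-2)$, and are then left needing $(3i-2)c(j-1)\Gm\ge(3i-2)cj+3m-5$, which you assert follows from $\Gm\ge 2$, $j\ge 2$, and ``$i$ modestly large.'' That inequality is false in general: at $\Gm=j=2$ both sides have the same leading term $2(3i-2)c$, so it reduces to $0\ge 3m-5$, which fails for every $m\ge 2$. Moreover an inequality that holds only for ``$i$ modestly large'' cannot close an induction over all $i\ge 1$.

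The fix is to keep the coefficient $(3i-5)$ rather than giving it up. Since $\Gm-1\le(m-1)/w^c$ and $w^{c-1}\ge 1$, one has $cw(\Gm-1)\le c(m-1)$; combining this with $m-\Gm\le m-1$ and $3m-5\le 3(m-1)$ bounds the tail by
\[
\bigl[(3i-2)c(j-1)+(3i-5)c+3\bigr](m-1)\;=\;\bigl[(3i-2)cj-3c+3\bigr](m-1)\;\le\;(3i-2)cj(m-1),
\]
the last step using $c\ge 1$. The $3c-3\ge 0$ margin coming from the gap between $(3i-2)c$ and $(3i-5)c$ is precisely what absorbs the additive $+3$ contributed by the first/last and $\DS{1}$ block-count terms; your relaxation $(3i-5)\to(3i-2)$ discards exactly that slack, and once it is gone the argument cannot be completed.
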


\begin{proof}
The base cases $i=1$ and $j=1$ have been handled already.
Let $i,j>1$ and $w=a_{i,j-1}$.
We invoke Recurrence~\ref{thm:recurrence-even}
with the uniform block partition $\{m_q\}_{1\le q\le \Gm}$, where $\Gm=\ceil{m/w^c}$.  (See Section~\ref{sect:blockpartition}.)
\begin{align*}
\DS{3}(n,m) &\leq \zero{\sum_{q=1}^{\Gm} \DS{3}(\Ln_q,m_q) \;+\; 2\cdot \DS{2}(\Gn,m) \;+\; \DS{1}(\DS{3}(\Gn,\Gm) - 2\Gn, m)}\\
		&\leq \zero{(2i+2)(n-\Gn)	\,+\, (3i-2)c(j-1)(m-\Gm)}	& \mbox{\{ind.~hyp.: local symbols\}}\\
			&\hcm[.5] \;+\; 4\Gn 			\,+\, 2(m-1) \hcm[5] & \mbox{\{global first and last occurrences\}}\\
			&\hcm[.5] \;+\; \zero{(2i-2)\Gn \;+\; (3(i-1)-2)cw(\Gm-1) \;+\; (m-1)} & \mbox{\{global middle occurrences\}}\\
		&\leq \zero{(2i+2)n \;+\; (3i-2)cj(m-1)}\\
		&\hcm[.5] \;+\; \zero{\SqBrack{- (2i+2) + 4 + (2i - 2)}\Gn \;+\; \SqBrack{- c(3i-2) + (3i-5) + 3}(m-1)}\\
		&\leq \zero{(2i+2)n \;+\; (3i-2)cj(m-1)}
\end{align*}
The last inequality holds since $c\ge s-2=1$.  
\end{proof}

At $s=2$ and $s=3$ the terms involving $n$ and $m$ have different leading constants, namely 
$2$ and $1$ when $s=2$
and $2i+2$ and $3i-2$ when $s=3$.
To provide some uniformity in the analyses below 
we will use the inequalities 
$\DS{2}(n,m) \le \M{2}{i}(n+m-1)$
and
$\DS{3}(n,m) \le \M{3}{i}(n+(cj)(m-1))$ 
when
invoking the inductive hypothesis at $i\ge 2$ and $s\in\{2,3\}$, where 
$\M{2}{i}=2$ and $\M{3}{i}=3i$ by definition.
Note that when $i\ge 2$, $\M{3}{i} = 3i \ge \max\{2i+2,3i-2\}$.

\subsection{Lower Bounds on $\M{s}{i}$ and $\N{s}{i}$}\label{sect:agreeable:lb}

Call an ensemble of values $\{\M{s'}{i'}, \N{s'}{i'}\}_{(s',i') \le (s,i)}$
{\em happy} if $\DS{s'}(n,m) \le \M{s'}{i'}(n+(cj)^{s'-2}(m-1))$ 
and $\Feather{s'}(n,m) \le \N{s'}{i'}(n+ (cj)^{s'-2}(m-1))$,
where $c$ and $j$ are defined as usual.
(In the subscript `$\le$' represents lexicographic ordering on tuples.)
In Lemma~\ref{lem:agreeable:lb} 
we determine lower bounds on $\M{s}{i}$ and $\N{s}{i}$ in a happy ensemble.
In Section~\ref{sect:agreeable:happy}
we prove that the {\em specific} ensemble proposed in Lemma~\ref{lem:agreeable} is, in fact, happy.

\begin{lemma}\label{lem:agreeable:lb}
Let $s\ge 4$ and $i\ge 2$.  Define $n,m,c,$ and $j$ as usual.
If $\{\M{s'}{i'},\N{s'}{i'}\}_{(s',i') \le (s,i-1)}$ is happy then $\{\M{s'}{i'},\N{s'}{i'}\}_{(s',i') \le (s,i)}$ is as well,
so long as
\begin{align*}
\M{s}{i} &\;\ge\; 2\M{s-1}{i} \;+\; \M{s-2}{i}\M{s}{i-1}						& \mbox{even $s$}\\
\M{s}{i} &\;\ge\; 2\M{s-1}{i} \;+\; \M{s-2}{i}\N{s}{i-1} \;+\; \M{s-3}{i}\M{s}{i-1}		& \mbox{odd $s$}\\
\N{s}{i} &\;\ge\; \N{s}{i-1} \;+\; 2\M{s-1}{i}								& \mbox{odd $s$}
\end{align*}
\end{lemma}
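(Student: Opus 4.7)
The plan is to induct on the lexicographic triple $(s, i, j)$, with the base cases $i=1$ and $j=1$ already established by Lemma~\ref{lem:i=1} together with the comparisons at the end of Section~\ref{sect:agreeable:basecases}. For the inductive step with $s\ge 4$, $i\ge 2$, and $j\ge 2$, I would invoke Recurrence~\ref{thm:recurrence-even} when $s$ is even, Recurrence~\ref{thm:recurrence-odd} when $s$ is odd, and Recurrence~\ref{lem:feather-rec} for $\N{s}{i}$, using in all three cases the uniform block partition of width $w^c$ prescribed in Section~\ref{sect:blockpartition}, where $w = a_{i,j-1}$ and $\Gm = \lceil m / w^c \rceil$.

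With this choice of partition, each term on the right-hand side of the recurrence can be bounded by invoking the inductive hypothesis at a strictly smaller lexicographic triple. The local subsequences $\LS_q$ are bounded at $(s, i, j-1)$, the contracted global sequence $\GS'$ is bounded at $(s, i-1, w)$, the first/last sequences $\GfS, \GlS$ at $(s-1, i, j)$, and the middle subsequences at $(s-2, i, j)$ (for even $s$) or at $(s-2, i, j)$ and $(s-3, i, j)$ (for odd $s$). Crucially, the defining identity $a_{i,j} = w\cdot a_{i-1,w}$ yields $(cw)^{s-2}(\Gm-1) \le c^{s-2}(m-1)$, so the block-dependent contribution arising from $\GS'$ enters only as a linear function of $m$ rather than a superlinear one. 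For the feather count, essentially the same expansion applies, except the contracted sequence $\GS'$ contributes $\N{s}{i-1}$ (bounding its feathers) while $\GfS$ and $\GlS$ contribute via $\M{s-1}{i}$.

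After substitution, the estimate separates into two independent demands: one on the coefficient of $n$ (or equivalently $\Gn$) and one on the coefficient of $(m-1)$. For even $s$, the local contributions scale with $n - \Gn$ while the first/last and middle contributions scale with $\Gn$, yielding a total coefficient of $n$ bounded by $\M{s}{i}(n-\Gn) + \bigl(2\M{s-1}{i} + \M{s-2}{i}\M{s}{i-1}\bigr)\Gn$, which collapses to $\M{s}{i} n$ exactly when the hypothesized inequality $\M{s}{i} \ge 2\M{s-1}{i} + \M{s-2}{i}\M{s}{i-1}$ holds. The odd-$s$ case is analogous, except middle occurrences split into feathers and non-feathers (bounded by $\M{s-2}{i}\N{s}{i-1}$ and $\M{s-3}{i}\M{s}{i-1}$ respectively), yielding the three-term inequality. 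The bound on $\N{s}{i}$ similarly requires $\N{s}{i-1} + 2\M{s-1}{i}$.

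The main obstacle is tracking the block-count term $(cj)^{s-2}(m-1)$. One must check that the sum of $\M{s}{i}(c(j-1))^{s-2}(m-\Gm)$ from local intervals, the linear-in-$m$ residues from $\GS'$ (proportional to $c^{s-2}(m-1)$), and the subdominant first/last/middle contributions of order $(cj)^{s-3}(m-1)$ all fit within $\M{s}{i}(cj)^{s-2}(m-1)$. This ultimately reduces to the elementary inequality $(c(j-1))^{s-2} + O((cj)^{s-3}) \le (cj)^{s-2}$, which for $c \ge s-2$ and $j \ge 2$ carries enough slack to absorb the lower-order residuals from the global terms. The delicate bookkeeping is to confirm that a single inequality on $\M{s}{i}$ (respectively $\N{s}{i}$) simultaneously dominates both the $n$-coefficient and $m$-coefficient, which is the substantive content hidden behind the clean statement of the lemma.
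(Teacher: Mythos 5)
Your proposal follows the paper's proof essentially verbatim: apply Recurrences~\ref{thm:recurrence-even}, \ref{thm:recurrence-odd}, and \ref{lem:feather-rec} with the uniform block partition of width $w^c$, invoke happiness at the prescribed smaller triples, and separate the resulting bound into a coefficient of $\Gn$ and a coefficient of $(m-1)$. The one detail you leave as a hope---that a single inequality on $\M{s}{i}$ handles both coefficients---is precisely what the paper verifies by deriving an explicit $(m-1)$-coefficient constraint (Inequalities~(\ref{eqn:K-rec-even}), (\ref{eqn:K-rec-odd})) and observing it is strictly weaker than the $\Gn$-coefficient constraint (Inequalities~(\ref{eqn:M-rec-even}), (\ref{eqn:M-rec-odd})) because dividing through by $c^{s-2}j^{s-3}$ with $c\ge s-2\ge 2$ and $j\ge 2$ attenuates every term.
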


\begin{proof}
When $s\ge 4$ is even, Recurrence~\ref{thm:recurrence-even} implies that
\begin{align}
\DS{s}(n,m) &\leq \sum_{q=1}^{\Gm} \DS{s}(\Ln_q,m_q) \;+\; 2\cdot \DS{s-1}(\Gn,m) \;+\; \DS{s-2}(\DS{s}(\Gn,\Gm), m)\nonumber\\
		&\leq \M{s}{i}\paren{(n-\Gn) 			\;+\; (c(j-1))^{s-2}(m-\Gm)} \hcm[2] \mbox{\{happiness of the ensemble\}}\nonumber\\
		&\hcm[.5] +\; 2\M{s-1}{i}\paren{\Gn 			\;+\; (cj)^{s-3}(m-1)}\nonumber
			\\&\hcm[.5] +\; \M{s-2}{i}\bigg(\M{s}{i-1}\Big(\Gn	 \;+\; (cw)^{s-2}(\Gm-1)\Big) \;+\; (cj)^{s-4}(m-1)\bigg)\nonumber\\
		&\leq \M{s}{i}\paren{n \;+\; (cj)^{s-2}(m-1)}\nonumber\\
			&\hcm[.5] +\; \SqBrack{-\M{s}{i} + 2\M{s-1}{i} + \M{s-2}{i}\M{s}{i-1}}\cdot \Gn\label{eqn:M-even}\\
			 &\hcm[.5] +\; \SqBrack{- \M{s}{i}c^{s-2}j^{s-3}  + 2\M{s-1}{i}(cj)^{s-3} + \M{s-2}{i}\M{s}{i-1}c^{s-2} + \M{s-2}{i}(cj)^{s-4}}\cdot (m-1)\label{eqn:K-even}\\
		&\leq \M{s}{i}\paren{n \;+\; (cj)^{s-2}(m-1)}\label{eqn:s-even}
\end{align}
Inequality~(\ref{eqn:s-even}) will be satisfied whenever (\ref{eqn:M-even}) and (\ref{eqn:K-even}) are non-positive, that is, when
\begin{align}
\M{s}{i} &\;\ge\; 2\M{s-1}{i} \;+\; \M{s-2}{i}\M{s}{i-1}								\label{eqn:M-rec-even}\\
\M{s}{i} &\;\ge\; \f{2\M{s-1}{i}}{s-2} \;+\; \f{\M{s-2}{i}\M{s}{i-1}}{2^{s-3}} \;+\; \f{\M{s-2}{i}}{2(s-2)^2}	\label{eqn:K-rec-even}
\end{align}
Inequality~(\ref{eqn:K-rec-even}) was obtained by dividing (\ref{eqn:K-even}) through by $c^{s-2}j^{s-3}$ and noting that $c\ge s-2\ge 2$ and $j\ge 2$.
Note that Inequality~(\ref{eqn:K-rec-even}) is weaker than Inequality~(\ref{eqn:M-rec-even})
since $\M{s}{i} > \M{s-1}{i} > \M{s-2}{i}$, so it suffices to consider only the former.

When $s\ge 5$ is odd, Recurrence~\ref{thm:recurrence-odd} implies that
\begin{align}
\DS{s}(n,m) &\leq \sum_{q=1}^{\Gm} \DS{s}(\Ln_q,m_q) + 2\cdot \DS{s-1}(\Gn,m) + \DS{s-2}(\Feather{s}(\Gn,\Gm),m) + \DS{s-3}(\DS{s}(\Gn,\Gm),m)\nonumber\\
		&\leq \M{s}{i}\paren{(n-\Gn) 		\;+\; (c(j-1))^{s-2}(m-\Gm)}	 \hcm[2] \mbox{\{happiness of the ensemble\}}\nonumber\\					
			&\hcm[.5] +\; 2\M{s-1}{i}\paren{\Gn 		\;+\; (cj)^{s-3}(m-1)}\nonumber\\							
			&\hcm[.5] +\; \M{s-2}{i}\bigg(\N{s}{i-1}\paren{\Gn \;+\; (cw)^{s-2}(\Gm-1)}       \;+\; (cj)^{s-4}(m-1)\bigg)\nonumber\\	
			&\hcm[.5] +\; \M{s-3}{i}\bigg(\M{s}{i-1}\paren{\Gn \;+\; (cw)^{s-2}(\Gm-1)}   \;+\; (cj)^{s-5}(m-1)\bigg)\nonumber\\
		&\leq \M{s}{i}\paren{n \;+\; (cj)^{s-2}(m-1)}\nonumber\\
			&\hcm[.5] +\; \SqBrack{-\M{s}{i} \;+\; 2\M{s-1}{i} \;+\; \M{s-2}{i}\N{s}{i-1} \;+\; \M{s-3}{i}\M{s}{i-1}}\cdot \Gn\label{eqn:M-odd}\\
			&\hcm[.5] +\; \SqBrack{-\M{s}{i}c^{s-2}j^{s-3} \;+\; 2\M{s-1}{i}(cj)^{s-3} \;+\; \M{s-2}{i}\N{s}{i-1}c^{s-2}\nonumber\\
			&\hcm[1.3] +\; \M{s-2}{i}(cj)^{s-4} \;+\; \M{s-3}{i}\M{s}{i-1}c^{s-2} \;+\; \M{s-3}{i}(cj)^{s-5}}\cdot (m-1)	\label{eqn:K-odd}\\
		&\leq \M{s}{i}\paren{n \;+\; (cj)^{s-2}(m-1)}\label{eqn:s-odd}
\end{align}
Inequality~(\ref{eqn:s-odd}) will be satisfied whenever (\ref{eqn:M-odd}) and (\ref{eqn:K-odd}) are non-positive, that is, when
\begin{align}
\M{s}{i} &\;\ge\; 2\M{s-1}{i} \;+\; \M{s-2}{i}\N{s}{i-1} \;+\; \M{s-3}{i}\M{s}{i-1}		\label{eqn:M-rec-odd}\\
\M{s}{i} &\;\ge\; \f{2\M{s-1}{i}}{s-2} \;+\; \f{\M{s-2}{i}\N{s}{i-1}}{2^{s-3}} \;+\; \f{\M{s-2}{i}}{2(s-2)^2} \;+\; \f{\M{s-3}{i}\M{s}{i-1}}{2^{s-3}} 
\;+\; \f{\M{s-3}{i}}{4(s-2)^3}		\label{eqn:K-rec-odd}
\end{align}
The denominators of Inequality~(\ref{eqn:K-rec-odd}) follow by dividing (\ref{eqn:K-odd}) 
through by $c^{s-2}j^{s-3}$ and noting that $c\ge s-2\ge 3$ and $j\ge 2$.  
Inequality~(\ref{eqn:K-rec-odd}) is weaker than Inequality~(\ref{eqn:M-rec-odd}) 
since $\M{s-1}{i} > \M{s-2}{i} > \M{s-3}{i}$ so it suffices to consider only Inequality~(\ref{eqn:M-rec-odd}).

Using similar calculations, one derives from Recurrence~\ref{lem:feather-rec} 
the claimed lower bound on $\N{s}{i}$.
\begin{align}
\N{s}{i}	&\;\ge\; \N{s}{i-1} \;+\; 2\M{s-1}{i}\label{eqn:N-rec}
\end{align}
\end{proof}

\subsection{The Happiness of the Ensemble}\label{sect:agreeable:happy}

From this point on we argue the happiness of the {\em specific} ensemble $\{\M{s}{i},\N{s}{i}\}$ stated in Lemma~\ref{lem:agreeable}.
We can say an individual value $\M{s}{i}$ or $\N{s}{i}$ is happy if it satisfies the appropriate lower bound inequality, either~(\ref{eqn:K-rec-even}), (\ref{eqn:K-rec-odd}),
or (\ref{eqn:N-rec}).  

\begin{lemma}\label{lem:agreeable:happy}
The ensemble~$\{\M{s}{i}, \N{s}{i}\}$ defined in Lemma~\ref{lem:agreeable} is happy.
\end{lemma}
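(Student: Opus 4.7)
The plan is to verify, by induction on $i$, that the specific ensemble $\{\M{s}{i},\N{s}{i}\}$ proposed in Lemma~\ref{lem:agreeable} satisfies the three sufficient inequalities isolated in Lemma~\ref{lem:agreeable:lb}. The base case $i=1$ follows from Lemma~\ref{lem:i=1}: a direct substitution shows that both $\M{s}{1}$ and $\N{s}{1}$ dominate the constant $2^{s-1}$ appearing there, so the ensemble is happy at $i=1$.

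For the inductive step, the central arithmetic tool is Pascal's identity in the form ${i+t+3 \choose t} = {i+t+2 \choose t} + {i+t+2 \choose t-1}$, which converts the leading exponential $2^{i+t+3 \choose t}$ of $\M{s}{i}$ into the product $2^{i+t+2 \choose t} \cdot 2^{i+t+2 \choose t-1}$ needed to match the two descended terms on the right-hand side of each recurrence. The critical bookkeeping is how $t = t_s = \lfloor(s-2)/2\rfloor$ shifts with $s$: we verify that $t_{s-2} = t-1$ for all $s\ge 4$, while $t_{s-1} = t-1$ when $s$ is even, and $t_{s-1} = t$, $t_{s-3} = t-1$ when $s$ is odd. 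With these shifts in hand, the pair of exponents coming from $\M{s-2}{i}$ and $\M{s}{i-1}$ (respectively $\M{s-3}{i}$ and $\M{s}{i-1}$ in the odd case) are exactly ${i+t+2 \choose t-1}$ and ${i+t+2 \choose t}$, so after one application of Pascal's identity the leading exponentials on the two sides of each recurrence match identically.

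The problem then reduces to checking that the polynomial-times-exponential correction on the left absorbs both the $2\M{s-1}{i}$ contribution and the cross-terms produced by expanding the product on the right. For even $s$ the verification is clean: after cancellation one needs $3(2(i+t))^{t-1} \cdot 2^{i+t+2 \choose t}$ to exceed $3(2(i+t+1))^t + 9(2(i+t))^{2t-1}$, which holds with enormous slack since the left side carries an exponential factor and the right side is purely polynomial. For odd $s$ all three right-hand terms are exponential, but the polynomial factor $\tfrac{3}{2}(2(i+t+1))^{t+1}$ in $\M{s}{i}$---carrying one more power of $(2(i+t+1))$ than the corresponding factor in $\M{s-2}{i}$---produces exactly the needed slack via the binomial expansion of $(2(i+t+1))^{t+1} - (2(i+t))^{t+1}$. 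The feather inequality $\N{s}{i} \ge \N{s}{i-1} + 2\M{s-1}{i}$ is the easiest: $\N{s}{i} - \N{s}{i-1} = 4\bigl(2^{i+t+3 \choose t} - 2^{i+t+2 \choose t}\bigr)$, which by Pascal's identity factors into an exponential gap dwarfing the purely polynomial $2\M{s-1}{i}$.

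The main obstacle is arithmetic bookkeeping rather than any conceptual difficulty: we must carefully track the $t$-shift as $s$ decreases by $1$, $2$, or $3$, and we must verify the inequalities at the extreme small cases ($s\in\{4,5\}$, $t=1$, small $i$) where the polynomial slack is tightest. A sample calculation at $s=5,t=1$ illustrates that the needed margin is $96i+256$, comfortably positive but only linear in $i$; the formulas for $\M{s}{i}$ and $\N{s}{i}$ were evidently reverse-engineered from the recurrences, and the specific constants (the $3$ in the even correction and the $\tfrac{3}{2}$ in the odd one) are the smallest that make the verification close.
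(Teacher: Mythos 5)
Your proposal follows the same route as the paper: verify the three sufficient inequalities from Lemma~\ref{lem:agreeable:lb} for the explicit ensemble, with base case from Lemma~\ref{lem:i=1}, using Pascal's identity ${i+t+3\choose t}={i+t+2\choose t}+{i+t+2\choose t-1}$ to match the leading exponential against the product $\M{s-2}{i}\M{s}{i-1}$ (or $\M{s-3}{i}\M{s}{i-1}$), with careful tracking of how $t$ shifts with $s$ and separate treatment of $s\in\{4,5\}$. Your $t$-shift bookkeeping is correct, and your characterization of the even case (exponential slack against a polynomial remainder) and the odd case (one extra power of $2(i+t+1)$ in the leading factor) matches the paper's calculation.

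However, your justification of the feather inequality $\N{s}{i}\ge\N{s}{i-1}+2\M{s-1}{i}$ contains a genuine error. You call $2\M{s-1}{i}$ ``purely polynomial,'' but when $s$ is odd, $s-1$ is even with the \emph{same} $t$-parameter ($t_{s-1}=t$), so $\M{s-1}{i}=2^{{i+t+3}\choose t}-3(2(i+t+1))^t$, which is essentially $2^{{i+t+3}\choose t}$. Since $\N{s}{i}=4\cdot2^{{i+t+3}\choose t}$, the term $2\M{s-1}{i}$ is about half of $\N{s}{i}$ itself. The gap $\N{s}{i}-\N{s}{i-1}$ does \emph{not} dwarf $2\M{s-1}{i}$ --- they are the same order of magnitude --- and the inequality holds only by a factor-of-two margin: one must bound $2\M{s-1}{i}<2\cdot2^{{i+t+3}\choose t}$ and then check $4\cdot2^{{i+t+2}\choose t}\le 2\cdot2^{{i+t+3}\choose t}$, which reduces via Pascal's identity to ${i+t+2\choose t-1}\ge 1$, tight at $t=1$. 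So this is actually the \emph{most} delicate of the three verifications, not the easiest, and your stated reasoning (``enormous slack,'' ``purely polynomial'') would lead you astray if carried out. A smaller issue: your claimed margin ``$96i+256$'' at $s=5,t=1$ does not match the actual coefficient-margin $6i+16$ multiplying $2^{i+4}$, though the qualitative observation that the slack is linear in $i$ is correct.
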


\begin{proof}
All $\N{s}{i}$ are happy since
\begin{align*}
\N{s}{i-1} + 2\M{s-1}{i} 
&< 4\cdot 2^{{i+t+2}\choose t} + 2\cdot 2^{{i+t+3}\choose t} & \mbox{\{by definition\}}\\
&\le 4 \cdot 2^{{i+t+3}\choose t} \;=\; \N{s}{i}	& \mbox{\{$t\ge 1,\; 4\cdot 2^{i+t+2\choose t} \le 2\cdot 2^{i+t+3\choose t}$\}}\\
\end{align*}

When $s=4$ and $t=\floor{\frac{s-2}{2}} = 1$ the expression for $\M{4}{i}$ simplifies to $2^{i+4} - 6(i+2)$.
The happiness of $\M{4}{i}$ follows easily, as seen below.
\begin{align*}
2\M{3}{i} + \M{2}{i}\M{4}{i-1} &= 2(3i) \;+\; 2\cdot \left(2^{i+3} - 6(i+1)\right)	& \mbox{\{by definition\}}\\				
		&= 2^{i+4} \;+\; 6i \,-\, 12(i+1)\\
		&= 2^{i+4} \,-\, 6(i+2) = \M{4}{i} \\
\end{align*}

When $s=5$ and $t=\floor{\frac{s-2}{2}}=1$, the expression for $\M{5}{i}$ simplifies to $\fr{3}{2} (2(i+2))^2 2^{i+4}$,
which lets us quickly certify the happiness of $\M{5}{i}$.
\begin{align*}
   2\M{4}{i} + \M{3}{i}\N{5}{i-1} + \M{2}{i}\M{5}{i-1}
&= 2\paren{2^{i+4} - 6(i+2)}
	\;+\; 3i \cdot 4\cdot 2^{i+3}
	\;+\; 2\cdot \fr{3}{2}(2(i+1))^2 2^{i+3}\\
&\le \Paren{2 \;+\; 6i \;+\; \fr{3}{2}(2(i+1))^2} 2^{i+4}\\
&\le \fr{3}{2} (2(i+2))^2 2^{i+4} \;=\; \M{5}{i}
\end{align*}

We now turn to the happiness of $\M{s}{i}$ for even $s\ge 6$.  Note that when we invoke the definition of $\M{s-1}{i}$ and $\M{s-2}{i}$
their ``$t$'' parameter is $t-1 = \floor{\fr{(s-1)-2}{2}} = \floor{\fr{(s-2)-2}{2}}$.
\begin{align*}
\lefteqn{2\M{s-1}{i} \,+\, \M{s-2}{i}\M{s}{i-1}}\\
		&\le 2\cdot \left[\fr{3}{2}(2(i+t))^t2^{{i+t+2}\choose t-1}\right] \,+\, \left[2^{{i+t+2}\choose t-1} - 3(2(i+t))^{t-1}\right]\cdot \left[2^{{i+t+2}\choose t} - 3(2(i+t))^t\right]\\
		&= \left[3(2(i+t))^t2^{{i+t+2}\choose t-1}\right]
		\,+\, 2^{{i+t+2}\choose t-1}  \left[2^{{i+t+2}\choose t} - 3(2(i+t))^t\right]
		\,-\, 3(2(i+t))^{t-1} \left[2^{{i+t+2}\choose t} - 3(2(i+t))^t\right]\\
		&= 2^{{i+t+2}\choose t-1}  2^{{i+t+2}\choose t} \,-\, 3(2(i+t))^{t-1}  \left[2^{{i+t+2}\choose t} - 3(2(i+t))^t\right]\\
		&\le 2^{{i+t+3}\choose t} - 3(2(i+t+1))^t		 \, =\, \M{s}{i}		
\end{align*}
In other words, $\M{s}{i}$ satisfies Inequality~(\ref{eqn:M-rec-even}) when $s\ge 6$ is even.
It also satisfies Inequality~(\ref{eqn:M-rec-odd}) at odd $s\ge 7$, which can be seen as follows.
Note that the ``$t$'' parameter for $s-1$ it $t$, whereas it is $t-1$ for $s-2$ and $s-3$.
\begin{align*}
\lefteqn{2\M{s-1}{i} \:+\: \M{s-2}{i}\N{s}{i-1} \:+\: \M{s-3}{i}\M{s}{i-1} }\\
		&\le 2\cdot 2^{{i+t+3}\choose t} \:+\: \fr{3}{2}(2(i+t))^t2^{{i+t+2}\choose t-1} \cdot 4\cdot 2^{{i+t+2}\choose t}
			\:+\: 2^{{i+t+2}\choose t-1} \cdot \fr{3}{2}(2(i+t))^{t+1}2^{{i+t+2}\choose t}\\
		&\le \SqBrack{2 \:+\: \fr{3}{2}4\cdot (2(i+t))^t \:+\: \fr{3}{2}(2(i+t))^{t+1}} \cdot 2^{{i+t+3}\choose t}\\
		&\le \SqBrack{2\:+\: \fr{3}{2}(2(i+t))^t\cdot 2(i+t+2)} \cdot 2^{{i+t+3}\choose t}\\
		&\le \fr{3}{2}(2(i+t+1))^{t+1}\cdot 2^{{i+t+3}\choose t} = \M{s}{i}
\end{align*}
We have shown that $\{\M{s}{i}\}$ and $\{\N{s}{i}\}$ are happy over the full range of parameters.
This concludes the proof of Lemma~\ref{lem:agreeable}.
\end{proof}

\section{Proof of Lemma~\ref{lem:doublefeather-bounds}}\label{sect:appendix:order5}

The proof of Lemma~\ref{lem:doublefeather-bounds} closely mimics that of
Lemma~\ref{lem:agreeable}.\\

\noindent{\bf Restatement of Lemma~\ref{lem:doublefeather-bounds}\ }
{\em 
Let $n$ and $m$ be the alphabet size and block count.
After a parameter $i\ge 1$ is chosen let $j\ge 1$ be minimum such that $m \le a_{i,j}^c$,
where $c=3$ is fixed.
We have the following upper
bounds on $\FFeather(n,m), \DblFeather(n,m),$ and $\DS{5}(n,m)$.

\begin{align*}
\FFeather(n,m) &\le \nu_i'\paren{n + (cj)^2(m-1)}	& \mbox{where $\nu_i' = 3{i+1\choose 2} + 3$}\\
\DblFeather(n,m) &\le \nu_i''\paren{n + (cj)^3(m-1)}  & \mbox{where $\nu_i'' = 6{i+2\choose 3} + 8i$}\\
\DS{5}(n,m) &\le \mu_{5,i}\paren{n + (cj)^3(m-1)}  & \mbox{where $\mu_{5,i} = i2^{i+7}$}\\
\end{align*}
}

\begin{proof}
We use the following upper bounds on order-$3$ and order-$4$ DS sequences
from Lemma~\ref{lem:agreeable}.
\begin{align*}
\DS{3}(n,m) &\le \zero{\mu_{3,i}[n + (cj)(m-1)]}\hcm[2.5]		& \mbox{where $\mu_{3,i} = 3i+1 \ge \max\{2i+2,3i-2\}$}\\
\DS{4}(n,m) &\le \zero{\mu_{4,i}[n + (cj)^2(m-1)]}	& \mbox{where $\mu_{4,i} = 2^{i+4} - 6(i+2)$}
\intertext{and, when $i=1$,}
\DS{4}(n,m) &\le \zero{2^3 n + (cj)^2(m-1)}		\\
\DS{5}(n,m) &\le \zero{2^4 n + (cj)^3(m-1)}			& \mbox{See Lemma~\ref{lem:i=1}.}
\end{align*}

\paragraph{Base Cases.} In the worst case every occurrence in an order-4 sequence
is a dove (or hawk) feather, except for the first and last occurrence of each symbol,
which are wingtips.  This implies that 
\begin{align*}
\FFeather(n,m) &\le \DS{4}(n,m) - 2n\\
			&\le (2^3 - 2)n + (3j)^2(m-1)	& \{\mbox{by Lemma~\ref{lem:i=1}}\}\\
			&\le \nu_1'\paren{n + (3j)^2(m-1)}	& \{\mbox{since $\nu_1' = 6$}\}
\end{align*}
The same argument implies that
$\DblFeather(n,m) \le \DS{5}(n,m) - 2n \le (2^4-2)n+(3j)^3(m-1)$,
which is at most $\nu_1''(n + (3j)^3(m-1))$ since $\nu_1'' = 14$.
This confirms the claimed bounds when $i=1$.  
It also holds when $i>1$ and $j=1$ since $a_{i,1}^3 = a_{1,1}^3$
and all of $\nu_i',\nu_i'',$ and $\mu_{5,i}$ are increasing in $i$.

\paragraph{Inductive Cases.}
We can assume that $i,j>1$.
As in the proof of Lemma~\ref{lem:agreeable} we always
apply Recurrence~\ref{rec:double-feather} with 
a uniform block partition $\{m_q\}_{1\le q\le \Gm}$ 
with width $w^c$, where $w=a_{i,j-1}$, $\Gm=\ceil{m/w^c}$,
and $c=3$ is fixed.  

According to Recurrence~\ref{rec:double-feather} and the inductive hypothesis
we have:
\begin{align}
\FFeather(n,m) &\le \sum_{q=1}^{\Gm} \FFeather(\Ln_q,m_q) \, + \, \FFeather(\Gn,\Gm)
						\, + \, \DS{3}(\Gn,m) - \Gn\nonumber\\
	&\le \nu_i'\SqBrack{(n-\Gn) + (c(j-1))^2(m-\Gm)}\nonumber\\
	&\hcm +\; \nu_{i-1}'\SqBrack{\Gn + (cw)^2(\Gm-1)}\nonumber\\
	&\hcm +\; \mu_{3,i}\SqBrack{\Gn + (cj)(m-1)} \;-\; \Gn & \mbox{\{inductive hypothesis\}}\nonumber\\
	&\le \nu_i'\sqbrack{n + (cj)^2(m-1)}\nonumber\\
	&\hcm +\; \SqBrack{-\nu_i' + \nu_{i-1}' + \mu_{3,i} - 1}\cdot \Gn \nonumber\\
	& \hcm +\; \SqBrack{- (c^2j)\nu_i'  + (c^2/w)\nu_{i-1}' + (cj)\mu_{3,i}}\cdot (m-1)\label{ineq:nu_i'-1}\\
	&\le \nu_i'\sqbrack{n + (cj)^2(m-1)}\label{ineq:nu_i'-2}\\
\intertext{Inequality (\ref{ineq:nu_i'-1}) follows from the fact
that $(c(j-1))^2 \le (cj)^2 - cj$ and that $(cw)^2(\Gm-1) = (cw)^2(\ceil{\f{m}{w^c}}-1) \le (c^2/w) (m-1)$.
Inequality (\ref{ineq:nu_i'-2}) will follow so long as $\nu_i'$ satisfies the following.
}
\nu_i'  &\;\ge\; \nu_{i-1}' + \mu_{3,i} - 1 \;=\; \nu_{i-1}' + 3i\label{ineq:nu'}
\end{align}
One may confirm that $\nu_i' = 3{i+1 \choose 2} + 3$ satisfies (\ref{ineq:nu'}).
In a similar fashion we can obtain a lower bound on $\nu_i''$ as follows.

\begin{align}
\DblFeather(n,m) &\le \sum_{q=1}^{\Gm} \DblFeather(\Ln_q,m_q) \, + \, \DblFeather(\Gn,\Gm)
						\, + \, 2(\FFeather(\Gn,m) + \Gn)\nonumber\\
			&\le \nu_i''\SqBrack{(n-\Gn) + (c(j-1))^3(m-\Gm)}\nonumber\\
			&\hcm +\; \nu_{i-1}''\SqBrack{\Gn + (cw)^3(\Gm-1)}\nonumber\\
			&\hcm +\; 2\nu_i'\SqBrack{\Gn + (cj)^2(m-1)}
				+ 2\Gn	& \mbox{\{inductive hypothesis\}}\nonumber\\
			&\le \nu_i''\sqbrack{n + (cj)^3(m-1)}\nonumber\\
			&\hcm +\; \SqBrack{-\nu_i'' + \nu_{i-1}'' + 2\nu_i' + 2}\cdot \Gn\nonumber\\
			&\hcm +\; \SqBrack{-(c^3j^2) \nu_i'' + c^3\nu_{i-1}'' + (cj)^2\cdot 2\nu_i'}\cdot (m-1)\label{ineq:nu_i''-1}\\
			&\le \nu_i''\sqbrack{n + (cj)^3(m-1)} \label{ineq:nu_i''-2}
\intertext{Inequality (\ref{ineq:nu_i''-1}) follows since $(c(j-1))^3 \le (cj)^3 - (cj)^2$
and $(cw)^3(\Gm-1) \le c^3(m-1)$.
Inequality (\ref{ineq:nu_i''-2}) will follow from (\ref{ineq:nu_i''-1}) if $\nu_i''$ satisfies}
\nu_i'' &\;\ge\; \nu_{i-1}'' + 2\nu_i' + 2 \;=\; \nu_{i-1}'' + 6{i+1\choose 2} + 8\label{ineq:nu_i''}
\end{align}
Again, one may confirm that $\nu_i'' = 6{i+2\choose 3} + 8i$ satisfies (\ref{ineq:nu_i''}).
We are now ready to calculate a lower bound constraint on $\mu_{5,i}$.
By Recurrence~\ref{rec:DS-double-feather} and the inductive hypothesis
we have

\begin{align}
\DS{5}(n,m) &\le \zero{\sum_{q=1}^{\Gm} \DS{5}(\Ln_q,m_q) 
			\,+\, 2\cdot \DS{4}(\Gn,m) \,+\, \DS{3}(\DblFeather(\Gn,\Gm),m) \,+\, 
			\DS{2}(\DS{5}(\Gn,\Gm),2m-1)}\nonumber\\
		&\le \zero{\mu_{5,i}\SqBrack{(n-\Gn) + (c(j-1))^3(m-\Gm)}
		+ 2\mu_{4,i}\SqBrack{\Gn + (cj)^2(m-1)}} \nonumber\\
		&\hcm \zero{+\; \mu_{3,i}\SqBrack{\nu_{i-1}''\SqBrack{\Gn + (cw)^3(\Gm-1)} + (cj)(m-1)}}\nonumber\\
		&\hcm \zero{+\; 2\mu_{5,i-1}\SqBrack{\Gn + (cw)^3(\Gm-1)} + 2(m-1)} & \mbox{\{inductive hypothesis\}}\nonumber\\
		&\le \zero{\mu_{5,i}[n + (cj)^3(m-1)]}\nonumber\\
		&\hcm \zero{+\; \SqBrack{-\mu_{5,i} + 2\mu_{4,i} + \mu_{3,i}\nu_{i-1}'' + 2\mu_{5,i-1}}\cdot\Gn}\nonumber\\
		&\hcm +\; \SqBrack{- (c^3j^2)\mu_{5,i}  + 2(cj)^2 \mu_{4,i} 
					+ c^3 \mu_{3,i}\nu_{i-1}'' \nonumber\\
					&\hcm[1.5] + (cj)\mu_{3,i} + 2(c^3)\mu_{5,i-1} + 2}\cdot(m-1)\nonumber\\
		&\le \mu_{5,i}[n + (cj)^3(m-1)]\label{ineq:mu5}
\intertext{Inequality (\ref{ineq:mu5}) will hold so long as $\mu_{5,i}$ satisfies}
\mu_{5,i} &\ge 2\mu_{5,i-1} + 2\mu_{4,i} + \mu_{3,i}\nu_{i-1}''\nonumber\\
		&= 2\mu_{5,i-1} + \sqbrack{2\paren{2^{i+4} - 6(i+2)} + \paren{3i+1}\paren{6{i+1\choose 3} + 8(i-1)}}\nonumber\\
		&= 2\mu_{5,i-1} + \SqBrack{2^{i+5} + \paren{3i+1}\paren{i+1}\paren{i}\paren{i-1} - 4(i+8)}\label{ineq:mu5-req}
\end{align}

The bracketed term is less than $2^{i+7}$, so it suffices to show that
$\mu_{5,i} \ge 2\mu_{5,i-1} + 2^{i+7}$, when $i>1$.  
One may confirm that $\mu_{5,i} = i\cdot 2^{i+7}$ satisfies (\ref{ineq:mu5-req}).

\end{proof}

\ignore{
\FFeather_{i}(n,m)  &\le 6n \, + \, (3j)^2(m-1)  \\
\DblFeather_{i}(n,m) &\le 14n \, + \, (3j)^3(m-1)   & \mbox{\rb{4}{when $i=1$ or $j=1$}}\\
&\\
\FFeather_{i}(n,m) &\le \sum_{q=1}^{\Gm} \FFeather_{i}(\Ln_q,m_q) \, + \, \FFeather_{i-1}(\Gn,\Gm)
						\, + \, \DS{3}(\Gn,m) - \Gn\\
\DblFeather_{i}(n,m) &\le \sum_{q=1}^{\Gm} \DblFeather_{i}(\Ln_q,m_q) \, + \, \DblFeather_{i-1}(\Gn,\Gm)
						\, + \, 2(\FFeather_{i}(\Gn,m) + \Gn)		& \mbox{\rb{7}{when $i,j>1$}}

\DS{5}(n,m) &\le \sum_{q=1}^{\Gm} \DS{5}(\Ln_q,m_q) 
			\,+\, 2\cdot \DS{4}(\Gn,m) \,+\, \DS{3}(\DblFeather_{i-1}(\Gn,\Gm),m) \,+\, 
			\DS{2}(\DS{5}(\Gn,\Gm),2m-1)\\

}

\end{document}